\newcommand{\citeay}[1]{\citeauthor{#1}, \citeyear{#1}}
\newtheorem{ass}{Assumption}[section]
\newtheorem{lem}{Lemma}[section]
\newtheorem{cor}[lem]{Corollary}
\newtheorem{prop}[lem]{Proposition}
\theoremstyle{definition}
\renewcommand{\baselinestretch}{1.2}
\renewcommand\appendix{\par
  \setcounter{section}{0}
  \setcounter{subsection}{0}
  \setcounter{figure}{0}
  \setcounter{table}{0}
  \renewcommand\thesection{Appendix \Alph{section}}
  \renewcommand\thefigure{\Alph{section}\arabic{figure}}
  \renewcommand\thetable{\Alph{section}\arabic{table}}
  
}
\begin{document}

\renewcommand{\baselinestretch}{1.24}
\title{Robust Cauchy-Based Methods for Predictive Regressions\footnote{\footnotesize{We are grateful to the Editor, Anindya Banerjee, and two anonymous referees for their helpful comments and constructive suggestions. We thank Jean-Marie Dufour, Jenny Hau, Nour Meddahi, Aleksey Min, Ulrich K. M{\" u}ller, Robert Taylor, Neil Shephard, and the participants at the 54th NES conference and iCEBA conferences for helpful discussions and comments. We also thank Yongok Choi for providing his code. Financial support from the Russian Science Foundation (R. Ibragimov, Project No. 20-18-00365) and the Basic Research Program at HSE University (A. Skrobotov) for various and non-overlapping parts of this research is gratefully acknowledged.} % Rustam Ibragimov's research for this paper was supported in part by a grant from the \textbf{Russian Science Foundation} (Project No. \textcolor{red}{\textbf{20-18-00365}}). 
%Address correspondence to Jihyun Kim, School of Economics, Sungkyunkwan University, Seoul 03063, South Korea; e-mail: kim.jihyun@skku.edu}}
}}
\author{
Rustam Ibragimov$^{a, b}$, Jihyun Kim$^d$, Anton Skrobotov$^{e}$ \\
{\small {$^{a}$ Imperial College Business School}}\\ {\small {$^{b}$ New Economic School}} \\ %{\small {$^{c}$ Saint Petersburg State University}} \\ 
{\small {$^{d}$ School of Economics, Sungkyunkwan University}}\\
{\small {$^{e}$ HSE University}}\\
%{\small {$^{f}$ Saint Petersburg University}}\\
%{\small {$^{f}$ Gaidar Institute for Economic Policy}}
}
\date{}
%\date{February 12, 2020}
\maketitle

%\begin{center}
%CONFERENCE DRAFT PAPER -- PLEASE DO NOT CITE OR CIRCULATE WITHOUT THE %AUTHOR'S PERMISSION
%\end{center}

%\centerline{\textbf{Extended Abstract}}
\begin{abstract}
This paper develops robust inference methods for predictive regressions that address key challenges posed by endogenously persistent or heavy-tailed regressors, as well as persistent volatility in errors. Building on the Cauchy estimation framework, we propose two novel tests: one based on $t$-statistic group inference and the other employing a hybrid approach that combines Cauchy and OLS estimation. These methods effectively mitigate size distortions that commonly arise in standard inference procedures under endogeneity, near nonstationarity, heavy tails, and persistent volatility. The proposed tests are simple to implement and applicable to both continuous- and discrete-time models. Extensive simulation experiments demonstrate favorable finite-sample performance across a range of realistic settings. An empirical application examines the predictability of excess stock returns using the dividend–price and earnings–price ratios as predictors. The results suggest that the dividend–price ratio possesses predictive power, whereas the earnings–price ratio does not significantly forecast returns.
\end{abstract}

 \noindent\emph{Keywords}: predictive regressions, robust inference, near nonstationarity, heterogeneity, heavy tails, persistent volatility, endogeneity.

\smallskip
 \noindent \emph{JEL Codes}: C12, C22

%\smallskip
% \noindent \emph{The Total Word Count}: 12417

%\end{abstract}

%------------------------------- 1.INTRODUCTION ----------------------------------

\newpage
\renewcommand{\baselinestretch}{1.4}

\section{Introduction}

Predictive regressions play a central role in empirical finance, providing a framework for assessing whether financial or macroeconomic variables can forecast future returns. Prominent applications include the forecasting of equity and aggregate returns (see, among others, \citeay{CampbellYogo2006}; \citeay{GoyalWelch}; \citeay{campbell}; \citeay{Hirshleifer}; \citeay{KJ}; \citeay{Rapach}; \citeay{MR}; \citeay{Goyal2024}, and references therein) and tests of market efficiency (e.g., \citeauthor{Fama1}, \citeyear{Fama1, Fama, Fama2}; the review in \citeay{Martin}, and references therein). A large body of work has examined the econometric properties of predictive regressions (see \citeay{Phillips2015} for a review), highlighting several statistical challenges that complicate inference on return predictability.

A key difficulty arises from the persistence and endogeneity of commonly used predictors. Variables such as the dividend--price and earnings--price ratios typically exhibit near–unit–root behavior, and their innovations are correlated with future returns over long horizons. This combination induces substantial biases in conventional hypothesis tests (see, e.g., \citeauthor{stambaugh1999predictive}, \citeyear{stambaugh1999predictive}; \citeauthor{KimPark1}, \citeyear{KimPark1}). In addition, stock return volatility is stochastic and highly persistent (\citeauthor{jacquier2004bayesian}, \citeyear{jacquier2004bayesian}; \citeauthor{hansen2014estimating}, \citeyear{hansen2014estimating}), and \citet{cavaliere2004testing} shows that such volatility can lead to severe size distortions in tests that rely on stationarity. Predictive regression data also often exhibit heavy tails, jumps, structural breaks, and regime switching, further undermining standard inference methods.

A large literature has proposed procedures to address persistent endogeneity in predictive regressions. Notably, \citet{CampbellYogo2006}, \citet{ChenDeo2009}, \citet{PhillipsMagdalinos2009}, and \citet{KMS2015}, among others, develop inference methods that perform well in the presence of persistent and endogenous regressors. However, these approaches typically rely on assumptions such as homoskedasticity or weak dependence in volatility and may not remain valid under empirically relevant features such as persistent or nonstationary volatility, structural breaks, or regime switching. For example, \citet{ibragimov2023new} show that standard tests can suffer from severe size distortions in the presence of persistent volatility.

To address these limitations, \citet{CJP2016} propose an inference method (the \emph{Cauchy RT}) based on the Cauchy estimator and a time-change transformation in a continuous-time framework.\footnote{See also \citet{BuKimWang} for an alternative method robust to endogenously persistent or heavy-tailed regressors and persistent volatility in continuous time.} \citet{ibragimov2023new} introduce another approach (the \emph{Cauchy VC}), also based on the Cauchy estimator but with a nonparametric volatility correction, which applies to both continuous- and discrete-time models.

This paper proposes two practical tests that serve as robust alternatives to these methods. The proposed procedures are robust to heterogeneous and persistent volatility, as well as to endogenous, persistent, and heavy-tailed regressors. Both methods employ Cauchy estimation, as in \citet{CJP2016} and \citet{ibragimov2023new}, to address endogeneity, persistence, and heavy tails. The two approaches differ in their treatment of heterogeneous volatility: the first extends the $t$-statistic–based group inference of \citet{IbragimovMuller2010} to asymptotically normal Cauchy estimators, while the second is a hybrid procedure that combines Cauchy and OLS estimation by using the Cauchy estimator for the coefficient and OLS residuals for variance estimation.

The proposed methods are simple to implement and avoid the technical complexities associated with existing approaches, such as the time-change transformation in \citet{CJP2016} or the nonparametric volatility correction in \citet{ibragimov2023new}. Although they rely on an asymptotic exogeneity condition on volatility, simulation results show that they perform well in finite samples, even under mild violations of this condition. Moreover, the proposed methods apply to both continuous- and discrete-time models. Overall, our procedures complement existing Cauchy-based methods and provide a practical alternative for robust inference under a broad range of empirically relevant environments.

In addition to the continuous-time framework, we evaluate the performance of our methods in discrete-time predictive regressions and compare them with the IVX approach of \citet{KMS2015}. The IVX method is specifically designed to address persistence and endogeneity and performs well under homoskedastic or weakly dependent volatility. Our results highlight that the two approaches are complementary. While IVX delivers strong performance under standard conditions, it may suffer from size distortions in the presence of nonstationary or persistent volatility. By contrast, our methods are designed to remain robust to such features, including heavy tails and time-varying volatility. This comparison underscores the advantages of each approach across empirically relevant settings.

The remainder of the paper is organized as follows. Section~2 introduces the predictive regression model and the Cauchy estimator. Section~3 develops the proposed inference procedures and establishes their theoretical properties. Section~4 extends the analysis to models with multiple predictors and intercepts. Sections~5 and~6 present simulation results and an empirical application. Section~7 concludes. All proofs are provided in the Appendix, and additional theoretical analysis, as well as supplementary simulation and empirical results, are reported in the accompanying Online Appendix.

\section{Predictive Regressions and the Cauchy Estimator}

\subsection{Model and Issues}

Throughout the paper, we consider $(\mathcal{F}_t)$-adapted processes defined on a filtered probability space $(\Omega,\mathcal{F}, (\mathcal{F}_t), P)$ equipped with an increasing filtration $(\mathcal{F}_t)$ of sub-$\sigma$-fields of $\mathcal{F}$. Our objective is to test the (un)predictability of the process $(y_t)$ (e.g., the time series of excess stock returns) based on a covariate process $(x_t)$ (e.g., the time series of price–dividend ratios). As usual, we consider the linear predictive regression model
\begin{align}
y_t &= \alpha + \beta x_{t-1} + u_t, \quad t=1, \dots, T. \label{PredRegr1}
\end{align}

Following the standard specification for a volatility model, we assume that
\begin{align*}
u_t = v_t \varepsilon_t,
\end{align*}
where $(v_t)$ is a volatility process and $(\varepsilon_t)$ is a martingale difference sequence (MDS) with respect to $(\mathcal{F}_t)$. We impose the following regularity conditions on $(v_t, \varepsilon_t)$.

\begin{ass}\label{assumption-mds}
(a) $E(\varepsilon_t^2|\mathcal{F}_{t-1})=1$; 
(b) $(v_t)$ is $(\mathcal{F}_{t-1})$-adapted and nonnegative; and 
(c) $T^{-1}\sum_{t=1}^T E\!\left[\varepsilon_t^2 1\{|\varepsilon_t|\geq \delta \sqrt{T}\}\middle|\mathcal{F}_{t-1}\right]\!\to_p 0$ for any $\delta>0$.
\end{ass}

Conditions (a) and (b) are standard and ensure that the conditional variance of $u_t$ is identified: $E(u_t^2|\mathcal{F}_{t-1})=v_t^2$. Condition (c) is a conditional Lindeberg condition, which holds, for example, if $\sup_{1\le t\le T} E(|\varepsilon_t|^{2+\delta}|\mathcal{F}_{t-1})$ is bounded for some $\delta>0$ with probability one. See \citet{ibragimov2023new} and references therein for further discussion and examples of processes satisfying Assumption~\ref{assumption-mds}.%\footnote{In addition to i.i.d.\ $(\varepsilon_t)$ with finite variance and higher moments, MDSs satisfying Assumption~\ref{assumption-mds} include processes of the form $\varepsilon_t=\text{sign}(w_t)\eta_t$, where $\text{sign}(w)=1$ if $w\ge0$ and $\text{sign}(w)=-1$ if $w<0$. Here $(\eta_t)$ is a sequence of i.i.d.\ random variables independent of $(w_t)$, which is $(\mathcal{F}_{t-1})$-\textbf{conditionally symmetric} with $P(w_t=0|\mathcal{F}_{t-1})=0$. Such processes arise in sign-based tests of orthogonality, predictability, and related properties (see \citeauthor{CD}, \citeyear{CD}; \citeauthor{IB}, \citeyear{IB}).}

The hypothesis of unpredictability of $(y_t)$ corresponds to $H_0:\beta=0$ in regression~\eqref{PredRegr1}. It is well-known that standard OLS $t$-statistic inference is not robust to many empirically relevant features of financial data. For instance, the OLS estimator of $\beta$ is not asymptotically Gaussian under $H_0$ if $(x_t)$ is endogenous and (nearly) nonstationary (see \citeay{elliott1994inference}; \citeay{PhillipsNear}; \citeay{GP}; \citeay{PM}; \citeay{KMS2015}), or even if $(x_t)$ is stationary but has infinite variance (e.g., \citeay{GrangerOrr}; \citeay{EKM}; \citeay{ibragimov2015heavy}). This non-Gaussianity persists even when the errors are homoskedastic with $v_t^2=\sigma^2$ for all $t$.\footnote{As usual, the endogeneity of $(x_{t-1})$ refers to the existence of nonzero long-run covariance between the innovations of $(y_t)$ and $(x_{t-1})$.} Furthermore, stock return data exhibit time-varying and stochastically persistent volatility, which causes the distribution of the OLS $t$-statistic to deviate from standard normality, leading to size distortions in conventional tests (see \citeauthor{CJP2016}, \citeyear{CJP2016}; \citeauthor{ibragimov2023new}, \citeyear{ibragimov2023new}).  

\subsection{The Cauchy Estimator}

Both inference methods proposed in this paper build upon the following \emph{Cauchy estimator} of $\beta$ (assuming no intercept, i.e., $\alpha=0$):
\begin{align*}
\check{\beta} 
= \left(\sum_{t=1}^T |x_{t-1}| \right)^{-1} \sum_{t=1}^T \text{sign}(x_{t-1})\, y_t,
\end{align*}
where $\text{sign}(\cdot)$ denotes the sign function, $\text{sign}(x)=1$ for $x\ge0$ and $\text{sign}(x)=-1$ for $x<0$. The estimator $\check{\beta}$ can be interpreted as an instrumental variable (IV) estimator using $\text{sign}(x_{t-1})$ as an instrument for $x_{t-1}$ (see, e.g., \citeay{SoShin}; \citeay{BREITUNG2015}; \citeay{kim-meddahi-2019}; \citeay{shephard2020}). 

Under Assumption~\ref{assumption-mds}, $\text{sign}(x_{t-1})\varepsilon_t$ (denoted by $\xi_t$) is a unit-variance MDS with respect to $(\mathcal{F}_t)$. Define the continuous-time partial sum process $(W^T(r), 0\le r\le1)$ by
\[
W^T(r) = T^{-1/2} \sum_{t=1}^{[Tr]} \xi_t,
\]
which takes values in $\mathbf{D}_{\mathbb{R}}[0,1]$, the space of c\`adl\`ag functions on $[0,1]$ with values in $\mathbb{R}^d$. By the functional central limit theorem for martingales (Theorem~18.2 of \citeauthor{billingsley1986convergence}, \citeyear{billingsley1986convergence}), we have $W^T \Rightarrow W$ in $\mathbf{D}_{\mathbb{R}}[0,1]$, where $W$ is a standard Brownian motion.  

For the volatility process $(v_t)$, define $\sigma^T(r)=v_{[Tr]}$ on $\mathbf{D}_{\mathbb{R}^+}[0,1]$. Then the Cauchy estimator can be expressed in terms of $\sigma^T$ and $W^T$ as
\begin{align*}
\left(\sum_{t=1}^T |x_{t-1}|/\sqrt{T}\right)\! \bigl(\check{\beta}-\beta\bigr)
&= T^{-1/2}\sum_{t=1}^T \text{sign}(x_{t-1})v_t\varepsilon_t 
= \int_0^1 \sigma^T(r)\, dW^T(r).
\end{align*}

Following \citet{ibragimov2023new}, we assume that the volatility process $\sigma^T$ is persistent in the sense that it admits a limiting process $\sigma$ defined on $[0,1]$ such that $(W^T,\sigma^T)\Rightarrow(W,\sigma)$ jointly.

\begin{ass}\label{assumption-1-2}
There exists a nonnegative process $\sigma$ on $\mathbf{D}_{\mathbb{R}^+}[0,1]$ such that 
\[
0<\int_0^1 \sigma^2(r)\,dr<\infty, 
\quad\text{and}\quad
(W^T,\sigma^T)\Rightarrow(W,\sigma)
\quad\text{in}\quad \mathbf{D}_{\mathbb{R}\times\mathbb{R}^+}[0,1],
\]
where $W$ is a standard Brownian motion adapted to the same filtration as $\sigma$.
\end{ass}

Assumption~\ref{assumption-1-2} encompasses a wide class of models, including those with nonstationary volatility, regime switching, or structural breaks.\footnote{Assumptions~\ref{assumption-mds} and~\ref{assumption-1-2} exclude some globally homoskedastic processes, such as stationary GARCH models. However, the hybrid testing procedure proposed later remains valid under $T^{-1}\sum_{t=1}^T v_t^2\to_p\omega^2>0$, which includes conditionally heteroskedastic but globally homoskedastic processes, such as stationary GARCH models (see also Section~4 of \citeauthor{ibragimov2023new}, \citeyear{ibragimov2023new}).} It also covers cases with deterministic volatility $v_t=\sigma(t/T)$, as in \citet{cavaliere-taylor-2007,cavaliere-taylor-2008}, \citet{xu-phillips-2008}, and \citet{harvey-leybourne-zu}, among others.\footnote{Assumption~\ref{assumption-1-2} is a simplified version of the condition $v_{[Tr]}/a_T\Rightarrow\sigma_r$ in Assumption~2 of \citet{cavaliere-taylor-2009}. We focus on stochastically bounded volatilities with $a_T=1$, excluding explosive volatility settings ($a_T\to\infty$) for simplicity.} It further includes nonstationary volatility processes such as those in \citet{hansen1995regression} and \citet{chung2007nonstationary}, where $v_t^2$ is a smooth positive transformation of a (near) unit root process. Overall, Assumptions~\ref{assumption-mds} and~\ref{assumption-1-2} are general enough to allow for stochastic and discontinuous volatility—features commonly observed in financial returns.

Under Assumptions~\ref{assumption-mds} and~\ref{assumption-1-2}, the properly normalized Cauchy estimator satisfies
\[
\left(\sum_{t=1}^T |x_{t-1}|/\sqrt{T}\right)\!(\check{\beta}-\beta)
\Rightarrow \int_0^1 \sigma(r)\,dW(r),
\]
by standard results on the convergence of stochastic integrals (see \citeauthor{hansen-1992}, \citeyear{hansen-1992}; \citeauthor{kurtz1991weak}, \citeyear{kurtz1991weak}; \citeauthor{ibragimov2023new}, \citeyear{ibragimov2023new}). The limit $\int_0^1 \sigma(r)\,dW(r)$ is in general a non-Gaussian martingale, becoming Gaussian only if $W$ and $\sigma$ are independent. In that case, $\int_0^1 \sigma(r)\,dW(r)$ is a scale mixture of normals with variance $\int_0^1 \sigma^2(r)\,dr$, denoted
\[
\int_0^1 \sigma(r)\,dW(r)\;=_d\; \mathbb{MN}\!\left(0, \int_0^1 \sigma^2(r)\,dr\right).
\]
We formalize the independence assumption as follows.

\begin{ass}\label{assumption-2-1}
The processes $\sigma$ and $W$ in Assumption~\ref{assumption-1-2} are independent.
\end{ass}

Assumption~\ref{assumption-2-1} requires the volatility process $\sigma^T$ to be asymptotically independent of the martingale $W^T$, but does not preclude finite-sample dependence. For example, consider
\[
\sigma^T\!\left(t/T\right) 
= T^{-\delta} f(x_{t-1}, \varepsilon_t) + \sigma_0^T\!\left(t/T\right), \quad \delta>0,
\]
where $f:\mathbb{R}^2\to\mathbb{R}^+$ is bounded and $\sigma_0^T$ is independent of $W^T$ with $(W^T,\sigma_0^T)\Rightarrow(W,\sigma)$, where $W$ and $\sigma$ are independent. For any $\delta>0$, the volatility process $\sigma^T$ in this example satisfies Assumption~\ref{assumption-2-1}, even though $\sigma^T$ and $W^T$ may be dependent for any fixed $T>0$.

In the following sections, we develop inference methods based on the Cauchy estimator. Section~3 focuses on predictive regressions with a single predictor and no intercept, while Section~4 extends the analysis to models with multiple predictors and an intercept.

\section{Robust Inference for Predictive Regressions}

\subsection{Robust \texorpdfstring{$t$}{t}-Statistic Approach}

The first approach relies on $t$-statistic-based inference using group estimates of $\beta$, as proposed by \citet{IbragimovMuller2010} (see also \citeauthor{IM1}, \citeyear{IM1}; Section~3.3 of \citeauthor{ibragimov2015heavy}, \citeyear{ibragimov2015heavy}). The method is based on normalized Cauchy estimators—specifically, the numerator of the Cauchy estimator divided by $\sqrt{T}$ in the full-sample case:
\begin{equation}\label{numCauchy}
\check{\gamma} = \frac{\sum_{t=1}^T |x_{t-1}|}{\sqrt{T}}\check{\beta}
= \frac{1}{\sqrt{T}}\sum_{t=1}^T \text{sign}(x_{t-1})y_t.
\end{equation}

Following the $t$-statistic approach, we partition the sample into a fixed number $q\ge2$ of approximately equal groups of consecutive observations. The observation $(y_t,x_{t-1})$ at time $t$ belongs to the $j$th group $\mathcal{G}_j$ if
\[
t\in\mathcal{G}_j = \{s:(j-1)[T/q]<s\le j[T/q]\}, \quad j=1,\dots,q.
\]
We compute the normalized Cauchy statistic in \eqref{numCauchy} within each group:
\begin{equation}\label{gammagroup}
\check{\gamma}_j = \sqrt{\frac{q}{T}}\sum_{t=(j-1)[T/q]+1}^{j[T/q]}
\text{sign}(x_{t-1})y_t.
\end{equation}

The $t$-statistic based on the $q$ group statistics $\{\check{\gamma}_j\}_{j=1}^q$ is given by
\begin{equation}\label{IM2010}
t_q(\check{\gamma}) = \sqrt{q}\,\frac{\bar{\gamma}}{s_{\gamma}},
\end{equation}
where
\[
\bar{\gamma} = q^{-1}\sum_{j=1}^q \check{\gamma}_j,
\qquad
s_{\gamma}^2 = (q-1)^{-1}\sum_{j=1}^q (\check{\gamma}_j-\bar{\gamma})^2.
\]
Under the null hypothesis $H_0:\beta=0$, the test rejects $H_0$ in favor of $H_A:\beta\neq0$ if
$|t_q(\check{\gamma})|>cv_q(\alpha)$, where $cv_q(\alpha)$ denotes the two-sided $t$-critical value at level $\alpha$, i.e.\ $P(|T_{q-1}|>cv_q(\alpha))=\alpha$ for $T_{q-1}\sim t_{q-1}$ (one-sided tests are analogous).

To study the asymptotic behavior of $\{\check{\gamma}_j\}_{j=1}^q$, we decompose
\[
\check{\gamma}_j = \zeta_j + \psi_j,
\]
where
\[
\zeta_j = \beta\sqrt{\frac{q}{T}}\sum_{t\in\mathcal{G}_j}|x_{t-1}|,
\qquad
\psi_j = \sqrt{\frac{q}{T}}\sum_{t\in\mathcal{G}_j}\text{sign}(x_{t-1})u_t.
\]
Under Assumption~\ref{assumption-mds}, $\{\psi_j\}_{j=1}^q$ forms a sequence of martingale differences uncorrelated across groups, yielding the following asymptotic characterization.

\begin{lem}\label{lemma-2-1}
Let Assumptions~\ref{assumption-mds}, \ref{assumption-1-2}, and \ref{assumption-2-1} hold. For any fixed $q\ge2$ and $\beta\in\mathbb{R}$,
\[
(\check{\gamma}_1-\zeta_1,\dots,\check{\gamma}_q-\zeta_q)'
\to_d
\mathbb{MN}\bigl(0,\mathrm{diag}(q\omega_1^2,\dots,q\omega_q^2)\bigr),
\]
where $\omega_j^2 = \int_{(j-1)/q}^{j/q}\sigma^2(r)\,dr$ for $j=1,\dots,q$.
\end{lem}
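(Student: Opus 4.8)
The plan is to write each centered group statistic as a stochastic integral over the corresponding subinterval and then pass to the joint limit. Since $u_t=v_t\varepsilon_t$ and $\mathrm{sign}(x_{t-1})\varepsilon_t=\xi_t$, we have $\check\gamma_j-\zeta_j=\psi_j=\sqrt{q}\,T^{-1/2}\sum_{t\in\mathcal{G}_j}v_t\xi_t$. Setting $M^T(r)=\int_0^r\sigma^T\,dW^T$, which by the paper's convention equals $T^{-1/2}\sum_{t=1}^{[Tr]}v_t\xi_t$, and writing $a_j^T=(j-1)[T/q]/T$ and $b_j^T=j[T/q]/T$, the integer endpoint identity $[T b_j^T]=j[T/q]$ gives the exact representation $\psi_j=\sqrt{q}\,\bigl(M^T(b_j^T)-M^T(a_j^T)\bigr)$.

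First I would invoke the convergence of stochastic integrals (\citeauthor{hansen-1992}, \citeyear{hansen-1992}; \citeauthor{kurtz1991weak}, \citeyear{kurtz1991weak}) under the joint weak convergence $(W^T,\sigma^T)\Rightarrow(W,\sigma)$ of Assumption~\ref{assumption-1-2}, obtaining $M^T\Rightarrow M$ in $\mathbf{D}_{\mathbb{R}}[0,1]$ with $M(r)=\int_0^r\sigma\,dW$. Because $\langle M\rangle_r=\int_0^r\sigma^2$ is absolutely continuous, $M$ has a.s.\ continuous paths, so the deterministic grid points $(j-1)/q$ and $j/q$ are a.s.\ continuity points of $M$; combined with $a_j^T\to(j-1)/q$ and $b_j^T\to j/q$, the continuous-mapping theorem applied to the evaluation-and-difference functionals yields the joint convergence $(\psi_1,\dots,\psi_q)'\to_d(\Psi_1,\dots,\Psi_q)'$, where $\Psi_j=\sqrt{q}\int_{(j-1)/q}^{j/q}\sigma\,dW$.

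It then remains to identify the limiting law. Conditioning on the path of $\sigma$ and using the independence of $\sigma$ and $W$ (Assumption~\ref{assumption-2-1}), each $\int_{(j-1)/q}^{j/q}\sigma\,dW$ is, given $\sigma$, a Gaussian Wiener integral, and the increments of $W$ over the disjoint intervals $[(j-1)/q,j/q]$ are independent. Hence the conditional law of $(\Psi_1,\dots,\Psi_q)'$ is $N\bigl(0,\mathrm{diag}(q\omega_1^2,\dots,q\omega_q^2)\bigr)$, since $q\,\mathrm{Var}\bigl(\int_{(j-1)/q}^{j/q}\sigma\,dW\mid\sigma\bigr)=q\int_{(j-1)/q}^{j/q}\sigma^2=q\omega_j^2$. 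Integrating over the law of $\sigma$ produces exactly the mixed-normal limit $\mathbb{MN}\bigl(0,\mathrm{diag}(q\omega_1^2,\dots,q\omega_q^2)\bigr)$ claimed.

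The main obstacle is the \emph{joint} (vector) convergence in the second step, as opposed to the marginal convergence of each $\psi_j$ that is already implied by the single-integral result displayed in the text: I must upgrade convergence of the integral process $M^T$ to simultaneous convergence of its increments across the $q$ fixed subintervals. This hinges on the a.s.\ continuity of the limit $M$ at the deterministic points $j/q$—which legitimizes evaluation functionals in the Skorokhod topology—and on the exactness of $a_j^T\to(j-1)/q$ and $b_j^T\to j/q$, so that the rounding in $[T/q]$ contributes asymptotically negligible boundary terms. Once joint convergence is secured, the diagonal covariance and the mixing structure follow cleanly from the independent-increments property of $W$ and Assumption~\ref{assumption-2-1}.
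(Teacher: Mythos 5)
Your proof is correct and follows essentially the same route as the paper's: the paper likewise writes $\check{\gamma}_j-\zeta_j=\sqrt{q}\int_{(j-1)/q}^{j/q}\sigma^T(r)\,dW^T(r)$ and concludes directly from Assumptions~\ref{assumption-1-2} and~\ref{assumption-2-1}. You have merely spelled out the details the paper leaves implicit (convergence of the integral process, joint convergence of increments at the grid points, and the conditioning argument identifying the mixed-normal law), all of which are accurate.
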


The statistics $\{\check{\gamma}_j\}_{j=1}^q$ do not satisfy the standard condition in \citet{IbragimovMuller2010}, which requires estimators $\{\tilde{\beta}_j\}_{j=1}^q$ such that
\[
\{m_T(\tilde{\beta}_j-\beta)\}_{j=1}^q \to_d \{V_j Z_j\}_{j=1}^q,
\]
for some $m_T\to\infty$, $Z_j\stackrel{iid}{\sim}\mathbb{N}(0,1)$, and $\{V_j\}$ independent of $\{Z_j\}$. By contrast, Lemma~\ref{lemma-2-1} shows that $\{\check{\gamma}_j\}_{j=1}^q$ lack such a diverging normalization. Consequently, as shown in Proposition~\ref{proposition-2-1}, the $t$-statistic approach yields correct asymptotic size but is consistent only for a restricted class of covariates, excluding (near) unit-root processes. This inconsistency arises precisely because the asymptotics of $\{\check{\gamma}_j\}_{j=1}^q$ do not involve a diverging sequence (see proofs of Proposition~\ref{proposition-2-1} and Corollary~\ref{corollary-2-1}).

Nevertheless, with additional regularity conditions, if $\{c_T^{-1}\sum_{t\in\mathcal{G}_j}|x_{t-1}|\}_{j=1}^q\to_d\{D_j\}_{j=1}^q$ for positive random variables $\{D_j\}$ and a sequence $c_T/\sqrt{T}\to\infty$, then the Cauchy estimator $\check{\beta}_j$ computed within each group satisfies
\[
\{m_T(\check{\beta}_j-\beta)\}_{j=1}^q \to_d \{P_j\}_{j=1}^q,
\]
for $m_T=c_T\sqrt{q/T}$. In general, however, $\{P_j\}_{j=1}^q$ are non-Gaussian, especially when $(x_t)$ is (near) unit root and endogenous. Applying the $t$-statistic approach to $\{\check{\beta}_j\}_{j=1}^q$ thus yields consistency for broader classes of covariates but may incur size distortions due to non-Gaussianity.

\begin{prop}\label{proposition-2-1}
Let Assumptions~\ref{assumption-mds}, \ref{assumption-1-2}, and \ref{assumption-2-1} hold, with fixed $q\ge2$ and $\alpha\le0.83$.

(a) Under $H_0:\beta=0$,
\[
\lim_{T\to\infty}\mathbb{P}(|t_q(\check{\gamma})|>cv_q(\alpha)\mid H_0)\le\alpha.
\]

(b) Under $H_A:\beta\neq0$, suppose $(x_t)$ is stationary with $E|x_t|<\infty$ and satisfies
\[
\sup_{1\le s\le T-T/q}\Biggl|E|x_t|-\frac{1}{T/q}\sum_{t=s}^{s+T/q}|x_t|\Biggr|\to_p0.
\]
Then
\[
\lim_{T\to\infty}\mathbb{P}(|t_q(\check{\gamma})|>cv_q(\alpha)\mid H_A)=1.
\]
\end{prop}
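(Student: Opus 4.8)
The plan is to handle the two parts separately, in both cases reducing the analysis of $t_q(\check\gamma)$ to the group decomposition $\check\gamma_j=\zeta_j+\psi_j$ introduced just before Lemma~\ref{lemma-2-1}. For part (a), under $H_0$ we have $\beta=0$, so $\zeta_j=0$ and $\check\gamma_j=\psi_j$ for every $j$. Lemma~\ref{lemma-2-1} then gives $(\check\gamma_1,\dots,\check\gamma_q)'\to_d (G_1,\dots,G_q)'\sim\mathbb{MN}(0,\mathrm{diag}(q\omega_1^2,\dots,q\omega_q^2))$. Since $t_q(\cdot)$ is continuous off the measure-zero set $\{s_\gamma=0\}$, which carries probability zero under this absolutely continuous limit, the continuous mapping theorem yields $t_q(\check\gamma)\to_d t_q(G)$. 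The crucial observation is that, conditionally on the mixing process $\sigma$ (equivalently on $\omega_1^2,\dots,\omega_q^2$), the coordinates $G_1,\dots,G_q$ are \emph{independent} mean-zero Gaussians with heterogeneous variances $q\omega_j^2$; Assumption~\ref{assumption-2-1} is precisely what delivers this conditional independence, and $0<\int_0^1\sigma^2(r)\,dr<\infty$ guarantees the variances are not all zero. Because the $t$-statistic is invariant to a common rescaling of all coordinates, I can apply the Bakirov--Sz\'ekely inequality underlying \citet{IbragimovMuller2010} conditionally on $\sigma$, obtaining $\mathbb{P}\bigl(|t_q(G)|>cv_q(\alpha)\mid\sigma\bigr)\le\alpha$ for $\alpha$ in the range permitted by that inequality (the threshold stated in the proposition). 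Integrating over $\sigma$ gives $\mathbb{P}(|t_q(G)|>cv_q(\alpha))\le\alpha$, and combining with the weak convergence completes part (a).

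For part (b) I retain the same decomposition but now with $\beta\neq0$. Writing $A_j=([T/q])^{-1}\sum_{t\in\mathcal{G}_j}|x_{t-1}|$ for the $j$th group average and $\kappa_T=\sqrt{T/q}\to\infty$, a direct computation gives $\zeta_j=\beta\kappa_T A_j(1+o(1))$. The stationarity hypothesis with $\mu_x:=E|x_t|>0$, together with the uniform convergence condition, forces $\max_{1\le j\le q}|A_j-\mu_x|\to_p0$, hence $\bar A\to_p\mu_x$ and $(q-1)^{-1}\sum_j(A_j-\bar A)^2\to_p0$. Meanwhile Lemma~\ref{lemma-2-1} shows each $\psi_j=O_p(1)$. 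I would then normalize numerator and denominator of $t_q(\check\gamma)=\sqrt q\,\bar\gamma/s_\gamma$ by $\kappa_T$: the numerator satisfies $\bar\gamma/\kappa_T=\beta\bar A+\bar\psi/\kappa_T\to_p\beta\mu_x\neq0$, while $(s_\gamma/\kappa_T)^2=(q-1)^{-1}\sum_j\bigl(\beta(A_j-\bar A)+(\psi_j-\bar\psi)/\kappa_T\bigr)^2\to_p0$. Consequently $|t_q(\check\gamma)|=\sqrt q\,(|\bar\gamma|/\kappa_T)/(s_\gamma/\kappa_T)\to_p\infty$, and since $cv_q(\alpha)$ is a fixed finite constant the rejection probability tends to one.

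The two steps I expect to be the main obstacles are, for part (a), the conditioning device that reconciles the mixed-normal (rather than purely Gaussian) limit of Lemma~\ref{lemma-2-1} with the Bakirov--Sz\'ekely size bound of \citet{IbragimovMuller2010}: the role of Assumption~\ref{assumption-2-1} is exactly to make the limiting coordinates conditionally independent so that the bound applies slice by slice, and one must also check that it survives the occasional degenerate coordinate when some $\omega_j^2=0$. For part (b), the delicate point is confirming that the \emph{within-group} dispersion $s_\gamma/\kappa_T$ genuinely vanishes, for which the uniform (rather than merely pointwise) convergence of the block averages is what is needed; once both numerator and denominator are rescaled by $\kappa_T$, consistency follows from a clean ratio argument rather than from any diverging normalization of $\{\check\gamma_j\}$, which is consistent with the discussion preceding the proposition.
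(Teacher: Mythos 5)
Your proposal is correct and follows essentially the same route as the paper: part (a) reduces to the Bakirov--Sz\'ekely/Ibragimov--M\"uller size bound applied (after conditioning on $\sigma$) to the mixed-normal limit from Lemma~\ref{lemma-2-1}, which is exactly what the paper's citation of Theorem~1 and Section~2.2 of \citet{IbragimovMuller2010} accomplishes, and part (b) is the same ratio argument showing $\sqrt{q}\,\bar\gamma/\sqrt{T}\to_p\beta\,E|x_t|\neq0$ while $s_\gamma^2=o_p(T)$. Your write-up is, if anything, more explicit than the paper's on both the conditioning device in (a) and the role of the uniform block-average convergence in (b).
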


Proposition~\ref{proposition-2-1} shows that the $t$-statistic approach is conservative under $H_0$ and consistent under $H_A$ when $(x_t)$ is stationary with a finite first moment. It is thus valid and robust to persistent heteroskedasticity and endogenously heavy-tailed covariates. However, it becomes inconsistent for highly persistent covariates, such as (near) unit-root processes. To illustrate, consider the generalized local-to-unity framework of \citet{dou2021generalized}, where $X^T(r)=x_{[Tr]}$ for $r\in[0,1]$ and
\begin{equation}\label{fclt}
T^{-1/2}\bigl(X^T(\cdot)-X^T(0)\bigr)\to_d X(\cdot)-X(0),
\end{equation}
with $X$ a stationary continuous-time Gaussian ARMA process.\footnote{See \citet{dou2021generalized} for a detailed discussion.}

\begin{cor}\label{corollary-2-1}
Let Assumptions~\ref{assumption-mds}–\ref{assumption-2-1} hold and suppose $(x_t)$ satisfies \eqref{fclt}. Under $\beta\neq0$, $t_q(\check{\gamma})\to_d\text{sign}(\beta)\mathcal{D}_q$ for $q\ge2$, where
\[
\mathcal{D}_q =
\int_0^1 |X(r)|dr
\left(
\frac{q(q-1)}{\sum_{j=1}^q\bigl(\int_0^1|X(r)|dr - q\int_{(j-1)/q}^{j/q}|X(r)|dr\bigr)^2}
\right)^{1/2},
\]
and $\mathcal{D}_q>(q-1)^{-1/2}$ with probability one.
\end{cor}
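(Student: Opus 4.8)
The plan is to exploit the fact that, under the near--unit--root asymptotics encoded in \eqref{fclt}, the ``signal'' component $\zeta_j$ of each group statistic diverges at rate $T$, whereas the ``noise'' component $\psi_j$ stays $O_p(1)$ by Lemma~\ref{lemma-2-1}. Since $t_q(\check\gamma)$ is invariant under multiplication of all the $\check\gamma_j$ by a common positive constant and sign--equivariant under a negative one, I would divide every group statistic by $T$, pass to the weak limit of the normalized signal, and read off the limiting $t$--statistic by the continuous mapping theorem. Concretely, I would first record the elementary identity $t_q(c\,\check\gamma)=\text{sign}(c)\,t_q(\check\gamma)$ for $c\neq0$, so that $t_q(\check\gamma)=t_q(T^{-1}\check\gamma)$.

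Writing $A=\int_0^1|X(r)|\,dr$ and $I_j=\int_{(j-1)/q}^{j/q}|X(r)|\,dr$, the FCLT \eqref{fclt} gives $T^{-1/2}x_{[T\cdot]}\Rightarrow X(\cdot)$, and since $f\mapsto\bigl(\int_{(j-1)/q}^{j/q}|f(r)|\,dr\bigr)_{j=1}^q$ is continuous on $\mathbf D_{\mathbb R}[0,1]$, the continuous mapping theorem together with the Riemann--sum approximation of the group sums yields the joint convergence $T^{-3/2}\sum_{t\in\mathcal G_j}|x_{t-1}|\to_d I_j$. Hence $\zeta_j/T=\beta\sqrt q\,T^{-3/2}\sum_{t\in\mathcal G_j}|x_{t-1}|\to_d\beta\sqrt q\,I_j$ jointly in $j$, while $\psi_j/T\to_p0$ because $\psi_j=O_p(1)$ by Lemma~\ref{lemma-2-1}. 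By Slutsky's theorem $T^{-1}\check\gamma_j=\zeta_j/T+\psi_j/T\to_d\beta\sqrt q\,I_j$ jointly. Applying $t_q$, which is continuous at any limit point with nondegenerate sample spread, I would obtain
\[
t_q(\check\gamma)=t_q\bigl(T^{-1}\check\gamma\bigr)\to_d \text{sign}(\beta)\,\sqrt q\,\frac{\bar I}{s_I},
\qquad s_I^2=(q-1)^{-1}\sum_{j=1}^q (I_j-\bar I)^2,\ \ \bar I=A/q,
\]
and a short simplification using $\sum_j(A-qI_j)^2=q^2\sum_j(I_j-\bar I)^2$ turns the right--hand side into $\text{sign}(\beta)\mathcal D_q$, matching the stated formula.

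For the lower bound I would set $S=\sum_{j=1}^q(I_j-\bar I)^2$ and use $\sum_j(A-qI_j)^2=q^2S$ to rewrite $\mathcal D_q^2=A^2(q-1)/(qS)$, so that $\mathcal D_q>(q-1)^{-1/2}$ is equivalent to $S<A^2(q-1)^2/q$. Since the $I_j$ are nonnegative and sum to $A$, one has $\sum_j I_j^2\le(\sum_j I_j)^2=A^2$, whence $S=\sum_j I_j^2-A^2/q\le A^2(q-1)/q$, with equality only when at most one $I_j$ is positive. Because $A^2(q-1)/q\le A^2(q-1)^2/q$, with strict inequality for $q\ge3$, the desired bound follows once $A>0$; for $q=2$ the two bounds coincide and one instead needs the strict inequality $\sum_j I_j^2<A^2$, i.e.\ that both $I_1,I_2$ are strictly positive. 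All strictness and the qualifier ``with probability one'' come from the nondegeneracy of the continuous Gaussian limit $X$, for which $\int_B|X(r)|\,dr>0$ almost surely on every subinterval $B\subseteq[0,1]$ of positive length.

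The main obstacle is the first part: making precise that the martingale noise $\{\psi_j\}$ is asymptotically negligible relative to the diverging signal $\{\zeta_j\}$, and recognizing that this negligibility is exactly what destroys consistency---both numerator and denominator of the $t$--statistic are then governed by the single random path of $X$, so the limit is a fixed functional $\mathcal D_q$ of $X$ rather than a diverging quantity. Care is also needed to justify the continuity of the $t_q$ functional at the limit (which requires the $I_j$ not to be all equal, true almost surely) and to control the boundary effects in matching the group sums $\sum_{t\in\mathcal G_j}|x_{t-1}|$ to the integrals over $[(j-1)/q,j/q]$.
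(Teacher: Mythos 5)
Your proposal is correct. For the distributional convergence it follows essentially the same route as the paper: decompose $\check\gamma_j=\zeta_j+\psi_j$, use Lemma~\ref{lemma-2-1} to conclude $\psi_j=O_p(1)$ so that after dividing by $T$ only the signal survives, and apply the FCLT \eqref{fclt} with the continuous mapping theorem to get $T^{-1}\check\gamma_j\to_d\beta\sqrt{q}\,I_j$ jointly; the paper computes the limits of $\bar\gamma/T$ and $s_\gamma^2/T^2$ separately rather than invoking the scale equivariance $t_q(c\check\gamma)=\operatorname{sign}(c)t_q(\check\gamma)$, but this is a cosmetic difference. Where you genuinely diverge is the lower bound $\mathcal D_q>(q-1)^{-1/2}$: the paper bounds $\sum_j(A-qI_j)^2$ by $q\max_j(A-qI_j)^2$ and then uses $|A-qI_j|\le(q-1)A$ (strict a.s.), whereas you rewrite $\mathcal D_q^2=A^2(q-1)/(qS)$ with $S=\sum_j I_j^2-A^2/q$ and use $\sum_j I_j^2\le A^2$ for nonnegative $I_j$ summing to $A$. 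Your variance-decomposition argument is arguably cleaner and makes transparent exactly where strictness comes from (for $q\ge3$ it is automatic once $A>0$; for $q=2$ it reduces to $I_1I_2>0$); note in passing that the paper's first inequality $\sum_j a_j^2<q\max_j a_j^2$ is actually an equality when $q=2$ since the $a_j$ sum to zero, so your treatment of that case is, if anything, more careful. Both arguments rely on the same almost-sure nondegeneracy facts about the continuous Gaussian limit $X$ ($\int_B|X|>0$ on every subinterval, and the $I_j$ not all equal), which you correctly flag.
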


When $(x_t)$ is highly persistent, $t_q(\check{\gamma})$ converges to $\mathcal{D}_q$ rather than diverging, with lower bound $(q-1)^{-1/2}$. Simulations in Section~5 confirm that rejection probabilities remain high even when $t_q(\check{\gamma})$ is asymptotically bounded. For $q=2$,
\begin{align}\label{simul}
\mathcal{D}_2 =
\frac{\int_0^1|X(r)|dr}
{\bigl|\int_0^{1/2}|X(r)|dr - \int_{1/2}^1|X(r)|dr\bigr|} > 1.
\end{align}
The ratio form in \eqref{simul} implies large realizations of $\mathcal{D}_2$ in finite samples, producing high rejection rates even under inconsistency. Figure~\ref{simulfig} plots the simulated density of $\mathcal{D}_2$ when $X$ is Brownian motion.\footnote{Based on 100{,}000 simulated draws.} The minimum simulated value is 1.15, and $\mathbb{P}(|\mathcal{D}_2|>cv_2(0.05))=0.15$ with $cv_2(0.05)=4.303$.

\begin{figure}[H]
\centering
\includegraphics[scale=0.18]{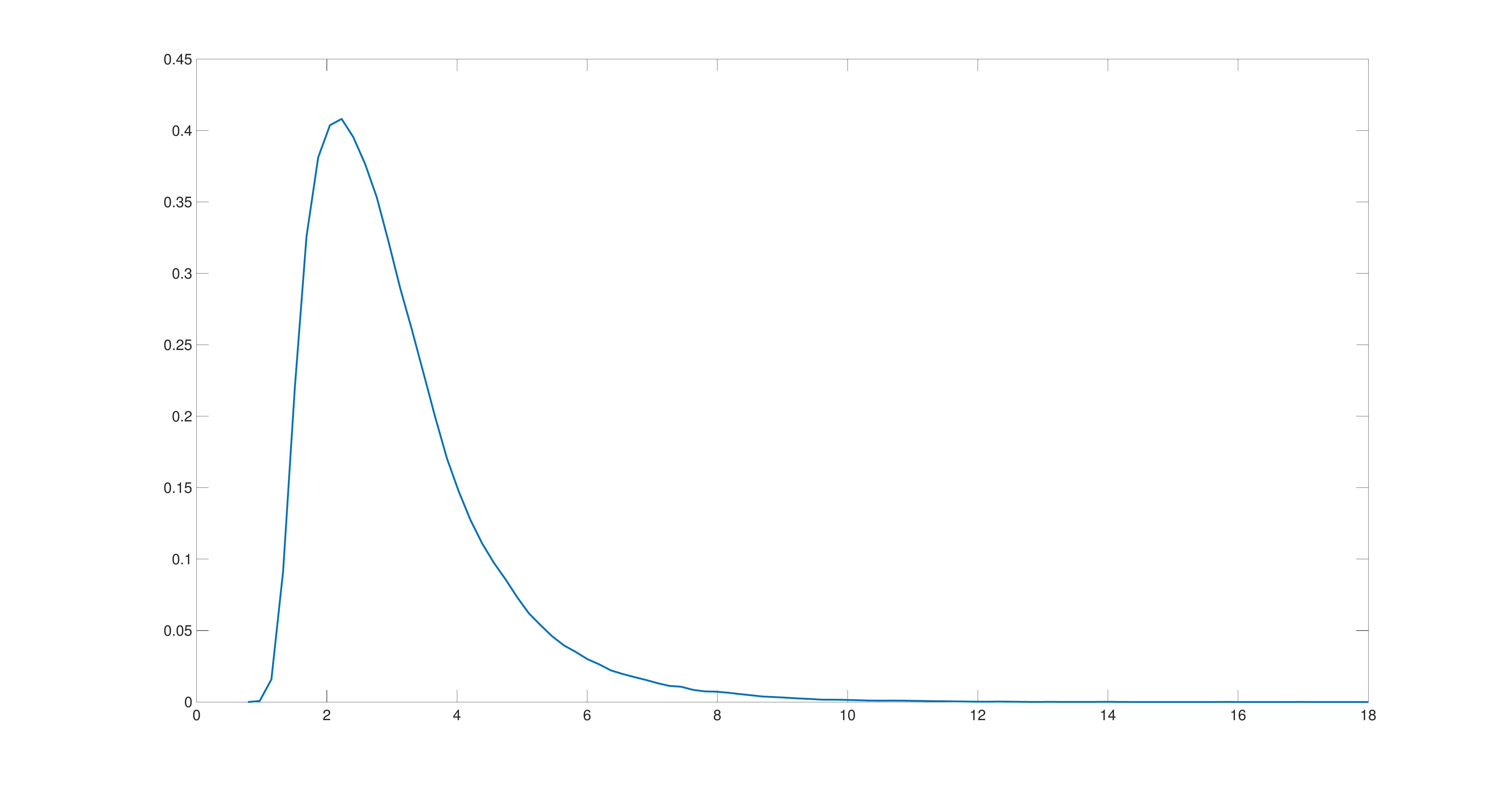}
\caption{Simulated density of $\mathcal{D}_2$ in \eqref{simul}.}
\label{simulfig}
\end{figure}

\subsection{A Hybrid Test}

We now propose a simple hybrid test that remains consistent for a broad class of covariates. Under Assumptions~\ref{assumption-mds}–\ref{assumption-2-1},
\[
\frac{\sum_{t=1}^T |x_{t-1}|}{\sqrt{T}}(\check{\beta}-\beta)
= \frac{1}{\sqrt{T}}\sum_{t=1}^T \text{sign}(x_{t-1})u_t
\to_d \int_0^1\sigma(r)\,dW(r)=\omega Z,
\]
where $Z\sim\mathbb{N}(0,1)$ and $\omega^2=\int_0^1\sigma^2(r)\,dr$.

A key feature of the Cauchy estimator $\check{\beta}$ is that its properly normalized limit distribution is invariant to the data-generating process of $(x_t)$. By contrast, the OLS estimator’s variance depends on both $(u_t)$ and $(x_t)$, complicating variance estimation even under homoskedasticity. For $\check{\beta}$, the asymptotic variance depends solely on $u_t$, requiring only heteroskedasticity-robust adjustments.\footnote{See \citet{shephard2020}, Section~4.3, for related discussion.}

We define the hybrid test statistic as
\[
\tau(\check{\beta}) = \frac{\check{\gamma}}{\hat{\omega}},
\]
where $\check{\gamma} = (\sum_{t=1}^T |x_{t-1}|/\sqrt{T})\check{\beta}$ as in \eqref{numCauchy}, and
\[
\hat{\omega}^2 = \frac{1}{T}\sum_{t=1}^T \hat{u}_t^2,
\qquad
\hat{u}_t = y_t - \hat{\beta}x_{t-1}.
\]
Here, $\hat{\omega}^2$ estimates $\omega^2=\int_0^1\sigma^2(r)\,dr$ using OLS residuals. As noted by \citet{shephard2020}, the Cauchy-based variance estimator performs poorly because the Cauchy estimator converges more slowly and less efficiently than OLS when $(x_t)$ is heavy-tailed or nearly integrated. Hence, we use OLS residuals to improve efficiency.\footnote{A related approach is employed by \citet{KMS2015} in the IVX framework of \citet{PM}.}

We assume:

\begin{ass}\label{assumption-ols}
$\sum_{t=1}^T x_{t-1}u_t = o_p\!\left(\sqrt{T\sum_{t=1}^T x_{t-1}^2}\right)$.
\end{ass}

Assumption~\ref{assumption-ols} is very general and holds in many time-series settings. It is weaker than Assumption~3.2 of \citet{ibragimov2023new}, which requires $\sum_{t=1}^T x_{t-1}u_t = O_p\!\left(\sqrt{T^p\sum_{t=1}^T x_{t-1}^2}\right)$ for $p\in[0,1/16)$. As shown in \citet{ibragimov2023new}, this holds with $p=0$ when $(x_t)$ is either (near) unit root or stationary with finite variance; it also applies to certain stationary heavy-tailed processes (see, e.g., \citeauthor{Samorodnitsky2007}, \citeyear{Samorodnitsky2007}).

Under Assumption~\ref{assumption-ols},
\[
|\hat{\beta}-\beta| = o_p\!\left(\sqrt{T\big/\sum_{t=1}^T x_{t-1}^2}\right),
\quad\text{and hence}\quad \hat{\omega}^2\to_p\omega^2.
\]
The asymptotic properties of $\tau(\check{\beta})$ follow.

\begin{prop}\label{proposition-2-2}
Let Assumptions~\ref{assumption-mds}, \ref{assumption-1-2}, \ref{assumption-2-1}, and \ref{assumption-ols} hold.

(a) Under $H_0:\beta=0$,
\[
\tau(\check{\beta})\to_d\mathbb{N}(0,1).
\]

(b) Under $H_A:\beta\neq0$,
\[
\tau(\check{\beta})
= \beta\frac{\sum_{t=1}^T |x_{t-1}|}{\omega\sqrt{T}}(1+o_p(1)) + O_p(1),
\]
so $|\tau(\check{\beta})|\to_p\infty$ whenever $\sum_{t=1}^T |x_{t-1}|/\sqrt{T}\to_p\infty$.
\end{prop}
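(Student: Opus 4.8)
The plan is to prove Proposition~\ref{proposition-2-2} by exploiting the decomposition of the numerator statistic $\check{\gamma}$ together with the consistency of $\hat{\omega}^2$ for $\omega^2$, which is already guaranteed by Assumption~\ref{assumption-ols}. For part~(a), I would start from the identity
\[
\check{\gamma} = \frac{\sum_{t=1}^T |x_{t-1}|}{\sqrt{T}}(\check{\beta}-\beta) + \beta\frac{\sum_{t=1}^T |x_{t-1}|}{\sqrt{T}},
\]
and note that under $H_0:\beta=0$ the second term vanishes, so $\check{\gamma} = T^{-1/2}\sum_{t=1}^T \operatorname{sign}(x_{t-1})u_t$. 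The displayed convergence result stated just before the proposition gives $\check{\gamma}\to_d \int_0^1\sigma(r)\,dW(r) = \omega Z$ with $Z\sim\mathbb{N}(0,1)$, where Assumption~\ref{assumption-2-1} (independence of $\sigma$ and $W$) is what turns the stochastic integral into a scale mixture of normals with variance $\omega^2$. Combining $\check{\gamma}\to_d\omega Z$ with $\hat{\omega}\to_p\omega>0$ via Slutsky's theorem then yields $\tau(\check{\beta}) = \check{\gamma}/\hat{\omega}\to_d Z\sim\mathbb{N}(0,1)$.

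For part~(b), I would keep $\beta\neq0$ in the decomposition above. The first term $(\sum_{t=1}^T|x_{t-1}|/\sqrt{T})(\check{\beta}-\beta) = T^{-1/2}\sum_{t=1}^T\operatorname{sign}(x_{t-1})u_t$ is exactly the same MDS-driven quantity as before and is therefore $O_p(1)$, converging to $\omega Z$. Dividing through by $\hat{\omega}$ and using $\hat{\omega}\to_p\omega$ gives
\[
\tau(\check{\beta}) = \beta\,\frac{\sum_{t=1}^T|x_{t-1}|}{\omega\sqrt{T}}\,(1+o_p(1)) + O_p(1),
\]
where the $(1+o_p(1))$ factor absorbs the ratio $\omega/\hat{\omega}$ and the $O_p(1)$ term collects the normalized noise component. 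Consequently, whenever the deterministic-order quantity $\sum_{t=1}^T|x_{t-1}|/\sqrt{T}$ diverges in probability, the leading term dominates the bounded remainder and $|\tau(\check{\beta})|\to_p\infty$, establishing consistency.

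The main obstacle, and the only step requiring genuine care rather than bookkeeping, is justifying $\hat{\omega}^2\to_p\omega^2$ under the weak Assumption~\ref{assumption-ols}. The point is that $\hat{\omega}^2 = T^{-1}\sum_{t=1}^T\hat{u}_t^2$ is built from OLS residuals $\hat{u}_t = y_t - \hat{\beta}x_{t-1}$, not the true errors, so I must control the estimation error $\hat\beta-\beta$. Writing $\hat{u}_t = u_t - (\hat{\beta}-\beta)x_{t-1}$ and expanding the square gives $\hat{\omega}^2 = T^{-1}\sum u_t^2 - 2(\hat{\beta}-\beta)T^{-1}\sum x_{t-1}u_t + (\hat{\beta}-\beta)^2 T^{-1}\sum x_{t-1}^2$. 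Assumption~\ref{assumption-ols} delivers precisely the bound $|\hat{\beta}-\beta| = o_p(\sqrt{T/\sum_{t=1}^T x_{t-1}^2})$ (already recorded in the text), and substituting this into the cross and quadratic terms shows both are $o_p(1)$ after a Cauchy–Schwarz step on $\sum x_{t-1}u_t$; the leading term $T^{-1}\sum u_t^2\to_p\omega^2$ follows from Assumptions~\ref{assumption-mds} and~\ref{assumption-1-2}. I expect this residual-approximation argument to be the technically delicate part, whereas parts (a) and (b) reduce to Slutsky's theorem once consistency of $\hat\omega^2$ is in hand.
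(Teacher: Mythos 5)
Your proposal is correct and follows essentially the same route as the paper: decompose $\check{\gamma}$ into the $\beta$-signal term and the martingale term $T^{-1/2}\sum_t \operatorname{sign}(x_{t-1})u_t \to_d \omega Z$, reduce everything to $\hat{\omega}^2\to_p\omega^2$, and conclude by Slutsky. The only cosmetic difference is in establishing $\hat{\omega}^2\to_p\omega^2$: you expand $T^{-1}\sum_t\hat{u}_t^2$ into three terms and invoke the bound $|\hat{\beta}-\beta|=o_p(\sqrt{T/\sum_t x_{t-1}^2})$, whereas the paper uses the exact OLS identity $\hat{\omega}^2=\omega_T^2-\tfrac{1}{T}(\sum_t x_{t-1}u_t)^2/\sum_t x_{t-1}^2$, which makes the correction term $o_p(1)$ directly from Assumption~\ref{assumption-ols} with no Cauchy--Schwarz step needed; the substance is identical.
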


The conclusions of Proposition~\ref{proposition-2-2} remain valid under weaker conditions. For instance, if Assumptions~\ref{assumption-mds} and~\ref{assumption-ols} hold and
\[
\frac{1}{T}\sum_{t=1}^T v_t^2\to_p\omega^2>0,
\qquad
\frac{1}{\sqrt{T}}\sum_{t=1}^T \text{sign}(x_{t-1})u_t\to_d\omega Z,
\]
where $Z\sim\mathbb{N}(0,1)$ is independent of $\omega^2$, then $\tau(\check{\beta})$ retains its asymptotic validity. These conditions include stationary volatility with $E[v_t^2]=\omega^2$. Hence, Assumptions~\ref{assumption-1-2} and~\ref{assumption-2-1} can be interpreted as primitive sufficient conditions accommodating persistent volatility in predictive regression data.\smallskip

\noindent \textbf{Remark.} Proposition~\ref{proposition-2-2}(a) also holds if $\tau(\check{\beta})$ uses $\bar{\omega}^2=T^{-1}\sum_{t=1}^T y_t^2$ instead of $\hat{\omega}^2$, since $\beta=0$ under $H_0$. Moreover, the corresponding test remains consistent when $(x_t)$ is stationary with finite variance or follows a generalized local-to-unity process \citep{dou2021generalized}. However, it can be inconsistent for heavy-tailed $(x_t)$. For instance, if $(x_t)$ is i.i.d.\ $\alpha$-stable with $\alpha\in(0,2)$ and independent of $(u_t)$, then
\[
\bar{\omega}^2 = \beta^2 \left(\frac{1}{T}\sum_{t=1}^T x_{t-1}^2\right)(1+o_p(1)),
\quad
\tau(\check{\beta}) =
\frac{\sum_{t=1}^T |x_{t-1}|}{\sqrt{\sum_{t=1}^T x_{t-1}^2}}(1+o_p(1))
= O_p(1),
\]
by the generalized central limit theorem (see \citeay{feller1971}; \citeay{logan1973}; \citeay{Davis1983}; \citeay{DavisResnick1986}). Thus, the use of $\hat{\omega}^2$ (or another consistent estimator under both $H_0$ and $H_A$) is crucial for the consistency of the hybrid test.

\section{Extensions}

This section extends the inference methods developed in Section~3 to models with multiple predictors and to regressions including an intercept. Our goal is not to design efficient procedures but to provide simple and robust inference methods that rely on minimal assumptions on the predictors and volatility processes.

\subsection{Predictive Regressions with Multiple Predictors}

Consider a predictive regression with $K$ predictors $x_t = [x_{1,t}, \ldots, x_{K,t}]'$:
\begin{align}
y_t &= x_{t-1}' B + u_t \notag \\
    &= \beta_{1,K} x_{1,t-1} + \cdots + \beta_{K,K} x_{K,t-1} + u_t, 
    \qquad B = [\beta_{1,K}, \ldots, \beta_{K,K}]'. \label{PredRegM}
\end{align}
The objective is to test the joint predictability of the covariates, that is,
\[
H_0: \beta_{1,K} = \cdots = \beta_{K,K} = 0.
\]

We construct a testing procedure for $H_0$ based on the univariate inference methods in Section~3. Specifically, we estimate $K$ univariate predictive regressions
\[
y_t = \beta_k x_{k,t-1} + u_{k,t}, \qquad k = 1, \ldots, K,
\]
and test each null hypothesis
\[
H_0^{(k)}: \beta_k = 0, \qquad k = 1, \ldots, K.
\]
Clearly, $H_0$ implies $H_0^{(k)}$ for all $k$. The converse also holds under mild regularity conditions, as shown below.

\begin{lem}\label{lemma-4-1}
Consider model~\eqref{PredRegM} and define 
$z_t = [\operatorname{sign}(x_{1,t}), \ldots, \operatorname{sign}(x_{K,t})]'$. 
Suppose that for each $t \in \{1,\dots,T\}$,
$ E[z_{t-1} u_t] = \mathbf{0}_{K\times1}$, 
$0 <  E[|x_{k,t-1}|] < \infty$ for all $k$, and that the matrix 
$ E[z_{t-1}x_{t-1}']$ is invertible.\footnote{%
Even when $\operatorname{sign}(x_{t-1})$ is constant, the univariate Cauchy estimator and associated tests remain well defined. In this case, the estimator simplifies to 
$\check{\beta} = \sum_{t=1}^T y_t \big/ \sum_{t=1}^T x_{t-1}$, 
implying $\check{\beta}-\beta = \sum_{t=1}^T u_t \big/ \sum_{t=1}^T x_{t-1}$, and Proposition~\ref{proposition-2-2} continues to hold. 
Hence, the lack of sign variation does not affect the validity of our methods in the univariate case. 
In the multiple-predictor case, however, the invertibility of $ E[z_{t-1} x_{t-1}']$ imposes mild restrictions on sign changes across predictors. 
For practical applications, one can induce variation in the sign instrument by recentering the predictor, for example, 
$\tilde{x}_{t-1} = x_{t-1} - t^{-1}\sum_{s=1}^t x_{s-1}$, 
which preserves both the martingale structure and the asymptotic validity of the estimator.%
} 
Then, the joint null hypothesis $H_0$ holds if and only if $H_0^{(k)}$ holds for all $k = 1, \ldots, K$.
\end{lem}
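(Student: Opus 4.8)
The plan is to reduce the stated equivalence to a single matrix identity that links the vector of univariate (pseudo-true) Cauchy coefficients to the multivariate coefficient $B$, and then to read off both implications directly from the two invertibility hypotheses. The whole argument is at the population level and is essentially linear algebra; the only care required is conceptual, in correctly identifying what each $\beta_k$ means.

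First I would pin down the population content of each univariate null. In the $k$th univariate regression $y_t = \beta_k x_{k,t-1} + u_{k,t}$, the Cauchy estimator uses $\operatorname{sign}(x_{k,t-1})$ as an instrument, so the relevant (pseudo-true) value of $\beta_k$ is the one making the moment condition $\mathbb{E}[\operatorname{sign}(x_{k,t-1})(y_t - \beta_k x_{k,t-1})] = 0$ hold. Since $\mathbb{E}[\operatorname{sign}(x_{k,t-1})x_{k,t-1}] = \mathbb{E}[|x_{k,t-1}|]$ is strictly positive by hypothesis, this gives $\beta_k = \mathbb{E}[\operatorname{sign}(x_{k,t-1})y_t]/\mathbb{E}[|x_{k,t-1}|]$, so that $H_0^{(k)}$ is equivalent to $\mathbb{E}[\operatorname{sign}(x_{k,t-1})y_t]=0$.

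Next I would stack these $K$ scalar relations. Writing $\boldsymbol{\beta} = [\beta_1,\dots,\beta_K]'$ and $D = \operatorname{diag}\!\big(\mathbb{E}|x_{1,t-1}|,\dots,\mathbb{E}|x_{K,t-1}|\big)$, the previous step reads $D\boldsymbol{\beta} = \mathbb{E}[z_{t-1}y_t]$, because the $k$th entry of $z_{t-1}y_t$ is exactly $\operatorname{sign}(x_{k,t-1})y_t$. Substituting the multivariate model $y_t = x_{t-1}'B + u_t$ and invoking the orthogonality hypothesis $\mathbb{E}[z_{t-1}u_t]=\mathbf{0}$ produces
\[
\mathbb{E}[z_{t-1}y_t] = \mathbb{E}[z_{t-1}x_{t-1}']\,B + \mathbb{E}[z_{t-1}u_t] = MB, \qquad M := \mathbb{E}[z_{t-1}x_{t-1}'].
\]
Combining the two displays yields the single identity $D\boldsymbol{\beta} = MB$.

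Finally I would read off the equivalence. The matrix $D$ is invertible since its diagonal entries are strictly positive, and $M$ is invertible by hypothesis. If $H_0$ holds then $B=0$, so $D\boldsymbol{\beta}=0$ and hence $\boldsymbol{\beta}=0$, i.e.\ every $H_0^{(k)}$ holds; conversely, if all $H_0^{(k)}$ hold then $\boldsymbol{\beta}=0$, so $MB=0$ and hence $B=0$, i.e.\ $H_0$ holds. I do not expect a genuine obstacle here, since the claim amounts to the injectivity of the maps $B\mapsto MB$ and $\boldsymbol{\beta}\mapsto D\boldsymbol{\beta}$. The only subtle point is interpretive: $\beta_k$ must be taken as the population Cauchy limit (not an OLS coefficient), and the invertibility of $M$ is precisely the condition ruling out the degenerate sign configurations across predictors—such as sign-collinear regressors—under which a nonzero $B$ could produce vanishing univariate coefficients.
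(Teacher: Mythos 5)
Your proposal is correct and follows essentially the same route as the paper's proof: both identify the pseudo-true univariate Cauchy coefficients via the moment condition $\mathbb{E}[z_{t-1}y_t]=\operatorname{diag}(\mathbb{E}|x_{1,t-1}|,\dots,\mathbb{E}|x_{K,t-1}|)\,C$, substitute the multivariate model to get $\mathbb{E}[z_{t-1}y_t]=\mathbb{E}[z_{t-1}x_{t-1}']B$, and conclude from the positivity of the $\mathbb{E}|x_{k,t-1}|$ and the invertibility of $\mathbb{E}[z_{t-1}x_{t-1}']$. The only cosmetic difference is that you spell out both directions of the equivalence, whereas the paper treats the forward implication as immediate and proves only the converse.
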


Lemma~\ref{lemma-4-1} justifies the use of multiple hypothesis testing based on univariate Cauchy estimators.\footnote{See \citet{Harvey2015} for an application of the multiple-testing framework in predictive regressions, and \citet{KMS2015} for joint-predictability tests in the IVX framework. Note that the IVX approach may lose validity under heavy-tailed predictors or continuous-time data, whereas our method remains robust in such settings.} 
In conjunction with the hybrid test introduced in Section~3.2, we compute the statistic $\tau(\check{\beta}_k)$ for each parameter $\beta_k$, where $\check{\beta}_k$ denotes the corresponding Cauchy estimator. 
Let $p_k$ denote its $p$-value. 
The joint null hypothesis $H_0$ is rejected at level $\alpha$ if $\min_k p_k \le \alpha / K$, following the Bonferroni correction.

This approach directly extends the univariate robust inference procedure to a multivariate setting and requires only mild conditions for the equivalence between $H_0$ and $\{H_0^{(k)}\}_{k=1}^K$. 
The Bonferroni correction imposes no assumptions on the joint distribution of the test statistics, which motivates its use here (see \citeay{Holm1979}; \citeay{BenjaminiHochberg1995}; \citeay{Shaffer1995}).%\footnote{\textbf{See also \cite{liuxie2020} for a recent development on multiple hypothesis testing under arbitrary dependency structures between the individual $p$-values.}}

We also note that under some additional regularity conditions, the joint hypothesis can be tested directly using a Wald-type statistic:
\[
W = 
\left( \sum_{t=1}^T z_{t-1} y_t \right)' 
\left( \hat{\omega}^2 \sum_{t=1}^T z_{t-1} z_{t-1}' \right)^{-1}
\left( \sum_{t=1}^T z_{t-1} y_t \right).
\]
In particular, under $H_0$,
\[
\left(\hat{\omega}^2 \sum_{t=1}^T z_{t-1} z_{t-1}' \right)^{-1/2}
\left( \sum_{t=1}^T z_{t-1} y_t \right)
\to_d \mathbb{N}(0, I_K),
\]
and hence $W \to_d \chi_K^2$.\footnote{As mentioned earlier, $z_t$ can be interpreted as an instrument. Therefore, one may use an alternative instrument, as in \citet{shephard2020}, and construct a Wald-type test accordingly.} \textbf{We provide a detailed analysis of the Wald-type test in the Online Appendix, Section A.1.} 

\textbf{While both the Wald test and the Bonferroni-adjusted marginal tests evaluate the same null hypothesis, they are not equivalent procedures except under special cases such as asymptotic orthogonality of the regressors. The Wald test exploits the full covariance structure and yields an elliptical rejection region, whereas the Bonferroni procedure yields a rectangular region and controls family-wise error. Consequently, the procedures may differ in finite-sample and asymptotic power when regressors are correlated.}\footnote{\textbf{Again, our goal is not to design efficient procedures but to provide simple and robust methods that rely on minimal assumptions. We leave a systematic comparison between the Bonferroni-type multiple-testing procedure and the Wald-type joint test for future research.}}

%However, as with any Bonferroni-based procedure, the test may be conservative. Future work could explore less conservative multiple-testing adjustments to improve power.

\subsection{Predictive Regressions with an Intercept}

The analysis in Section~3 assumes that the intercept $\alpha=0$ in~\eqref{PredRegr1}. When $\alpha \neq 0$, it must be properly accounted for. A natural starting point is the demeaned model
\begin{align}
y_t - \bar{y}_T = \beta (x_{t-1} - \bar{x}_T) + u_t - \bar{u}_T, 
\qquad t = 1, \ldots, T, \label{eq-demean}
\end{align}
where $\bar{z}_s = s^{-1}\sum_{t=1}^s z_t$ for $z_t\in\{y_t, x_{t-1}, u_t\}$. 
However, $(u_t - \bar{u}_T)$ is not a martingale difference sequence (MDS) with respect to $(\mathcal{F}_t)$, invalidating the martingale CLT used in Sections~2 and~3. Specifically, the Cauchy estimator becomes
\[
\check{\beta} - \beta 
= \Biggl( \sum_{t=1}^T |x_{t-1} - \bar{x}_T| \Biggr)^{-1}
\sum_{t=1}^T \text{sign}(x_{t-1} - \bar{x}_T)(u_t - \bar{u}_T),
\]
which is problematic because: (i) $u_t - \bar{u}_T$ is not an MDS, and (ii) $\text{sign}(x_{t-1} - \bar{x}_T)$ is not $\mathcal{F}_{t-1}$-measurable. Thus, the theory in Section~3 is not directly applicable.\footnote{Recursive demeaning using $\bar{y}_t$ instead of $\bar{y}_T$ does not resolve this issue since $u_t - \bar{u}_t$ is not an MDS either.}

To restore the MDS property, we instead difference the model:
\[
y_t - y_{t-1} = \beta (x_{t-1} - x_{t-2}) + (u_t - u_{t-1}),
\]
and estimate this first-differenced (FD) model on alternating subsets of observations.\footnote{\textbf{Differencing $y_t$ to eliminate the nonzero mean is a widely used technique in the analysis of dynamic panel data with individual heterogeneity. See, among others, \cite{AndersonHsiao1981}; \cite{AndersonHsiao1982}; \cite{ArellanoBond1991}.}} We focus on the even-indexed observations and define the modified Cauchy estimator:
\[
\check{\beta}_e 
= (D_T^e)^{-1} \sum_{t=2}^{T/2} \text{sign}(x_{2t-2})(y_{2t} - y_{2t-1}),
\quad
D_T^e = \sum_{t=2}^{T/2} \text{sign}(x_{2t-2})(x_{2t-1} - x_{2t-2}).
\]

This estimator has two key properties.  
First, for even-indexed data, the regression error $u_t^e = u_{2t} - u_{2t-1}$ forms an MDS with respect to $\mathcal{F}_t^e := \mathcal{F}_{2t}$ for $t=1,\ldots,T/2$.\footnote{For odd-indexed data, $u_t^o = u_{2t+1} - u_{2t}$ forms an MDS with respect to $\mathcal{F}_t^o := \mathcal{F}_{2t+1}$, yielding an analogous estimator $\check{\beta}_o$.}  
Second, $\check{\beta}_e$ can again be viewed as an IV estimator, but it uses $\text{sign}(x_{t-2})$, which is $\mathcal{F}_{t-2}$-measurable, as the instrument.\footnote{More generally, one may use $\text{sign}\!\left(\sum_{l\le2}w_l x_{t-l}\right)$ for deterministic weights $\{w_l\}$, provided $ E[\text{sign}(\sum_{l\le2}w_l x_{t-l})(x_{t-1}-x_{t-2})]\neq0$.}  
Hence, $\text{sign}(x_{2t-2})(u_{2t}-u_{2t-1})$ is an MDS with respect to $(\mathcal{F}_t^e)$.

The inference procedures of Section~3 remain valid for $\check{\beta}_e$. In particular, the hybrid test in Section~3.2 can be implemented as
\begin{equation}\label{hybrid_constant}
\tau(\check{\beta}_e) = \frac{\check{\gamma}_e}{\hat{\omega}},
\end{equation}
where $\check{\gamma}_e = D_T^e \check{\beta}_e / \sqrt{T/2}$, and
\[
\hat{\omega}^2 = \frac{1}{T}\sum_{t=1}^T \hat{u}_t^2, 
\qquad
\hat{u}_t = (y_t - \bar{y}_T) - \hat{\beta}(x_{t-1}-\bar{x}_T),
\]
with $\hat{\beta}$ the OLS estimator from the demeaned model~\eqref{eq-demean}. Note that $\hat{\omega}^2$ is based on the full sample, whereas $\check{\beta}_e$ uses only even-indexed data. The asymptotic validity of this hybrid procedure is established next.

\begin{cor}\label{corollary-3-2}
Let Assumptions~\ref{assumption-mds}, \ref{assumption-1-2}, and \ref{assumption-2-1} hold, and suppose Assumption~\ref{assumption-ols} holds with $x_{t-1}$ replaced by $x_{t-1}-\bar{x}_T$.

(a) Under $H_0:\beta=0$,
\[
\tau(\check{\beta}_e)\to_d\mathbb{N}(0,1).
\]

(b) Under $H_A:\beta\neq0$,
\[
\tau(\check{\beta}_e)
= \beta\,\frac{\sum_{t=1}^{T/2}\text{sign}(x_{2t-2})(x_{2t-1}-x_{2t-2})}
{\omega\sqrt{T/2}}(1+o_p(1)) + O_p(1),
\]
so that $|\tau(\check{\beta}_e)|\to_p\infty$ whenever 
$\bigl|\sum_{t=1}^{T/2}\text{sign}(x_{2t-2})(x_{2t-1}-x_{2t-2})\bigr|/\sqrt{T/2}\to_p\infty$.
\end{cor}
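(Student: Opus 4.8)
The plan is to mirror the proof of Proposition~\ref{proposition-2-2}, replacing the level-based Cauchy numerator by its even-indexed first-differenced analogue and verifying that the martingale machinery of Section~2 transfers to the two-time-scale construction. Throughout I work on the coarse filtration $(\mathcal{F}_t^e)$ with $\mathcal{F}_t^e=\mathcal{F}_{2t}$, on which, as already noted in the text, $\xi_t^e:=\operatorname{sign}(x_{2t-2})(u_{2t}-u_{2t-1})$ is a martingale difference sequence. For part~(a), under $H_0$ both the intercept and the slope drop out of the differenced regression, so $\check{\gamma}_e=(T/2)^{-1/2}\sum_{t=2}^{T/2}\xi_t^e$ is a normalized MDS sum. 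The first step is a functional central limit theorem for this sum. I would split each increment along the fine time scale, writing $u_{2t}-u_{2t-1}=v_{2t}\varepsilon_{2t}-v_{2t-1}\varepsilon_{2t-1}$, and check that the resulting fine-scale increments $-\operatorname{sign}(x_{s-1})v_s\varepsilon_s$ (for odd $s$) and $\operatorname{sign}(x_{s-2})v_s\varepsilon_s$ (for even $s$) remain martingale differences whose instruments are measurable one fine step back. The conditional Lindeberg condition of Assumption~\ref{assumption-mds}(c) then transfers to these increments, and the predictable quadratic variation is governed by $\sum_s v_s^2$. Invoking the martingale FCLT (Theorem~18.2 of \citeauthor{billingsley1986convergence}), the joint convergence $(W^T,\sigma^T)\Rightarrow(W,\sigma)$ of Assumption~\ref{assumption-1-2}, the asymptotic exogeneity of Assumption~\ref{assumption-2-1}, and the stochastic-integral convergence results cited in Section~2 yields $\check{\gamma}_e\to_d\int_0^1\sigma_e(r)\,dW_e(r)$, a scale mixture of normals.

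The second step is to pin down the probability limit of the denominator. Since Assumption~\ref{assumption-ols} is imposed with $x_{t-1}$ replaced by $x_{t-1}-\bar{x}_T$, the demeaned OLS slope satisfies $|\hat{\beta}-\beta|=o_p(\sqrt{T/\sum_t(x_{t-1}-\bar{x}_T)^2})$. Expanding $\hat{u}_t=(u_t-\bar{u}_T)+(\beta-\hat{\beta})(x_{t-1}-\bar{x}_T)$ and applying Cauchy--Schwarz then shows that the estimation error is asymptotically negligible, so $\hat{\omega}^2\to_p\omega^2$; crucially, this argument is valid under both $H_0$ and $H_A$, which is exactly what part~(b) also requires. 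Combining the two steps through Slutsky's theorem delivers the asymptotic normality in~(a).

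For part~(b), I would decompose $\check{\gamma}_e=\zeta_e+\psi_e$, where $\zeta_e=\beta(T/2)^{-1/2}\sum_{t=2}^{T/2}\operatorname{sign}(x_{2t-2})(x_{2t-1}-x_{2t-2})$ is the signal term and $\psi_e=(T/2)^{-1/2}\sum_{t=2}^{T/2}\xi_t^e=O_p(1)$ is the noise term already controlled in part~(a). Dividing by $\hat{\omega}\to_p\omega$ produces the displayed representation, from which $|\tau(\check{\beta}_e)|\to_p\infty$ follows whenever the normalized signal $\bigl|\sum_{t=1}^{T/2}\operatorname{sign}(x_{2t-2})(x_{2t-1}-x_{2t-2})\bigr|/\sqrt{T/2}$ diverges.

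The main obstacle is the first step: transporting the single-time-scale FCLT and stochastic-integral limit of Section~2 to the paired, two-time-scale differenced innovations, and in particular calibrating the numerator's normalization to the probability limit of $\hat{\omega}^2$ so that the ratio is exactly standard normal under $H_0$. Because differencing combines two innovations per coarse period, the variance bookkeeping for $\check{\gamma}_e$ relative to the level-based estimator $\hat{\omega}^2$ is the delicate point and must be checked with care; once the scale of $\int_0^1\sigma_e(r)\,dW_e(r)$ is matched to $\omega^2$, everything else is a routine adaptation of the arguments behind Proposition~\ref{proposition-2-2}.
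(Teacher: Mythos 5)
Your proposal is correct and follows essentially the same route as the paper: the paper's proof likewise re-indexes $\sum_{t}\operatorname{sign}(x_{2t-2})(u_{2t}-u_{2t-1})$ as a fine-time-scale sum $\sum_{s=1}^{T}\xi_s$ with $\xi_s=\operatorname{sign}(x_{s-2})u_s$ for even $s$ and $\xi_s=-\operatorname{sign}(x_{s-1})u_s$ for odd $s$, notes that $(\xi_s)$ is an MDS with $\mathbb{E}(\xi_s^2\mid\mathcal{F}_{s-1})=v_s^2$, and invokes the martingale CLT together with $\hat{\omega}^2\to_p\omega^2$ obtained exactly as in Proposition~\ref{proposition-2-2}. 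The ``variance bookkeeping'' you flag as delicate is settled by your own observation that the predictable quadratic variation is $\sum_{s=1}^{T}v_s^2\approx T\omega^2$, so the numerator must be normalized by $\sqrt{T}$ over the full fine-scale sample (as in the paper's proof) for the ratio with $\hat{\omega}$ to be exactly standard normal.
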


Although the odd-indexed estimator $\check{\beta}_o$ has analogous properties, $\check{\beta}_e$ and $\check{\beta}_o$ are typically dependent, with the dependence structure determined by the DGP of $(x_t)$. Hence, unless additional assumptions are imposed, we restrict attention to a single subset of observations—either with even or odd indices.\footnote{Using only half of the data is not uncommon in predictive regressions. See, for example, \citet{ZhuCaiPeng2014} and \citet{Liu2019}, who employ long-lag differencing to eliminate intercepts. In addition, \citet{Dufour} uses a split-sample approach to address inference problems under a Markovian structure.}

Consistency of the hybrid test with an intercept requires
\[
\frac{1}{\sqrt{T/2}}\sum_{t=1}^{T/2}\operatorname{sign}(x_{2t-2})(x_{2t-1}-x_{2t-2})\to_p\infty.
\]
This holds for most stationary processes $(x_t)$ if
\[
 E[\operatorname{sign}(x_{t-1})x_t] \neq  E[|x_{t-1}|].
\]
The condition may fail for certain unit-root processes. For instance, for a random walk $x_t=x_{t-1}+\varepsilon_t^x$, it does not hold. More generally, in the local-to-unity model of \citet{Phillips_Magdalinos_2007book},
\begin{align}\label{localtounity}
x_t = \left(1+\frac{c}{T^\delta}\right)x_{t-1} + \varepsilon_t^x, 
\qquad c<0, \quad \delta\in[0,1],
\end{align}
with $\varepsilon_t^x$ satisfying Assumption~LP therein, the consistency condition becomes
\begin{align*}
\frac{1}{T^{1/2}}\sum_{t=1}^{T/2}\operatorname{sign}(x_{2t-2})(x_{2t-1}-x_{2t-2})
&= \frac{c}{T^{1/2+\delta}}\sum_{t=1}^{T/2}|x_{2t-2}|
  + \frac{1}{T^{1/2}}\sum_{t=1}^{T/2}\operatorname{sign}(x_{2t-2})\varepsilon_{2t-1}^x.
\end{align*}
The first term diverges if and only if $\delta < 1$ (see Lemma~3.2 of \citeauthor{Phillips_Magdalinos_2007book}, \citeyear{Phillips_Magdalinos_2007book}). Since $\varepsilon_t^x$ and $x_{t-1}$ may be dependent, it is generally the case that $ E[\operatorname{sign}(x_{2t-2})\varepsilon_{2t-1}^x] \neq 0$, implying that the second term also diverges. Consequently, inconsistency arises only when $\delta = 1$ and $ E[\operatorname{sign}(x_{2t-2})\varepsilon_{2t-1}^x] = 0$. In all other cases (i.e., $0 \le \delta < 1$ or when the expectation is nonzero), the test remains consistent.

\textbf{Moreover, the robustness to a nonzero intercept is not without cost. Specifically, the test based on $\tau(\check{\beta}_e)$ is less powerful than that based on $\tau(\check{\beta})$ when the intercept is zero. A detailed power comparison is provided in Section~A.2 of the Online Appendix.} 

\section{Finite Sample Performance}

This section investigates the finite-sample performance of the proposed methods.  
Two sets of simulation experiments are conducted.  
\textbf{The first set is based on a continuous-time model and compares our robust $t$-statistic–based tests, $t_q(\check{\gamma})$ for $q \in \{8,12,16\}$, and the hybrid test $\tau(\check{\beta})$ with the Cauchy RT test of \citet{CJP2016} and the Cauchy VC test of \citet{ibragimov2023new}. In this setting, the intercept is suppressed in the test statistics. The second set is based on a discrete-time predictive regression model and compares our procedures, $\tau(\check{\beta}_e)$ and $\tau(\check{\beta}_o)$, with the IVX test of \citet{KMS2015}. In this setting, the statistics are computed with the intercept included in the estimation.}\footnote{\textbf{The number of replications is 10,000 for each setting.}}  

\subsection{Continuous-Time Experiments}

\subsubsection{Simulation Design}

Following \citet{CJP2016} and \citet{ibragimov2023new}, we consider the continuous-time predictive regression model
\begin{align}
dY_t &= \beta X_t\,dt + dU_t, \label{MC1}\\
dX_t &= -\frac{\bar{\kappa}}{T} X_t\,dt + \sigma_t\,dV_t, 
\qquad 
dU_t = \sigma_t\left(dW_t + \int_{\mathbb{R}} x\,\Lambda(dt,dx)\right), \nonumber
\end{align}
where $V_t$ and $W_t$ are Brownian motions with $ E(V_t W_t) = -0.98t$.  
\textbf{The constant term in the predictive regression is set to zero, and recursive demeaning is applied as \citet{CJP2016}.}  
The model is observed at interval $\Delta = 1/252$, corresponding to daily observations, so that a sample of length $T$ years contains $252T$ observations.

The volatility process $\sigma_t$ follows one of the following specifications:
\begin{itemize}
    \item \textbf{CNST (Constant vvolatility):} $\sigma_t^2 = \sigma_0^2$, with $\sigma_0 = 1$.
    \item \textbf{SB (Structural Break):} $\sigma_t = \sigma_0 + (\sigma_1 - \sigma_0)1\{t/T \ge 4/5\}$, with $\sigma_0 = 1$ and $\sigma_1 = 4$.
    \item \textbf{RS (Regime Switching):} $\sigma_t = \sigma_0(1 - s_t) + \sigma_1 s_t$, 
    where $s_t$ is a two-state Markov chain independent of $(Y_t, X_t)$, 
    with transition matrix
    \[
    P_t =
    \begin{pmatrix}
    0.8 & 0.2 \\ 0.8 & 0.2
    \end{pmatrix}
    +
    \begin{pmatrix}
    0.2 & -0.2 \\ -0.8 & 0.8
    \end{pmatrix}
    \exp\!\left(-\frac{\bar{\lambda}}{T}t\right),
    \]
    initialized at its invariant distribution, where $\bar{\lambda} = 60$, $\sigma_0 = 1$, and $\sigma_1 = 4$.
    \item \textbf{GBM (Geometric Brownian Motion):} 
    $d\sigma_t^2 = \frac{1}{2}\frac{\bar{\omega}^2}{T}\sigma_t^2\,dt 
    + \frac{\bar{\omega}^2}{\sqrt{T}}\sigma_t^2\,dZ_t$, 
    where $Z_t$ is a Brownian motion correlated with $W_t$ such that $ E(W_t Z_t) = -0.4t$ and $\bar{\omega} = 9$.
\end{itemize}

\textbf{Following \cite{CJP2016}, we set $T \in \{20, 50, 100\}$ (corresponding to 240, 600, and 1200 monthly observations)} and $\bar{\kappa} \in \{0, 5, 10\}$ for the persistence parameter in~\eqref{MC1}. We consider a two-sided test of $H_0: \beta = 0$ against $H_A: \beta \neq 0$.

\subsubsection{Results}

We first assess the empirical size of each test under $H_0: \beta = 0$.  
The results for the four volatility models (CNST, SB, RS, and GBM) are reported in Table~\ref{tab1}.  Overall, both the $t$-statistic–based tests and the hybrid method exhibit satisfactory size performance, closely matching the nominal levels and performing comparably to the Cauchy RT and Cauchy VC tests.  
Among the $t$-based procedures, moderate partition numbers ($q = 12$ or $16$) provide the most stable results, whereas smaller $q$ values tend to be mildly undersized.  
In the GBM case, where volatility is endogenously persistent, the $t$-statistic–based tests become slightly conservative but remain competitive with the Cauchy RT and VC methods.

Next, we analyze the finite-sample power properties of the tests.  
We consider $\beta \in \{0.004 k, k=1,\cdots,5\}$ under the same volatility specifications.  
The results are summarized in Tables~\ref{tab2}–\ref{tab5}.  The proposed tests exhibit power comparable to that of the Cauchy RT and Cauchy VC procedures.  
For small samples ($T = 20$), the Cauchy RT and VC tests occasionally show higher power, but the difference diminishes as $T$ increases.  
In certain settings, our methods even outperform the existing approaches.  
For instance, $t_{16}(\check{\gamma})$ dominates under $\beta = 0.02$, $\bar{\kappa} = 0.5$, and regime switching volatility (Table~\ref{tab4}), whereas the hybrid test $\tau(\check{\beta})$ performs best under $\beta = 0.004$, $\bar{\kappa} = 20$, $T = 20$, and regime switching volatility.

In summary, all four robust inference procedures—Cauchy RT, Cauchy VC, $t_q(\check{\gamma})$, and $\tau(\check{\beta})$—deliver accurate size control and strong discriminatory power under endogenously persistent regressors and persistent volatility.
While the Cauchy RT requires high-frequency data and a time transformation, and the Cauchy VC involves nonparametric volatility filtering with a tuning parameter, our proposed $t$-statistic and hybrid methods are much simpler to implement and require neither.
Hence, these approaches are best viewed as complementary: the Cauchy RT and Cauchy VC are preferable in high-frequency environments, whereas our procedures provide robust and easily implementable alternatives in more general settings.
It is also worth emphasizing that the proposed methods, like the Cauchy VC, are applicable to both continuous- and discrete-time models, whereas the Cauchy RT method is restricted to the continuous-time framework.

\subsection{Discrete-Time Experiments}

\subsubsection{Simulation Design}

We now examine the finite-sample performance of the proposed tests in a discrete-time setting with an intercept, comparing them to the IVX test of \citet{KMS2015}.  
The DGP is specified as
\begin{align}
y_t &= \beta x_{t-1} + \sigma_t \varepsilon_t, \label{MC2} \\
x_t &= \left(1 - \frac{\bar{\kappa}}{T}\right) x_{t-1} + \sigma_t \eta_t, \nonumber
\end{align}
for $t = 2, \ldots, 12T$, where \textbf{$T \in \{20, 50, 100\}$ corresponds to 20, 50, and 100 years of monthly data (i.e., 240, 600, and 1200 monthly observations, respectively). The constant term in the predictive regression is set to zero.}
We set $\beta \in \{0.5k : k = 0, 1, \ldots, 5\}$ and $\bar{\kappa} \in \{0, 50, 100\}$, and consider a one-sided test of $H_0\!: \beta = 0$ against $H_A\!: \beta > 0$.\footnote{The IVX test of \citet{KMS2015} performs well in two-sided testing for a broad class of models with constant volatility. However, as shown in \citet{DEMETRESCU2023105271}, the IVX method exhibits severe size distortions in one-sided tests when regressors are highly persistent and endogenous. For this reason, we focus on the one-sided case to highlight the performance of our methods in this setting. \textbf{Simulations for two-sided and joint hypothesis tests are also reported in Section B of the Online Appendix. Notably, while the IVX approach demonstrates reliable size control under constant volatility, it exhibits significant size distortions in the presence of non-constant volatility. In contrast, our methods remain robust to such heteroskedasticity.}}

The innovation process $\eta_t$ follows an MA($q$) process:
\begin{align}
\eta_t = \sum_{j=0}^q C_j v_{t-j},\label{MC3}
\end{align}
where $(\varepsilon_t, v_t)$ are jointly normal with correlation $-0.98$. \textbf{In our simulations, we consider an MA(1) process with $C_0 = C_1 = 1/\sqrt{2}$.}\footnote{\textbf{Additional simulation results with an MA(3) process are reported in the Online Appendix, Section~B.1. Moreover, as discussed in Section~4.2, the intercept-robust version of our test, $\tau(\check{\beta}_e)$, becomes inconsistent when $x_t$ follows the local-to-unity process \eqref{localtounity} with $\delta = 1$ and $\eta_t$ is serially independent. For this reason, we do not consider i.i.d.\ $\eta_t$ in our simulations. See the Online Appendix, Section~A.2, for a detailed theoretical power analysis of the test statistic $\tau(\check{\beta}_e)$.}} The volatility process $\sigma_t$ follows the same specifications as in the continuous-time simulations, except that the GBM model is excluded.

We implement the hybrid tests based on the even and odd observations, denoted by $\tau(\check{\beta}_e)$ and $\tau(\check{\beta}_o)$, respectively (see \eqref{hybrid_constant}). For comparison, we also include the IVX test of \citet{KMS2015}, \textbf{in which the intercept is included in the estimation.}

\subsubsection{Results}

The results, summarized in Tables~\ref{tab6}–\ref{tab10}, indicate that the proposed tests exhibit excellent size control under the null hypothesis across all DGPs, whereas the IVX test is substantially oversized, particularly when volatility is nonstationary or exhibits structural breaks. Furthermore, the hybrid approaches demonstrate nontrivial power, even though they are constructed using only half of the observations. %Among the size-controlled procedures, the statistic $t_{16}(\tau(\check{\beta}_o))$ consistently delivers the strongest performance.

Overall, these findings corroborate the theoretical robustness of our methods. They remain valid under heavy-tailed, endogenous, and persistent regressors, as well as under heteroskedastic and persistent volatility. In contrast, the IVX test exhibits severe size distortions in one-sided tests across all cases, including the constant volatility setting, as documented by \citet{DEMETRESCU2023105271}. 

We also conduct two-sided tests and report the results in the Online Appendix (Section~B.2). In this case, the IVX method exhibits good size performance under constant volatility but suffers from severe size distortions in settings with time-varying volatility, whereas our methods maintain accurate size control across all specifications.

Taken together, these results suggest that the proposed robust procedures provide a practical and reliable alternative to existing inference methods for predictive regressions in both continuous- and discrete-time frameworks.

\section{Empirical Application}

To illustrate the empirical performance of the proposed tests relative to the Cauchy RT and Cauchy VC tests, we reexamine the dataset used by \citet{CJP2016} to test the predictability of stock returns using the dividend--price (D/P) and earnings--price (E/P) ratios as predictors.\footnote{\textbf{Further empirical results, including intercept-robust estimations and Wald-type tests compared to IVX \citep{KMS2015}, are detailed in the Online Appendix C.}} For stock returns, we employ the NYSE/AMEX value-weighted index and the S\&P~500 index obtained from the Center for Research in Security Prices (CRSP). The dividend--price ratio is defined as the annual dividend divided by the current total market value. Further details on data construction are provided in Section~6.1 of \citet{CJP2016}.  

Following \citet{CJP2016}, we estimate two types of predictive regressions: one based on all returns and another based only on returns generated from the diffusive component of stock prices, obtained by first testing for jumps and removing observations corresponding to detected jumps. In all cases, we apply one-sided tests.  

The results are reported in Table~\ref{tab12}. As shown in Panels~C and~D, none of the tests reject the null hypothesis of unpredictability for the S\&P~500 data when the E/P ratio is used as a predictor. By contrast, when the D/P ratio serves as a predictor, the proposed tests---$t_q(\check{\gamma})$ with $q=12,16$ and $\tau(\check{\beta})$---reject the null of unpredictability for several cases: CRSP (yearly without jump removal; quarterly with jump removal) and S\&P~500 (quarterly and yearly without jump removal; yearly with jump removal). In contrast, the Cauchy RT test fails to reject the null in all cases, while the Cauchy VC test yields qualitatively similar conclusions to our proposed tests, except that it additionally rejects the null for CRSP (monthly with jump removal) and S\&P~500 (monthly with jump removal; quarterly with jump removal).  

Consistent with our simulation evidence, the Cauchy RT test demonstrates strong finite-sample power but requires high-frequency data due to its reliance on a continuous-time approximation.\footnote{For the Cauchy RT test in our simulations, we estimate the discretized time-changed regression using $n$ lower-frequency observations, with $n = 12T$ (approximately monthly), generated by the random time-sampling scheme described in Section~5 of \citet{CJP2016}.} The mixed empirical results---where the Cauchy RT test fails to reject the null while both the proposed methods and the Cauchy VC test do reject---may reflect the limited accuracy of the continuous-time approximation when applied to monthly, quarterly, or yearly data. Evaluating the robustness of the continuous-time approximation underlying the Cauchy RT test remains an interesting topic for future research.

\section{Conclusion}

This paper introduces two robust inference methods for predictive regressions, addressing key econometric challenges commonly encountered in empirical finance, such as endogenously persistent or heavy-tailed regressors and persistent volatility in errors. Building on the Cauchy estimation framework, we develop two simple yet theoretically rigorous procedures: a $t$-statistic–based approach and a hybrid method. Both methods are computationally straightforward and applicable to continuous- and discrete-time models alike.

Simulation evidence demonstrates that the proposed tests perform well in finite samples, maintaining correct size and competitive power under a wide range of data-generating processes, including those characterized by stochastic volatility, structural breaks, and regime switching. Although our procedures require the assumption of asymptotically exogenous volatility, they exhibit excellent robustness and complement existing methods, including the IVX method of \cite{KMS2015}, the Cauchy RT test of \citet{CJP2016}, and the Cauchy VC test of \citet{ibragimov2023new}.

In an empirical application to stock return predictability, we use the dividend--price and earnings--price ratios as predictors for excess returns on major U.S.\ equity indices. The results indicate that the dividend--price ratio possesses predictive power, while the earnings--price ratio does not significantly forecast returns. Overall, the proposed inference procedures offer a practical, theoretically sound, and implementable alternative to existing methods for robust inference in predictive regressions.

%\newpage
\appendix
\renewcommand{\thesection}{\Alph{section}}

\renewcommand{\theequation}{A.\arabic{equation}}
\setcounter{section}{0}
\setcounter{equation}{1}

\section*{Appendix: Proofs}

\begin{proof}[Proof of Lemma~\ref{lemma-2-1}]
For $j = 1, \dots, q$, we have
\[
\check{\gamma}_j - \zeta_j = \sqrt{\frac{q}{T}} \sum_{t = (j-1)[T/q] + 1}^{j[T/q]} \operatorname{sign}(x_{t-1}) u_t 
= \sqrt{q} \int_{(j-1)/q}^{j/q} \sigma^T(r) \, dW^T(r).
\]
The stated result follows immediately from Assumptions~\ref{assumption-1-2} and~\ref{assumption-2-1}.
\end{proof}

\begin{proof}[Proof of Proposition~\ref{proposition-2-1}]
Part~(a) follows from Theorem~1 and the discussion in Section~2.2 of \citet{IbragimovMuller2010}.  
For part~(b), we deduce from Lemma~\ref{lemma-2-1} that
\begin{align}\label{proof-eq1}
\frac{1}{\sqrt{T/q}} \check{\gamma}_j 
= \frac{1}{\sqrt{T/q}} \zeta_j + o_p\!\left(\frac{1}{\sqrt{T}}\right)
\to_p \beta \,  E|x_t|    
\end{align}
uniformly in~$j$, under $\beta \neq 0$.  
Recall that
\[
t_q(\check{\gamma}) = \frac{\sqrt{q}\,\overline{\gamma}}{s_{\gamma}},
\quad
\text{with } 
\overline{\gamma} = \frac{1}{q}\sum_{j=1}^q \check{\gamma}_j, 
\quad
s_{\gamma}^2 = \frac{1}{q-1} \sum_{j=1}^q (\check{\gamma}_j - \overline{\gamma})^2.
\]
Hence, the numerator of $t_q(\check{\gamma})$ satisfies
\begin{align}\label{proof-eq2}
\frac{\sqrt{q}\,\overline{\gamma}}{\sqrt{T}} \to_p \beta \,  E|x_t|.
\end{align}
To complete the proof, it suffices to show that $s_{\gamma}^2 = o_p(T)$. Indeed, for $q \ge 2$, we have
\[
\frac{q(q-1)}{T} s_{\gamma}^2 
= \frac{1}{T/q}\sum_{j=1}^q (\check{\gamma}_j - \overline{\gamma})^2 
\to_p 0
\]
due to \eqref{proof-eq1} and \eqref{proof-eq2}, which completes the proof.
\end{proof}

\begin{proof}[Proof of Corollary~\ref{corollary-2-1}]
We aim to show that
\begin{align}\label{proof-eq3}
\frac{\sqrt{q}}{T} \overline{\gamma} \to_d \beta \int_0^1 |X(r)|\,dr,
\end{align}
and
\begin{align}\label{proof-eq4}
\frac{q(q-1)}{T^2} s_{\gamma}^2 
= \frac{q}{T^2}\sum_{j=1}^q (\check{\gamma}_j - \overline{\gamma})^2
\to_d \beta^2 \sum_{j=1}^q 
\left( q \int_{(j-1)/q}^{j/q} |X(r)|\,dr - \int_0^1 |X(r)|\,dr \right)^2.
\end{align}

For \eqref{proof-eq3}, we have
\begin{align*}
\frac{\sqrt{q}}{T} \check{\gamma}_j 
&= \frac{\sqrt{q}}{T} \zeta_j + o_p\!\left(\frac{1}{T}\right)
= \beta \frac{q}{T^{3/2}} \sum_{t = (j-1)[T/q] + 1}^{j[T/q]} |x_{t-1}| + o_p(1)
\to_d \beta q \int_{(j-1)/q}^{j/q} |X(r)|\,dr,
\end{align*}
by Lemma~\ref{lemma-2-1} and~\eqref{fclt}, leading to \eqref{proof-eq3}.  
Moreover,
\[
\frac{q}{T^2}(\check{\gamma}_j - \overline{\gamma})^2 
\to_d \beta^2 \left( q \int_{(j-1)/q}^{j/q} |X(r)|\,dr - \int_0^1 |X(r)|\,dr \right)^2,
\]
which yields \eqref{proof-eq4}.

Combining \eqref{proof-eq3} and \eqref{proof-eq4}, we obtain
\[
t_q(\check{\gamma}) = \frac{\sqrt{q}\,\overline{\gamma}}{s_\gamma} \to_d  \operatorname{sign}(\beta)\mathcal{D}_q,
\]
where 
\[
\mathcal{D}_q 
= \int_0^1 |X(r)|\,dr 
\left( 
\frac{q(q-1)}{\sum_{j=1}^q 
(\int_0^1 |X(r)|\,dr - q\int_{(j-1)/q}^{j/q} |X(r)|\,dr)^2}
\right)^{1/2},
\]
as desired.  
Furthermore, $(q-1)^{1/2}\mathcal{D}_q > 1$ for $q \ge 2$ with probability one, since
\begin{align*}
(q-1)^{1/2}\mathcal{D}_q
&= (q-1) \int_0^1 |X(r)|\,dr 
\left(
\frac{q}{\sum_{j=1}^q (\int_0^1 |X(r)|\,dr - q\int_{(j-1)/q}^{j/q} |X(r)|\,dr)^2}
\right)^{1/2} \\
&> \frac{(q-1)\int_0^1 |X(r)|\,dr}{
\max_{1 \le j \le q}
\left| \int_0^1 |X(r)|\,dr - q\int_{(j-1)/q}^{j/q} |X(r)|\,dr \right|
} > 1,
\end{align*}
which completes the proof.
\end{proof}

\begin{proof}[Proof of Proposition~\ref{proposition-2-2}]
It suffices to show that $\hat{\omega}^2 \to_p \omega^2$, since this implies
\[
\tau(\check{\beta}) 
= \frac{\check{\gamma}}{\omega}(1 + o_p(1))
= \left( \beta \frac{\sum_{t=1}^T |x_{t-1}|}{\omega\sqrt{T}} 
+ \frac{\sum_{t=1}^T \operatorname{sign}(x_{t-1})u_t}{\omega\sqrt{T}} \right)(1 + o_p(1)),
\]
where, in particular,
\[
\frac{\sum_{t=1}^T \operatorname{sign}(x_{t-1})u_t}{\omega\sqrt{T}} \to_d \mathbb{N}(0,1)
\]
under Assumptions~\ref{assumption-mds}, \ref{assumption-1-2}, and~\ref{assumption-2-1}.  

Let $\omega_T^2 = T^{-1} \sum_{t=1}^T u_t^2$.  
Then $\omega_T^2 \to_p \omega^2$ by Assumptions~\ref{assumption-mds} and~\ref{assumption-1-2}.  
Furthermore, 
\[
\hat{\omega}^2 
= \omega_T^2 - \frac{1}{T} \frac{(\sum_{t=1}^T x_{t-1}u_t)^2}{\sum_{t=1}^T x_{t-1}^2} 
= \omega_T^2 + o_p(1)
\]
by Assumption~\ref{assumption-ols}, which gives the desired result.
\end{proof}

\begin{proof}[Proof of Lemma~\ref{lemma-4-1}]
We need only show that $H_0$ holds if $H_0^{(k)}$ holds for all~$k$.  
Let $C = [\beta_1, \ldots, \beta_K]'$.  
By the moment restrictions, $C$ is the solution to
\[
 E[z_{t-1} y_t] 
=  E\!\left[
\operatorname{diag}(|x_{1,t-1}|, \ldots, |x_{K,t-1}|)
\right] C.
\]
Given $0 <  E[|x_{k,t-1}|] < \infty$ for all~$k$, $H_0^{(k)}$ holds for all~$k$ if and only if $ E[z_{t-1} y_t] = \mathbf{0}_{K \times 1}$.

Moreover,
\[
 E[z_{t-1} y_t] =  E[z_{t-1} x_{t-1}'] B,
\]
and since $ E[z_{t-1} x_{t-1}']$ is assumed invertible, the condition $ E[z_{t-1} y_t] = \mathbf{0}_{K \times 1}$ implies $B = 0$, completing the proof.
\end{proof}

\begin{proof}[Proof of Corollary~\ref{corollary-3-2}]
Under the modified version of Assumption~\ref{assumption-ols} stated in the corollary, it follows from arguments analogous to those in the proof of Proposition~\ref{proposition-2-2} that $\hat{\omega}^2 \to_p \omega^2$.  
Hence, it suffices to show that
\[
\frac{1}{\omega\sqrt{T}} \sum_{t=1}^{T/2} 
\operatorname{sign}(x_{2t-2}) (u_{2t} - u_{2t-1}) \to_d \mathbb{N}(0,1).
\]

Define $\xi_t = \operatorname{sign}(x_{t-2})u_t$ for even~$t$ and $\xi_t = -\operatorname{sign}(x_{t-1})u_t$ for odd~$t$, so that
\[
\frac{1}{\omega\sqrt{T}} \sum_{t=1}^{T/2} \operatorname{sign}(x_{2t-2})(u_{2t} - u_{2t-1})
= \frac{1}{\omega\sqrt{T}} \sum_{t=1}^T \xi_t.
\]
By construction, $(\xi_t)$ is an MDS with respect to $(\mathcal{F}_t)$ and satisfies $ E(\xi_t^2 \mid \mathcal{F}_{t-1}) = v_t^2$ under Assumption~\ref{assumption-mds}.  
The desired convergence then follows directly from the martingale central limit theorem, given Assumptions~\ref{assumption-1-2} and~\ref{assumption-2-1}.
\end{proof}

\newpage

\newcolumntype{C}{>{\centering\arraybackslash}X}

\begin{table}[h!]
\begin{center}
\footnotesize
\caption{Size for Continuous-Time Models\label{tab1}}

\begin{tabularx}{0.67\textwidth}{lcCCCCCCCCC} \toprule
&&\multicolumn{3}{c}{$\bar{\kappa}=0$}&\multicolumn{3}{c}{$\bar{\kappa}=5$}&\multicolumn{3}{c}{$\bar{\kappa}=20$}\\
\cmidrule(r){3-5}\cmidrule(r){6-8}\cmidrule(r){9-11}
T&&20&50&100&20&50&100&20&50&100\\\hline
CNST&Cauchy RT&4.9&5.3&5.0&5.4&4.7&5.0&5.1&5.1&5.1\\
&Cauchy VC&5.0&5.3&4.9&5.0&5.1&5.0&5.0&4.8&5.0\\
&$t_8(\check{\gamma})$&4.6&5.2&4.9&5.0&5.1&5.4&4.9&5.0&5.2\\
&$t_{12}(\check{\gamma})$&4.7&5.0&5.1&5.1&4.9&5.4&5.2&4.9&5.0\\
&$t_{16}(\check{\gamma})$&4.8&5.1&5.0&5.0&4.9&5.3&4.9&4.9&5.1\\
&$\tau(\check{\beta})$&4.5&5.2&5.0&4.8&5.1&5.2&4.9&4.8&5.2\\\hline
SB&Cauchy RT&5.0&5.1&5.8&5.3&5.0&5.9&5.0&4.9&5.9\\
&Cauchy VC&6.7&6.3&4.7&6.5&6.0&4.9&6.4&6.0&5.0\\
&$t_8(\check{\gamma})$&3.7&3.9&3.6&4.1&3.7&4.1&3.7&3.7&3.9\\
&$t_{12}(\check{\gamma})$&4.2&4.6&4.2&4.6&4.5&4.3&4.2&4.5&4.1\\
&$t_{16}(\check{\gamma})$&4.6&4.6&4.6&4.7&4.5&4.7&4.6&4.5&4.8\\
&$\tau(\check{\beta})$&5.0&4.9&4.9&5.5&5.1&5.0&5.5&5.0&4.8\\\hline
RS&Cauchy RT&4.8&5.2&6.6&4.9&4.9&6.5&5.1&4.8&6.2\\
&Cauchy VC&5.4&6.1&5.1&5.1&5.8&5.2&5.8&5.8&4.7\\
&$t_8(\check{\gamma})$&4.5&5.1&4.6&4.4&4.9&5.1&5.3&4.6&4.5\\
&$t_{12}(\check{\gamma})$&4.6&5.0&4.7&4.8&4.8&5.3&4.9&4.7&5.1\\
&$t_{16}(\check{\gamma})$&4.4&4.9&4.7&4.6&4.9&5.2&5.1&4.5&5.0\\
&$\tau(\check{\beta})$&4.9&5.2&4.9&4.7&4.9&5.1&5.3&5.0&5.3\\\hline
GBM&Cauchy RT&4.7&4.4&6.6&4.5&4.4&6.5&4.5&4.5&6.4\\
&Cauchy VC&5.5&6.1&4.4&5.7&5.9&4.8&5.9&6.5&5.0\\
&$t_8(\check{\gamma})$&3.1&3.2&3.2&3.8&3.4&3.5&3.6&3.9&3.9\\
&$t_{12}(\check{\gamma})$&3.3&3.7&3.8&4.3&3.6&4.0&4.2&4.3&4.4\\
&$t_{16}(\check{\gamma})$&3.8&4.1&4.1&4.2&4.0&4.2&4.5&4.6&4.3\\
&$\tau(\check{\beta})$&4.6&5.0&4.5&4.8&4.9&4.8&4.9&5.2&5.1\\
 \bottomrule\smallskip
\end{tabularx}
\end{center}
\scriptsize{
This table reports the rejection rates of the Cauchy RT, Cauchy VC, $t_q(\check{\gamma})$ for $q \in \{8,12,16\}$, and $\tau(\check{\beta})$ tests based on simulated data under $\beta = 0$ (two-sided tests). The data are generated from the continuous-time model \eqref{MC1} with persistence parameter $\bar{\kappa} \in \{0,5,20\}$. The model is sampled at \(\Delta = 1/252\) (daily), so that a sample of length \(T\) years contains \(252T\) observations, with \(T \in \{20,50,100\}\). CNST, SB, GBM, and RS denote constant volatility, structural breaks, geometric Brownian motion, and regime switching, respectively.}
\end{table}

\newpage

\begin{table}[h!]
\begin{center}
\footnotesize
\caption{Power for Continuous-Time Models with Constant Volatility\label{tab2}}

\begin{tabularx}{0.84\textwidth}{ccCCCCCCCCC} \toprule
&&\multicolumn{3}{c}{$\bar{\kappa}=0$}&\multicolumn{3}{c}{$\bar{\kappa}=5$}&\multicolumn{3}{c}{$\bar{\kappa}=20$}\\
\cmidrule(r){3-5}\cmidrule(r){6-8}\cmidrule(r){9-11}
T&&20&50&100&20&50&100&20&50&100\\\hline
$\beta=0.004$&Cauchy RT&8.8&25.2&85.4&6.1&9.9&22.7&6.3&13.2&27.9\\
&Cauchy VC&9.1&25.6&86.2&6.5&10.0&23.5&8.2&13.4&26.8\\
&$t_8(\check{\gamma})$&8.0&21.0&79.5&6.7&11.2&24.6&5.1&14.6&30.6\\
&$t_{12}(\check{\gamma})$&9.5&21.4&82.0&6.2&10.5&25.4&5.9&14.3&31.0\\
&$t_{16}(\check{\gamma})$&8.8&22.8&84.0&5.5&10.0&23.3&5.5&14.5&29.9\\
&$\tau(\check{\beta})$&6.4&23.5&66.6&5.3&8.6&17.5&5.9&12.9&24.3\\\hline
$\beta=0.008$&Cauchy RT&15.5&85.5&100.0&9.2&22.1&83.9&6.6&27.4&85.6\\
&Cauchy VC&17.2&86.2&100.0&8.8&23.1&84.3&11.9&27.7&84.2\\
&$t_8(\check{\gamma})$&13.8&79.9&100.0&9.1&23.5&86.7&6.0&32.5&89.2\\
&$t_{12}(\check{\gamma})$&15.0&83.4&100.0&8.9&25.5&86.8&6.5&31.7&92.0\\
&$t_{16}(\check{\gamma})$&15.0&83.7&100.0&8.3&23.7&87.0&6.1&31.9&91.7\\
&$\tau(\check{\beta})$&14.0&66.7&96.6&7.3&18.8&57.6&8.5&27.6&70.0\\\hline
$\beta=0.012$&Cauchy RT&37.3&99.2&100.0&12.2&50.1&100.0&7.7&52.2&100.0\\
&Cauchy VC&40.2&99.3&100.0&12.6&51.9&100.0&16.5&56.3&100.0\\
&$t_8(\check{\gamma})$&30.3&98.0&100.0&12.9&55.5&99.8&6.9&60.6&100.0\\
&$t_{12}(\check{\gamma})$&33.1&99.5&100.0&12.0&56.3&100.0&7.2&61.6&100.0\\
&$t_{16}(\check{\gamma})$&35.3&98.9&100.0&12.3&54.1&100.0&6.6&59.2&100.0\\
&$\tau(\check{\beta})$&32.4&87.7&99.1&9.7&34.0&93.5&12.1&48.9&99.0\\\hline
$\beta=0.016$&Cauchy RT&67.0&100.0&100.0&17.0&84.4&100.0&8.4&83.2&100.0\\
&Cauchy VC&68.8&100.0&100.0&16.6&84.7&100.0&22.4&83.3&100.0\\
&$t_8(\check{\gamma})$&58.8&99.7&100.0&18.1&88.7&100.0&7.3&88.7&100.0\\
&$t_{12}(\check{\gamma})$&60.7&100.0&100.0&16.8&86.6&100.0&7.6&89.7&100.0\\
&$t_{16}(\check{\gamma})$&62.4&99.9&100.0&16.7&86.9&100.0&7.0&89.2&100.0\\
&$\tau(\check{\beta})$&51.2&95.8&99.9&12.4&58.5&99.0&17.8&72.9&100.0\\\hline
$\beta=0.02$&Cauchy RT&86.7&100.0&100.0&23.2&98.0&100.0&9.3&98.2&100.0\\
&Cauchy VC&87.9&100.0&100.0&24.4&98.5&100.0&29.4&97.4&100.0\\
&$t_8(\check{\gamma})$&79.3&100.0&100.0&24.6&97.8&100.0&8.5&98.3&100.0\\
&$t_{12}(\check{\gamma})$&81.9&100.0&100.0&23.8&99.1&100.0&8.5&99.1&100.0\\
&$t_{16}(\check{\gamma})$&82.3&100.0&100.0&23.2&98.3&100.0&7.7&99.2&100.0\\
&$\tau(\check{\beta})$&65.2&97.5&100.0&16.3&81.3&99.6&23.8&92.3&100.0\\
 \bottomrule\smallskip
\end{tabularx}
\end{center}
\scriptsize{
This table reports the rejection rates of the Cauchy RT, Cauchy VC, $t_q(\check{\gamma})$ for $q \in \{8,12,16\}$, and $\tau(\check{\beta})$ tests based on simulated data under $\beta \in \{0.004, 0.008, 0.012, 0,016, 0.02\}$  (two-sided tests). The data are generated from the continuous-time model \eqref{MC1} with persistence parameter $\bar{\kappa} \in \{0,5,20\}$. The model is sampled at \(\Delta = 1/252\) (daily), so that a sample of length \(T\) years contains \(252T\) observations, with \(T \in \{20,50,100\}\). The volatility is constant.}

\end{table}

\newpage

\begin{table}[h!]
\begin{center}
\footnotesize
\caption{Power for Continuous-Time Models with Structural Breaks in Volatility\label{tab3}}

\begin{tabularx}{0.84\textwidth}{ccCCCCCCCCC} \toprule
&&\multicolumn{3}{c}{$\bar{\kappa}=0$}&\multicolumn{3}{c}{$\bar{\kappa}=5$}&\multicolumn{3}{c}{$\bar{\kappa}=20$}\\
\cmidrule(r){3-5}\cmidrule(r){6-8}\cmidrule(r){9-11}
T&&20&50&100&20&50&100&20&50&100\\\hline
$\beta=0.004$&Cauchy RT&8.7&20.0&67.7&8.3&11.6&35.9&8.7&17.2&52.4\\
&Cauchy VC&8.5&20.6&57.6&8.9&10.5&19.2&8.2&14.6&25.3\\
&$t_8(\check{\gamma})$&5.6&11.3&45.6&4.4&8.7&15.7&6.4&8.4&23.3\\
&$t_{12}(\check{\gamma})$&6.6&14.8&45.3&6.7&9.1&17.5&6.2&10.5&22.7\\
&$t_{16}(\check{\gamma})$&8.2&14.6&44.1&6.9&8.9&19.5&6.6&10.1&22.9\\
&$\tau(\check{\beta})$&5.8&13.7&31.4&7.1&8.1&13.5&5.8&10.2&16.7\\\hline
$\beta=0.008$&Cauchy RT&14.4&65.1&96.6&11.7&35.7&88.4&14.5&52.2&99.6\\
&Cauchy VC&14.9&60.6&95.6&10.8&21.6&55.8&11.5&30.1&72.9\\
&$t_8(\check{\gamma})$&9.5&44.1&88.5&6.8&16.3&57.4&9.0&21.0&68.3\\
&$t_{12}(\check{\gamma})$&11.4&45.6&89.8&8.9&17.9&55.9&9.2&23.6&69.4\\
&$t_{16}(\check{\gamma})$&12.7&42.3&92.1&8.4&18.3&55.9&8.8&22.2&72.2\\
&$\tau(\check{\beta})$&8.9&32.4&81.9&8.9&14.0&34.3&7.9&16.6&49.7\\\hline
$\beta=0.012$&Cauchy RT&32.1&87.0&99.3&15.5&68.6&98.8&24.2&90.1&100.0\\
&Cauchy VC&26.1&83.1&99.2&13.5&36.0&91.1&15.5&50.3&98.8\\
&$t_8(\check{\gamma})$&16.9&72.9&97.0&9.2&32.7&89.0&12.0&43.1&95.2\\
&$t_{12}(\check{\gamma})$&20.7&74.5&99.1&11.9&34.5&90.0&12.2&41.7&98.0\\
&$t_{16}(\check{\gamma})$&20.0&74.3&99.6&10.0&31.9&90.7&12.6&41.4&98.8\\
&$\tau(\check{\beta})$&14.7&60.2&98.1&10.8&23.4&73.7&10.8&33.2&92.1\\\hline
$\beta=0.016$&Cauchy RT&50.2&95.6&99.7&22.8&89.8&100.0&37.7&99.7&100.0\\
&Cauchy VC&42.9&95.1&99.6&17.0&58.9&99.5&20.7&74.5&100.0\\
&$t_8(\check{\gamma})$&30.2&88.8&98.5&12.6&56.2&96.1&16.6&67.4&99.8\\
&$t_{12}(\check{\gamma})$&31.7&90.4&99.7&14.7&53.8&98.4&17.0&65.0&100.0\\
&$t_{16}(\check{\gamma})$&31.8&91.1&100.0&13.0&54.4&99.0&18.5&66.6&100.0\\
&$\tau(\check{\beta})$&23.2&81.4&99.8&13.0&35.6&96.9&14.3&51.8&100.0\\\hline
$\beta=0.02$&Cauchy RT&65.4&98.6&99.9&34.6&97.4&100.0&55.1&100.0&100.0\\
&Cauchy VC&58.8&98.5&100.0&21.4&78.7&99.9&27.1&91.5&100.0\\
&$t_8(\check{\gamma})$&45.5&95.3&98.8&17.6&74.0&98.2&21.7&86.2&100.0\\
&$t_{12}(\check{\gamma})$&46.1&96.4&99.8&18.0&75.7&99.7&22.7&87.5&100.0\\
&$t_{16}(\check{\gamma})$&44.7&97.9&100.0&18.0&73.9&100.0&24.9&87.9&100.0\\
&$\tau(\check{\beta})$&32.2&92.4&100.0&15.7&56.6&99.8&18.2&73.4&100.0\\
 \bottomrule\smallskip
\end{tabularx}
\end{center}
\scriptsize{
This table reports the rejection rates of the Cauchy RT, Cauchy VC, $t_q(\check{\gamma})$ for $q \in \{8,12,16\}$, and $\tau(\check{\beta})$ tests based on simulated data under $\beta \in \{0.004, 0.008, 0.012, 0,016, 0.02\}$  (two-sided tests). The data are generated from the continuous-time model \eqref{MC1} with persistence parameter $\bar{\kappa} \in \{0,5,20\}$. The model is sampled at \(\Delta = 1/252\) (daily), so that a sample of length \(T\) years contains \(252T\) observations, with \(T \in \{20,50,100\}\). The volatility exhibits structural breaks.}
\end{table}

\newpage

\begin{table}[h!]
\begin{center}
\footnotesize
\caption{Power for Continuous-Time Models with Regime Switching in Volatility\label{tab4}}

\begin{tabularx}{0.84\textwidth}{ccCCCCCCCCC} \toprule
&&\multicolumn{3}{c}{$\bar{\kappa}=0$}&\multicolumn{3}{c}{$\bar{\kappa}=5$}&\multicolumn{3}{c}{$\bar{\kappa}=20$}\\
\cmidrule(r){3-5}\cmidrule(r){6-8}\cmidrule(r){9-11}
T&&20&50&100&20&50&100&20&50&100\\\hline
$\beta=0.004$&Cauchy RT&6.6&23.8&64.2&5.9&10.3&21.0&8.1&12.8&32.9\\
&Cauchy VC&11.0&49.9&85.9&7.8&12.9&33.0&7.4&13.0&34.5\\
&$t_8(\check{\gamma})$&7.7&22.1&77.7&6.1&11.8&22.2&7.5&10.4&28.4\\
&$t_{12}(\check{\gamma})$&8.0&23.2&83.6&6.2&10.9&22.6&7.7&11.0&28.9\\
&$t_{16}(\check{\gamma})$&7.9&22.8&84.0&6.5&11.8&22.9&7.8&10.6&30.8\\
&$\tau(\check{\beta})$&8.2&22.3&64.8&7.2&7.6&15.0&8.3&9.3&20.2\\\hline
$\beta=0.008$&Cauchy RT&11.9&69.3&96.0&8.0&20.3&65.5&12.2&32.2&90.2\\
&Cauchy VC&25.7&84.6&97.8&10.0&30.2&84.1&10.8&28.2&86.6\\
&$t_8(\check{\gamma})$&15.9&77.2&99.8&9.6&24.3&79.8&10.5&26.9&81.9\\
&$t_{12}(\check{\gamma})$&15.2&80.5&99.9&9.2&23.8&81.3&11.0&27.0&82.9\\
&$t_{16}(\check{\gamma})$&16.0&82.1&100.0&9.5&24.9&81.4&11.5&26.6&83.7\\
&$\tau(\check{\beta})$&16.7&67.7&94.4&9.1&13.8&50.9&10.3&19.2&60.0\\\hline
$\beta=0.012$&Cauchy RT&31.6&87.0&98.9&10.7&39.8&95.5&17.9&63.4&99.5\\
&Cauchy VC&47.3&94.5&98.5&12.1&58.3&95.4&14.9&53.5&98.7\\
&$t_8(\check{\gamma})$&33.0&96.7&100.0&13.9&50.8&99.1&15.4&51.7&99.4\\
&$t_{12}(\check{\gamma})$&35.1&98.2&100.0&13.3&52.2&99.5&15.1&55.5&99.8\\
&$t_{16}(\check{\gamma})$&34.8&98.3&100.0&13.2&51.2&99.1&15.4&53.5&99.8\\
&$\tau(\check{\beta})$&32.3&86.6&98.2&11.1&29.0&87.6&13.1&37.8&95.0\\\hline
$\beta=0.016$&Cauchy RT&52.7&95.2&99.7&14.2&65.3&99.3&26.2&90.6&100.0\\
&Cauchy VC&65.1&97.2&99.0&17.3&81.7&97.4&19.3&78.2&99.6\\
&$t_8(\check{\gamma})$&52.5&99.7&100.0&18.0&82.1&99.9&21.8&80.0&100.0\\
&$t_{12}(\check{\gamma})$&59.7&99.6&100.0&18.2&83.6&100.0&20.9&82.5&100.0\\
&$t_{16}(\check{\gamma})$&61.3&99.9&100.0&19.4&81.8&100.0&21.6&81.0&100.0\\
&$\tau(\check{\beta})$&51.6&94.3&98.9&13.5&54.9&96.6&17.4&59.5&99.5\\\hline
$\beta=0.02$&Cauchy RT&66.4&97.7&100.0&20.9&86.1&99.9&36.2&98.2&100.0\\
&Cauchy VC&76.8&98.0&99.2&21.5&91.1&98.7&25.0&92.1&99.7\\
&$t_8(\check{\gamma})$&74.3&100.0&100.0&23.7&95.4&99.9&27.8&94.7&100.0\\
&$t_{12}(\check{\gamma})$&80.4&99.9&100.0&23.6&96.3&100.0&28.4&95.8&100.0\\
&$t_{16}(\check{\gamma})$&81.1&100.0&100.0&25.7&96.8&100.0&29.1&96.3&100.0\\
&$\tau(\check{\beta})$&67.3&97.0&99.1&17.3&77.2&98.7&22.7&81.8&99.8\\
 \bottomrule\smallskip
\end{tabularx}
\end{center}
\scriptsize{
This table reports the rejection rates of the Cauchy RT, Cauchy VC, $t_q(\check{\gamma})$ for $q \in \{8,12,16\}$, and $\tau(\check{\beta})$ tests based on simulated data under $\beta \in \{0.004, 0.008, 0.012, 0,016, 0.02\}$  (two-sided tests). The data are generated from the continuous-time model \eqref{MC1} with persistence parameter $\bar{\kappa} \in \{0,5,20\}$. The model is sampled at \(\Delta = 1/252\) (daily), so that a sample of length \(T\) years contains \(252T\) observations, with \(T \in \{20,50,100\}\). The volatility exhibits regime switching.}
\end{table}

\newpage

\begin{table}[h!]
\begin{center}
\footnotesize
\caption{Power for Continuous-Time Models with a Geometric Brownian Motion\label{tab5}}

\begin{tabularx}{0.84\textwidth}{ccCCCCCCCCC} \toprule
&&\multicolumn{3}{c}{$\bar{\kappa}=0$}&\multicolumn{3}{c}{$\bar{\kappa}=5$}&\multicolumn{3}{c}{$\bar{\kappa}=20$}\\
\cmidrule(r){3-5}\cmidrule(r){6-8}\cmidrule(r){9-11}
T&&20&50&100&20&50&100&20&50&100\\\hline
$\beta=0.004$&Cauchy RT&7.8&19.2&59.9&6.0&9.6&24.6&8.5&14.0&41.8\\
&Cauchy VC&42.4&59.7&80.4&6.7&13.0&29.5&7.2&10.8&19.6\\
&$t_8(\check{\gamma})$&11.0&27.1&67.7&4.4&7.5&21.4&5.7&8.4&18.8\\
&$t_{12}(\check{\gamma})$&10.3&28.5&72.0&5.4&8.8&21.7&4.9&8.1&19.2\\
&$t_{16}(\check{\gamma})$&10.2&29.2&73.2&5.5&9.3&21.2&5.8&9.2&19.9\\
&$\tau(\check{\beta})$&8.8&25.5&61.8&5.6&7.5&13.3&6.2&7.0&14.7\\\hline
$\beta=0.008$&Cauchy RT&13.7&59.1&92.4&8.2&24.7&73.4&13.2&40.9&92.8\\
&Cauchy VC&55.3&78.9&91.2&9.4&28.0&53.6&8.8&20.5&51.2\\
&$t_8(\check{\gamma})$&21.6&65.5&94.6&7.0&18.8&68.3&8.4&16.3&57.5\\
&$t_{12}(\check{\gamma})$&23.9&69.7&95.5&7.4&21.2&70.9&7.0&18.6&63.2\\
&$t_{16}(\check{\gamma})$&22.4&71.3&96.5&7.9&21.1&72.2&8.2&19.3&64.7\\
&$\tau(\check{\beta})$&21.2&59.2&88.0&6.7&12.0&35.0&8.0&14.5&40.5\\\hline
$\beta=0.012$&Cauchy RT&29.0&80.8&97.9&11.2&48.3&95.0&18.7&72.2&99.2\\
&Cauchy VC&65.6&88.3&92.2&13.8&42.9&65.6&12.1&36.1&69.9\\
&$t_8(\check{\gamma})$&38.9&85.5&97.3&10.7&43.0&89.6&10.4&35.2&81.2\\
&$t_{12}(\check{\gamma})$&40.2&88.4&98.5&11.0&43.6&93.0&10.5&39.8&88.0\\
&$t_{16}(\check{\gamma})$&42.0&89.0&99.1&10.0&44.4&92.8&11.1&39.6&91.5\\
&$\tau(\check{\beta})$&38.9&77.0&92.1&8.3&22.3&64.9&10.8&26.3&69.7\\\hline
$\beta=0.016$&Cauchy RT&47.9&91.7&99.1&16.0&70.5&98.5&27.8&91.5&99.7\\
&Cauchy VC&73.4&91.6&93.4&17.6&56.2&70.4&15.0&52.5&77.3\\
&$t_8(\check{\gamma})$&53.4&91.8&98.4&14.6&68.5&94.4&13.9&57.8&83.9\\
&$t_{12}(\check{\gamma})$&57.4&94.9&99.3&14.9&71.3&97.1&15.1&63.9&93.9\\
&$t_{16}(\check{\gamma})$&58.9&95.2&99.6&14.4&72.6&98.2&15.9&62.9&96.6\\
&$\tau(\check{\beta})$&51.2&86.3&93.5&10.3&37.4&74.8&13.1&42.1&82.3\\\hline
$\beta=0.02$&Cauchy RT&60.1&96.6&99.5&23.7&87.6&98.9&40.3&97.9&99.9\\
&Cauchy VC&79.7&92.6&93.9&22.3&64.9&72.9&19.4&63.6&80.7\\
&$t_8(\check{\gamma})$&67.5&94.8&99.0&20.2&81.4&96.0&18.3&71.0&85.5\\
&$t_{12}(\check{\gamma})$&69.9&97.2&99.7&19.7&85.7&98.6&20.0&80.7&95.0\\
&$t_{16}(\check{\gamma})$&72.9&97.2&99.9&20.2&87.5&99.1&21.2&79.9&98.2\\
&$\tau(\check{\beta})$&59.9&90.7&94.4&12.8&54.6&80.2&16.8&57.5&89.5\\
 \bottomrule\smallskip
\end{tabularx}
\end{center}
\scriptsize{
This table reports the rejection rates of the Cauchy RT, Cauchy VC, $t_q(\check{\gamma})$ for $q \in \{8,12,16\}$, and $\tau(\check{\beta})$ tests based on simulated data under $\beta \in \{0.004, 0.008, 0.012, 0,016, 0.02\}$  (two-sided tests). The data are generated from the continuous-time model \eqref{MC1} with persistence parameter $\bar{\kappa} \in \{0,5,20\}$. The model is sampled at \(\Delta = 1/252\) (daily), so that a sample of length \(T\) years contains \(252T\) observations, with \(T \in \{20,50,100\}\). The volatility follows a geometric Brownian motion.}
\end{table}

\newpage

\begin{table}[h!]
\begin{center}
\scriptsize
\caption{Size and Power for Discrete-Time Models with Constant Volatility and MA(1) Innovations\label{tab6}}

\begin{tabularx}{0.84\textwidth}{ccCCCCCCCCC} \toprule
&&\multicolumn{3}{c}{$\bar{\kappa}=0$}&\multicolumn{3}{c}{$\bar{\kappa}=50$}&\multicolumn{3}{c}{$\bar{\kappa}=100$}\\
\cmidrule(r){3-5}\cmidrule(r){6-8}\cmidrule(r){9-11}
T&&20&50&100&20&50&100&20&50&100\\\hline
$\beta=0$&$\tau(\check{\beta}_o)$&5.0&4.9&4.8&5.2&5.3&4.8&4.5&4.7&5.2\\
&$\tau(\check{\beta}_e)$&4.8&4.7&5.2&5.0&4.8&4.8&5.1&5.0&4.9\\
%&$t_8(\tau(\check{\beta}_o))$&4.7&4.5&4.8&3.4&4.1&4.5&3.7&5.0&4.6\\
%&$t_{12}(\tau(\check{\beta}_o))$&5.1&4.7&4.5&3.1&3.8&4.4&2.8&4.0&4.2\\
%&$t_{16}(\tau(\check{\beta}_o))$&4.9&4.9&5.1&2.8&4.0&4.2&2.5&3.5&3.9\\
&IVX&14.2&13.5&12.3&10.2&10.5&10.3&9.5&10.3&10.5\\\hline
$\beta=0.5$&$\tau(\check{\beta}_o)$&8.4&7.9&8.4&17.0&17.5&16.6&24.9&24.9&25.8\\
&$\tau(\check{\beta}_e)$&8.2&8.0&8.4&17.4&16.0&17.1&25.4&25.7&26.0\\
%&$t_8(\tau(\check{\beta}_o))$&15.4&15.6&15.9&12.7&14.2&15.4&18.7&20.5&21.9\\
%&$t_{12}(\tau(\check{\beta}_o))$&20.3&20.9&20.2&13.8&15.1&15.8&18.2&21.4&21.6\\
%&$t_{16}(\tau(\check{\beta}_o))$&24.8&24.9&26.0&15.0&16.7&18.4&18.3&20.9&22.3\\
&IVX&100.0&100.0&100.0&100.0&100.0&100.0&100.0&100.0&100.0\\\hline
$\beta=1$&$\tau(\check{\beta}_o)$&13.4&13.4&13.9&39.3&41.1&39.7&60.5&61.1&62.0\\
&$\tau(\check{\beta}_e)$&13.7&13.5&13.7&39.6&39.1&39.6&62.0&62.8&62.5\\
%&$t_8(\tau(\check{\beta}_o))$&33.8&33.5&33.6&29.7&32.4&33.0&45.1&48.9&50.7\\
%&$t_{12}(\tau(\check{\beta}_o))$&47.1&47.4&47.1&35.0&36.4&37.0&47.6&51.9&52.6\\
%&$t_{16}(\tau(\check{\beta}_o))$&58.8&58.6&59.4&42.2&42.0&43.0&52.3&53.8&56.0\\
&IVX&100.0&100.0&100.0&100.0&100.0&100.0&100.0&100.0&100.0\\\hline
$\beta=1.5$&$\tau(\check{\beta}_o)$&20.2&20.1&20.8&63.3&65.2&65.0&85.9&87.6&87.7\\
&$\tau(\check{\beta}_e)$&20.7&20.4&20.6&63.3&64.0&64.3&86.5&87.8&87.8\\
%&$t_8(\tau(\check{\beta}_o))$&52.8&52.0&52.0&49.0&52.4&53.0&68.4&74.0&75.4\\
%&$t_{12}(\tau(\check{\beta}_o))$&71.3&70.8&70.7&58.8&59.8&59.6&73.6&77.4&79.2\\
%&$t_{16}(\tau(\check{\beta}_o))$&83.7&83.2&83.4&69.8&68.5&68.5&80.0&80.8&82.7\\
&IVX&100.0&100.0&100.0&100.0&100.0&100.0&100.0&100.0&100.0\\\hline
$\beta=2$&$\tau(\check{\beta}_o)$&27.7&27.3&28.3&80.3&82.1&82.9&95.4&96.6&97.2\\
&$\tau(\check{\beta}_e)$&28.0&27.3&28.4&80.1&81.4&82.6&95.9&97.0&97.1\\
%&$t_8(\tau(\check{\beta}_o))$&68.1&67.0&66.7&65.5&68.5&69.1&83.7&88.5&89.4\\
%&$t_{12}(\tau(\check{\beta}_o))$&86.3&85.0&85.1&76.8&77.4&77.7&88.8&91.1&92.6\\
%&$t_{16}(\tau(\check{\beta}_o))$&94.3&93.9&94.1&86.2&85.3&85.6&93.2&94.0&94.8\\
&IVX&100.0&100.0&100.0&100.0&100.0&100.0&100.0&100.0&100.0\\\hline
$\beta=2.5$&$\tau(\check{\beta}_o)$&34.3&33.7&34.7&89.5&91.7&92.1&98.6&99.2&99.3\\
&$\tau(\check{\beta}_e)$&34.8&33.8&35.4&89.1&91.2&92.0&98.7&99.1&99.3\\
%&q=4&39.5&39.0&38.4&55.8&61.4&64.3&76.4&83.9&87.1\\
%&$t_8(\tau(\check{\beta}_o))$&78.5&77.8&77.4&76.7&79.5&80.5&90.9&94.7&95.7\\
%&$t_{12}(\tau(\check{\beta}_o))$&93.5&92.9&92.9&87.7&87.8&88.7&95.4&96.6&97.5\\
%&$t_{16}(\tau(\check{\beta}_o))$&98.2&98.1&98.0&94.8&93.9&94.2&97.7&98.1&98.6\\
&IVX&100.0&100.0&100.0&100.0&100.0&100.0&100.0&100.0&100.0\\
 \bottomrule\smallskip
\end{tabularx}
\end{center}
\scriptsize{
This table reports the rejection rates of $\tau(\check{\beta}_o)$ and $\tau(\check{\beta}_e)$, as well as the IVX tests, based on simulated data under $\beta \in \{0, 0.5, 1, 1.5, 2, 2.5\}$  (one-sided tests). The data are generated from the discrete-time model \eqref{MC2}--\eqref{MC3} with persistence parameter $\bar{\kappa} \in \{0,50,100\}$ and MA(1) innovations. The model is sampled at a monthly frequency, so that a sample of length $T$ years contains $12T$ observations, with $T \in \{20,50,100\}$. The volatility is constant.
}
\end{table}

\newpage

\begin{table}[h!]
\scriptsize
\caption{Size and Power for Discrete-Time Models with Structural Breaks in Volatility and MA(1) Innovations\label{tab8}}

\begin{center}
\begin{tabularx}{0.84\textwidth}{ccCCCCCCCCC} \toprule
&&\multicolumn{3}{c}{$\bar{\kappa}=0$}&\multicolumn{3}{c}{$\bar{\kappa}=50$}&\multicolumn{3}{c}{$\bar{\kappa}=100$}\\
\cmidrule(r){3-5}\cmidrule(r){6-8}\cmidrule(r){9-11}
%T&&240&600&1200&240&600&1200&240&600&1200\\\hline
T&&20&50&100&20&50&100&20&50&100\\\hline
$\beta=0$&$\tau(\check{\beta}_o)$&4.5&4.9&5.1&5.3&5.2&4.8&5.1&4.7&4.7\\
&$\tau(\check{\beta}_e)$&4.8&4.8&4.9&5.0&4.5&4.8&5.0&4.9&4.9\\
%&$t_8(\tau(\check{\beta}_o))$&3.7&3.7&3.7&2.4&3.4&3.5&2.8&3.5&3.4\\
%&$t_{12}(\tau(\check{\beta}_o))$&4.5&4.3&3.8&3.0&3.6&3.8&2.7&3.3&3.6\\
%&$t_{16}(\tau(\check{\beta}_o))$&4.3&4.8&4.6&2.7&3.4&4.1&2.5&3.1&3.7\\
&IVX&30.7&31.2&30.1&32.8&34.2&34.3&33.9&34.5&36.0\\\hline
$\beta=0.5$&$\tau(\check{\beta}_o)$&7.7&8.1&8.7&13.7&13.7&13.2&18.5&18.7&19.3\\
&$\tau(\check{\beta}_e)$&8.0&8.2&8.4&13.6&13.0&12.8&19.3&19.7&19.7\\
%&$t_8(\tau(\check{\beta}_o))$&11.5&11.6&11.7&9.2&10.5&10.8&12.9&15.1&15.6\\
%&$t_{12}(\tau(\check{\beta}_o))$&16.0&16.0&16.0&10.8&12.1&12.4&13.3&16.1&16.6\\
%&$t_{16}(\tau(\check{\beta}_o))$&19.6&20.2&20.5&12.2&13.8&14.6&14.6&16.1&17.6\\
&IVX&100.0&100.0&100.0&100.0&100.0&100.0&100.0&100.0&100.0\\\hline
$\beta=1$&$\tau(\check{\beta}_o)$&13.5&13.5&14.3&29.2&29.3&28.5&44.3&45.0&46.0\\
&$\tau(\check{\beta}_e)$&14.0&13.8&13.9&28.2&28.1&27.3&46.1&46.6&46.1\\
%&$t_8(\tau(\check{\beta}_o))$&24.8&24.7&24.8&21.1&23.1&23.4&32.0&35.3&36.3\\
%&$t_{12}(\tau(\check{\beta}_o))$&36.3&36.6&36.4&27.3&28.2&28.0&35.0&39.3&39.6\\
%&$t_{16}(\tau(\check{\beta}_o))$&44.8&45.7&46.1&31.5&33.6&34.2&39.5&41.0&43.0\\
&IVX&100.0&100.0&100.0&100.0&100.0&100.0&100.0&100.0&100.0\\\hline
$\beta=1.5$&$\tau(\check{\beta}_o)$&20.4&20.4&21.5&46.8&48.1&47.6&68.9&70.8&72.2\\
&$\tau(\check{\beta}_e)$&21.3&20.8&21.0&46.9&46.3&47.5&71.2&71.5&72.5\\
%&$t_8(\tau(\check{\beta}_o))$&39.4&38.6&39.3&35.6&37.6&38.0&51.1&56.1&57.6\\
%&$t_{12}(\tau(\check{\beta}_o))$&56.3&57.2&56.4&45.7&46.4&45.9&57.6&62.3&63.3\\
%&$t_{16}(\tau(\check{\beta}_o))$&67.5&69.2&69.3&53.9&55.3&55.1&63.8&65.9&67.7\\
&IVX&100.0&100.0&100.0&100.0&100.0&100.0&100.0&100.0&100.0\\
$\beta=2$&$\tau(\check{\beta}_o)$&27.6&27.7&28.9&62.7&64.9&64.7&84.4&87.3&88.0\\
&$\tau(\check{\beta}_e)$&28.3&28.0&28.0&63.2&63.2&64.5&85.7&87.3&87.9\\
%&$t_8(\tau(\check{\beta}_o))$&51.7&51.4&51.6&48.5&50.0&51.8&65.0&71.3&72.7\\
%&$t_{12}(\tau(\check{\beta}_o))$&71.7&72.5&71.4&62.0&62.2&62.1&74.0&77.3&79.1\\
%&$t_{16}(\tau(\check{\beta}_o))$&82.3&83.1&83.1&71.6&72.3&72.0&81.1&82.0&83.9\\
&IVX&100.0&100.0&100.0&100.0&100.0&100.0&100.0&100.0&100.0\\\hline
$\beta=2.5$&$\tau(\check{\beta}_o)$&34.4&34.3&35.5&74.0&76.9&77.2&92.1&94.6&95.4\\
&$\tau(\check{\beta}_e)$&35.1&34.6&34.3&74.7&75.9&77.3&93.0&94.6&95.3\\
&q=4&23.8&23.3&23.4&34.1&37.0&38.4&46.8&50.2&51.4\\
%&$t_8(\tau(\check{\beta}_o))$&61.7&61.1&61.2&58.6&60.4&62.5&74.1&80.5&82.4\\
%&$t_{12}(\tau(\check{\beta}_o))$&81.7&82.3&81.3&73.6&73.7&74.4&84.0&86.5&88.2\\
%&$t_{16}(\tau(\check{\beta}_o))$&90.6&91.1&91.4&83.2&83.1&83.0&90.1&91.1&91.9\\
&IVX&100.0&100.0&100.0&100.0&100.0&100.0&100.0&100.0&100.0\\
 \bottomrule\smallskip
\end{tabularx}
\end{center}
\scriptsize{
This table reports the rejection rates of $\tau(\check{\beta}_o)$ and $\tau(\check{\beta}_e)$, as well as the IVX tests, based on simulated data under $\beta \in \{0, 0.5, 1, 1.5, 2, 2.5\}$  (one-sided tests). The data are generated from the discrete-time model \eqref{MC2}--\eqref{MC3} with persistence parameter $\bar{\kappa} \in \{0,50,100\}$ and MA(1) innovations. The model is sampled at a monthly frequency, so that a sample of length $T$ years contains $12T$ observations, with $T \in \{20,50,100\}$. The volatility exhibits structural breaks.
}
\end{table}

\newpage

\begin{table}[h!]
\begin{center}
\scriptsize
\caption{Size and Power for Discrete-Time Models with Regime Switching in Volatility and MA(1) Innovations\label{tab10}}

\begin{tabularx}{0.84\textwidth}{ccCCCCCCCCC} \toprule
&&\multicolumn{3}{c}{$\bar{\kappa}=0$}&\multicolumn{3}{c}{$\bar{\kappa}=50$}&\multicolumn{3}{c}{$\bar{\kappa}=100$}\\
\cmidrule(r){3-5}\cmidrule(r){6-8}\cmidrule(r){9-11}
%T&&240&600&1200&240&600&1200&240&600&1200\\\hline
T&&20&50&100&20&50&100&20&50&100\\\hline
$\beta=0$&$\tau(\check{\beta}_o)$&3.2&4.0&4.3&4.7&5.0&5.0&5.2&4.9&4.8\\
&$\tau(\check{\beta}_e)$&3.8&3.9&4.2&5.0&4.9&4.9&4.9&5.1&4.8\\
%&$t_8(\tau(\check{\beta}_o))$&2.6&2.9&3.5&2.7&2.5&3.0&3.4&3.4&4.0\\
%&$t_{12}(\tau(\check{\beta}_o))$&2.4&2.3&3.2&1.8&2.0&2.5&2.3&2.6&3.2\\
%&$t_{16}(\tau(\check{\beta}_o))$&2.3&2.1&2.8&1.7&1.8&2.3&2.1&2.0&2.4\\
&IVX&13.5&13.2&12.4&9.7&11.1&11.0&10.3&11.2&12.4\\\hline
$\beta=0.5$&$\tau(\check{\beta}_o)$&5.8&6.8&7.4&16.1&17.1&16.6&23.4&24.1&24.4\\
&$\tau(\check{\beta}_e)$&6.4&6.7&6.9&16.8&16.8&16.6&23.9&24.9&23.6\\
%&$t_8(\tau(\check{\beta}_o))$&10.2&10.8&12.1&11.0&11.5&12.3&16.2&17.4&19.0\\
%&$t_{12}(\tau(\check{\beta}_o))$&13.3&13.8&15.0&11.5&10.7&11.8&17.0&17.1&18.3\\
%&$t_{16}(\tau(\check{\beta}_o))$&16.1&16.0&18.8&12.9&12.0&12.9&17.7&15.4&18.0\\
&IVX&100.0&100.0&100.0&100.0&100.0&100.0&100.0&100.0&100.0\\\hline
$\beta=1$&$\tau(\check{\beta}_o)$&10.4&11.3&12.0&39.6&40.9&39.3&56.7&60.4&60.6\\
&$\tau(\check{\beta}_e)$&11.0&11.7&11.8&39.7&39.6&39.8&57.3&60.9&59.6\\
%&$t_8(\tau(\check{\beta}_o))$&26.2&27.6&28.7&28.6&30.4&30.1&41.8&45.1&47.2\\
%&$t_{12}(\tau(\check{\beta}_o))$&37.5&39.1&40.4&34.4&33.2&33.4&46.9&47.9&48.9\\
%&$t_{16}(\tau(\check{\beta}_o))$&47.6&49.3&50.2&41.3&38.2&38.3&53.6&49.7&50.2\\
&IVX&100.0&100.0&100.0&100.0&100.0&100.0&100.0&100.0&100.0\\\hline
$\beta=1.5$&$\tau(\check{\beta}_o)$&17.3&18.2&19.0&64.3&67.1&63.9&81.5&86.0&87.0\\
&$\tau(\check{\beta}_e)$&17.9&18.6&18.7&63.8&65.3&64.4&82.2&86.8&86.2\\
%&$t_8(\tau(\check{\beta}_o))$&45.8&46.7&47.9&49.7&51.9&50.3&65.8&70.3&72.3\\
%&$t_{12}(\tau(\check{\beta}_o))$&63.9&65.3&66.4&60.3&58.8&58.2&74.7&75.4&76.2\\
%&$t_{16}(\tau(\check{\beta}_o))$&76.6&77.6&78.2&70.9&67.5&65.9&82.3&80.3&78.4\\
&IVX&100.0&100.0&100.0&100.0&100.0&100.0&100.0&100.0&100.0\\\hline
$\beta=2$&$\tau(\check{\beta}_o)$&25.0&25.4&26.5&80.2&83.1&81.7&92.0&95.5&96.4\\
&$\tau(\check{\beta}_e)$&25.6&26.2&25.9&79.4&82.0&82.1&92.5&95.5&96.1\\
%&$t_8(\tau(\check{\beta}_o))$&62.3&62.9&63.6&66.5&68.0&67.3&80.4&84.3&86.0\\
%&$t_{12}(\tau(\check{\beta}_o))$&81.2&82.7&83.6&79.0&78.2&76.9&89.0&90.0&90.2\\
%&$t_{16}(\tau(\check{\beta}_o))$&91.1&91.9&92.4&88.4&85.9&84.8&94.6&93.5&93.3\\
&IVX&100.0&100.0&100.0&100.0&100.0&100.0&100.0&100.0&100.0\\\hline
$\beta=2.5$&$\tau(\check{\beta}_o)$&32.6&32.5&33.9&88.9&91.4&90.7&96.5&98.6&99.1\\
&$\tau(\check{\beta}_e)$&33.4&33.6&32.7&88.2&90.9&90.6&96.8&98.6&98.9\\
%&$t_8(\tau(\check{\beta}_o))$&73.9&75.2&75.6&78.3&79.5&78.8&88.7&91.8&92.9\\
%&$t_{12}(\tau(\check{\beta}_o))$&91.0&91.9&92.2&89.7&89.0&88.2&95.7&95.9&96.1\\
%&$t_{16}(\tau(\check{\beta}_o))$&96.9&97.1&97.4&95.7&94.6&94.2&98.3&98.0&97.9\\
&IVX&100.0&100.0&100.0&100.0&100.0&100.0&100.0&100.0&100.0\\
 \bottomrule\smallskip
\end{tabularx}
\end{center}
\scriptsize{
This table reports the rejection rates of $\tau(\check{\beta}_o)$ and $\tau(\check{\beta}_e)$, as well as the IVX tests, based on simulated data under $\beta \in \{0, 0.5, 1, 1.5, 2, 2.5\}$  (one-sided tests). The data are generated from the discrete-time model \eqref{MC2}--\eqref{MC3} with persistence parameter $\bar{\kappa} \in \{0,50,100\}$ and MA(1) innovations. The model is sampled at a monthly frequency, so that a sample of length $T$ years contains $12T$ observations, with $T \in \{20,50,100\}$. The volatility exhibits regime switching.
}
\end{table}

\newpage

\begin{table}[h!]
\begin{center}
\footnotesize
\caption{Empirical Results on Stock Return Predictability (Two-Sided Tests)\label{tab12}}

\begin{tabularx}{0.84\textwidth}{llllllll} \toprule
Series &Frequency&$\beta$&Cauchy RT&Cauchy VC&$t_{12}(\check{\gamma})$&$t_{16}(\check{\gamma})$&$\tau(\check{\beta})$\\\hline
\multicolumn{8}{l}{Panel A: D/P as predictor for the period of 1927-2011}\\\hline
CRSP&Monthly&0.005&1.06&0.85&0.42&0.44&0.43\\
&Quarterly&0.007&0.48&1.26&1.29&1.34&1.09\\
&Yearly&0.063&0.99&$2.15^{**}$&$2.04^{**}$&$2.35^{**}$&$1.82^{**}$\\
S\&P500&Monthly&0.003&0.74&1.14&0.89&0.82&0.72\\
&Quarterly&0.008&0.65&$1.78^{**}$&$1.75^{*}$&$1.73^{*}$&$1.31^{*}$\\
&Yearly&0.042&0.82&$2.24^{**}$&$2.33^{**}$&$2.00^{**}$&$1.95^{**}$\\\hline
\multicolumn{8}{l}{Panel B: D/P as predictor for the period of 1927-2011 with jumps removed}\\\hline
CRSP&Monthly&0.001&0.22&$1.50^{*}$&1.15&1.06&0.90\\
&Quarterly&0.015&0.93&$2.02^{**}$&$2.45^{**}$&$2.40^{**}$&$1.46^{*}$\\
&Yearly&0.005&0.05&0.56&0.56&0.53&0.45\\
S\&P500&Monthly&0.002&0.48&$1.66^{*}$&1.35&1.14&1.02\\
&Quarterly&0.017&1.21&$1.49^{*}$&1.70&1.61&1.20\\
&Yearly&0.008&0.11&$1.51^{*}$&$2.45^{**}$&$2.56^{**}$&$1.80^{**}$\\\hline
\multicolumn{8}{l}{Panel C: E/P as predictor for the period of 1950-2011}\\\hline
S\&P500&Monthly&0.000&-0.05&0.32&0.26&0.22&0.30\\
&Quarterly&0.007&0.39&0.39&0.38&0.35&0.35\\
&Yearly&0.059&0.83&0.76&1.04&0.23&0.84\\\hline
\multicolumn{8}{l}{Panel D: E/P as predictor for the period of 1950-2011 with jumps removed}\\\hline
S\&P500&Monthly&0.001&0.16&0.01&0.14&0.18&0.11\\
&Quarterly&0.000&0.02&0.71&0.78&0.61&0.48\\
&Yearly&-0.038&-0.38&0.69&-0.05&0.55&0.44\\
 \bottomrule\smallskip
\end{tabularx}
\end{center}
\scriptsize{Results for two-sided tests of return predictability for the NYSE/AMEX value-weighted index (CRSP) and the S\&P~500 using the Cauchy RT, Cauchy VC, \(t_q\) (\(q = 12,16\)), and \(\tau(\check{\beta})\) tests across different regression frequencies. Panels~A--B use the dividend--price ratio (D/P), while Panels~C--D use the earnings--price ratio (E/P) as predictors. Statistical significance at the 5\% and 1\% levels is denoted by ``$^{*}$'' and ``$^{**}$'', respectively.}
\end{table}

\clearpage

\bibliographystyle{agsm} % or try abbrvnat or unsrtnat  or apalike
\bibliography{Predictive}

\pagebreak

\newpage

\clearpage

\appendix

\setcounter{equation}{0}
\setcounter{figure}{0}
\renewcommand{\theequation}{S.\arabic{equation}}%

\renewcommand{\thefigure}{S.\arabic{figure}}%

\begin{center}
\vspace*{2in}{\Huge Online supplementary appendix to}

\bigskip

{\Huge \bigskip}

{\Large Robust Cauchy-Based Methods for Predictive Regressions}

\end{center}

\bigskip

This supplement includes three appendices, Appendices~A--C. Appendix~A contains additional theoretical analysis, and Appendices~B and C provide additional simulation and empirical results, respectively.

\setcounter{section}{0}
\def\thesection{\arabic{section}}%

\renewcommand{\thesection}{\Alph{section}}

\setcounter{section}{1}%

\renewcommand{\thetable}{S.\arabic{table}}
\setcounter{table}{0}%

\renewcommand{\thepage}{[S.\arabic{page}]}
\setcounter{page}{1}%

\setcounter{footnote}{0}%

%\newpage

%==================================================

\section{Additional Theoretical Analysis}

\subsection{Wald Test for Joint Predictability Test}

Consider a predictive regression with $K$ predictors $x_t = [x_{1,t}, \ldots, x_{K,t}]'$:
\begin{align}
y_t &= x_{t-1}' B + u_t \notag \\
    &= \beta_{1,K} x_{1,t-1} + \cdots + \beta_{K,K} x_{K,t-1} + u_t, 
    \qquad B = [\beta_{1,K}, \ldots, \beta_{K,K}]'. \label{PredRegM}
\end{align}
The objective is to test the joint predictability of the covariates, that is,
\[
H_0: \beta_{1,K}=\cdots=\beta_{K,K} =0.
\]
Define 
$z_t = [\operatorname{sign}(x_{1,t}), \ldots, \operatorname{sign}(x_{K,t})]'$. Clearly, under $H_0$,
\[
\left(\hat{\omega}^2 \sum_{t=1}^T z_{t-1} z_{t-1}' \right)^{-1/2}
\left( \sum_{t=1}^T z_{t-1} y_t \right)
\to_d \mathbb{N}(0, I_K)
\]
as long as $\hat{\omega}^2\to_p \omega^2 =\int_0^1\sigma^2(r)\,dr $ and the limit of $\sum_{t=1}^T z_{t-1} z_{t-1}'$ is invertible. Our test statistic is motivated by the convergence, and we propose a Wald-type statistic $W$ defined as
\[
W = 
\left( \sum_{t=1}^T z_{t-1} y_t \right)' 
\left( \hat{\omega}^2 \sum_{t=1}^T z_{t-1} z_{t-1}' \right)^{-1}
\left( \sum_{t=1}^T z_{t-1} y_t \right),
\]
where 
\[
\hat{\omega}^2 = \frac{1}{T}\sum_{t=1}^T \hat{u}_t^2,
\qquad
\hat{u}_t = y_t - x_{t-1}'\hat{B}
\]
with the OLS estimator $\hat{B}$. As in the main paper, $\hat{\omega}^2$ estimates $\omega^2=\int_0^1\sigma^2(r)\,dr$ using OLS residuals.

For our purpose, we assume 
\begin{ass}\label{ass-a1}
\begin{enumerate}
\item[(i)] $\frac{1}{T}\sum_{t=1}^T z_{t-1} z_{t-1}'\to_d M_z$, where $M_z$ is invertible.
\item[(ii)] $\left\Vert\sum_{t=1}^T x_{t-1}u_t \right\Vert_2= o_p\!\left(\sqrt{T\lambda_T}\right)$, where $\lambda_T$ is the minimum eigenvalue of $\sum_{t=1}^T x_{t-1}x_{t-1}'$ and the norm $\Vert \cdot \Vert_2$ is the spectral norm.
\item[(iii)] let $\varsigma_T$ be the minimum singular value of $\sum_{t=1}^T z_{t-1} x_{t-1}'$ such that $\varsigma_T/\sqrt{T}\to_p\infty$.
\end{enumerate}
\end{ass}

Assumption \ref{ass-a1} is essentially a multivariate extension of the conditions used for Proposition 3.4 in the main paper as we will discuss subsequently. Firtsly, Assumption \ref{ass-a1}(i) generalizes the univariate case, in which we always have $\sum_{t=1}^T z_{t-1}^2 = T$ for all $T$ with probability one. Secondly, Assumption \ref{ass-a1}(ii) extends Assumption 3.1 in the main paper. Lastly, Assumption \ref{ass-a1}(iii) is a multivariate version of the test consistency assumption $\sum_{t=1}^T |x_{t-1}|/\sqrt{T}\to_p\infty$ used in Proposition 3.4(b) in the main paper.  

Note that Assumption~\ref{ass-a1} is violated if $x_t$ is supported only on the positive (or negative) real line. In this case, one may subtract $m_t$, where $m_t$ is an $\mathcal{F}_{t-1}$-adapted process taking values in the support of $x_t$, and apply the test to the transformed regressor $x_t - m_t$ (see Footnote~9 of the main paper). Moreover, when an intercept is included, the Wald test can be combined with the first-differencing approach described in Section~4.2 of the main paper.

The proposed Wald test is not straightforward to extend to general linear hypotheses, such as $R\beta = r$ for some $q \times K$ restriction matrix $R$ and $q \times 1$ vector $r$. This is because the asymptotic distribution of the restricted Cauchy-type IV estimator is complex and, in general, depends on the degrees of persistence and heavy-tailedness of the covariates $x_t$. Accordingly, our Wald statistic is limited to joint predictability tests for the full set of covariates in the predictive regression, whereas the IVX method of \citet{KMS2015} can be applied to test general linear restrictions under their maintained assumptions.

Nonetheless, testing joint significance for a subset of parameters may be feasible by combining our Wald test with a partialling-out approach based on the Frisch--Waugh--Lovell theorem, under additional regularity conditions. We leave such generalizations of our Wald-type approach for future research.

\begin{prop}\label{prop-a1}
Let Assumptions~2.1, 2.2, and 2.3 in the main paper hold.  \smallskip

(a) Let Assumption \ref{ass-a1}(i)-(ii) hold. Then, under $H_0:B=\mathbf{0}_{K\times1}$, $W\to_d\chi_K^2$.\smallskip

(b) Let Assumption \ref{ass-a1}(i)-(iii) hold.  Then, under $H_A:B\neq\mathbf{0}_{K\times1}$, $W\to_p\infty$.
\end{prop}

\begin{proof}
Let $\omega_T^2 = T^{-1} \sum_{t=1}^T u_t^2$.  
Then $\omega_T^2 \to_p \omega^2$ as in the proof of Proposition 3.4 of the main paper. Furthermore,  we have
\[
\hat{\omega}^2 
= \omega_T^2 - \left(\sum_{t=1}^T x_{t-1} u_t\right)'\left(T\sum_{t=1}^T x_{t-1}x_{t-1}'\right)^{-1}\left(\sum_{t=1}^T x_{t-1} u_t\right)
= \omega_T^2 + o_p(1)
\]
by Assumption \ref{ass-a1}(ii). Part (a) then follows immediately from Assumption \ref{ass-a1}(i).

As for Part (b), we note that under $H_A$
\[
W =_d \left( \frac{1}{\sqrt{T}}\sum_{t=1}^T z_{t-1} x_{t-1}'B \right)' 
\left( \omega M_z \right)^{-1}
\left( \frac{1}{\sqrt{T}}\sum_{t=1}^T z_{t-1} x_{t-1}'B \right)(1+o_p(1)).
\]
Therefore, $W\to_p\infty$ as long as 
\[
\frac{1}{\sqrt{T}}\sum_{t=1}^T z_{t-1} x_{t-1}'B\to_p\infty
\]
for any fixed $B\neq\mathbf{0}_{K\times1}$ in $\mathbb{R}^K$, which is implied by Assumption \ref{ass-a1}(iii). This completes the proof.
\end{proof}

%for some $q\leq K$. We let $S_q$ be the $K\times 1$ selection vector with each element being either one or zero such that
%\[
%B\odot S_q = \mathbf{0}_{K\times1} \quad \text{if and only if}\quad \beta_{k_1,K}=\cdots=\beta_{k_q,K} =0,
%\]
%where $\odot$ is the Hadamard (element-wise) product and $\mathbf{0}_{K\times1}$ is the $K\times1$ zero vector. Thus the null hypothesis $H_0$ can be equivalently written as\footnote{We note that the Hadmard product of two vectors $B$ and $S_q$ is the same as matrix multiplication of the corresponding diagonal matrix of one vector by the other vector $B\odot S_q = D_B S_q = D_S B$. Here the $K\times K$ matrix $D_S$ has the rank $q$, and hence, the linear restriction of $D_S B = \mathbf{0}_{K\times 1}$ has a different form from the usual linear restrictions for the joint significance test used in the Wald test for $q<K$.}
%\[
%H_0: B\odot S_q = \mathbf{0}_{K\times1}.
%\]

\subsection{Effects of First Differencing on Testing Power}

Corollary 4.2(b) of the main paper establishes the consistency of the test based on the first-differenced model. However, first differencing generally entails a loss of power relative to the test that imposes a zero intercept. In this subsection, we provide a theoretical comparison of the asymptotic power of the two tests.

Specifically, consider the linear predictive regression without intercept,
\[
y_t = \beta x_{t-1} + u_t,
\]
where $(x_t)$ follows the local-to-unity process of \cite{Phillips_Magdalinos_2007book},
\[
x_t = \left(1+\frac{c}{T^\delta}\right) x_{t-1} + \varepsilon_t^x, 
\quad c<0, \quad \delta \in [0,1],
\]
and $(\varepsilon_t^x)$ satisfies Assumption LP therein. We assume that Assumptions 2.1--2.3 in the main paper hold. 

We compare the rates of divergence of the test statistics $\tau(\check{\beta})$ and $\tau(\check{\beta}_e)$, defined as
\[
\tau(\check{\beta}) 
= \frac{\sum_{t=1}^T \text{sign}(x_{t-1}) y_t}{\hat{\omega}\sqrt{T}}
\quad \text{and}\quad
\tau(\check{\beta}_e) 
= \frac{\sum_{t=2}^{T/2} \text{sign}(x_{2t-2})(y_{2t} - y_{2t-1})}{\hat{\omega}_d \sqrt{T/2}},
\]
where $\hat{\omega}^2 = T^{-1} \sum_{t=1}^T \hat{u}_t^2$ and $\hat{\omega}_d^2 = T^{-1} \sum_{t=1}^T (\hat{u}_t^{(d)})^2$, with $(\hat{u}_t)$ denoting the OLS residuals from the original model and $(\hat{u}_t^{(d)})$ those from the demeaned model.

It follows from Proposition 3.4(b) and Corollary 4.2(b) of the main paper that, under $\beta \neq 0$,
\[
\tau(\check{\beta})
= \beta \frac{\sum_{t=1}^T |x_{t-1}|}{\omega \sqrt{T}} (1 + o_p(1)) + O_p(1),
\]
and
\begin{align*}
\tau(\check{\beta}_e)
= \beta \frac{\sum_{t=1}^{T/2} \text{sign}(x_{2t-2})(x_{2t-1} - x_{2t-2})}
{\omega \sqrt{T/2}} (1 + o_p(1)) + O_p(1),
\end{align*}
provided that Assumption 3.1 holds for both $x_{t-1}$ and $x_{t-1} - \bar{x}_T$.

The rate of divergence for $\tau(\check{\beta})$ follows directly from Lemma 3.2 of \cite{Phillips_Magdalinos_2007book}:
\[
\frac{1}{T^{1/2 + \delta/2}} \tau(\check{\beta})
= \frac{\beta}{\omega} M_1 (1 + o_p(1)),
\]
where $M_k =  E|X_k(1)|$ for $k=1,2$, and
\[
X_k(t) = \int_0^t \exp\bigl(kc (t-r)\bigr) dB_k(r),
\]
with $(B_k)$ denoting Brownian motions defined in Assumption LP of \cite{Phillips_Magdalinos_2007book}.

For $\tau(\check{\beta}_e)$, we decompose
\[
\tau(\check{\beta}_e) 
= \frac{\beta}{\omega/\sqrt{2}} \bigl[\tau_M(\check{\beta}_e) + \tau_R(\check{\beta}_e)\bigr] (1 + o_p(1)) + O_p(1),
\]
where
\begin{align*}
\tau_M(\check{\beta}_e) 
= \frac{c}{T^{1/2 + \delta}} \sum_{t=1}^{T/2} |x_{2t-2}|, \quad 
\tau_R(\check{\beta}_e) 
= \frac{1}{\sqrt{T}} \sum_{t=1}^{T/2} \text{sign}(x_{2t-2}) \varepsilon_{2t-1}^x,
\end{align*}
using the decomposition
\begin{align*}
\sum_{t=1}^{T/2} \text{sign}(x_{2t-2})(x_{2t-1} - x_{2t-2})
&= \frac{c}{T^\delta} \sum_{t=1}^{T/2} |x_{2t-2}| 
+ \sum_{t=1}^{T/2} \text{sign}(x_{2t-2}) \varepsilon_{2t-1}^x.
\end{align*}

Using the second-order representation
\[
x_t 
= \left(1+\frac{2c}{T^\delta} + \frac{c^2}{T^{2\delta}} \right)x_{t-2} 
+ \varepsilon_t^x + \left(1+\frac{c}{T^\delta}\right)\varepsilon_{t-1}^x,
\]
Lemma 3.2 of \cite{Phillips_Magdalinos_2007book} yields
\[
\frac{1}{T^{1/2 - \delta/2}} \tau_M(\check{\beta}_e)
= M_2 (1 + o_p(1)),
\]
so that $\tau_M(\check{\beta}_e)$ diverges at rate $T^{1/2 - \delta/2}$ whenever $\delta < 1$. For the remainder term, $\tau_R(\check{\beta}_e)$ diverges at rate $\sqrt{T}$ if 
\[
 E[\text{sign}(x_{2t-2}) \varepsilon_{2t-1}^x] \neq 0,
\]
and otherwise $\tau_R(\check{\beta}_e) = O_p(1)$.

We summarize the results as follows:
\begin{itemize}
    \item $\tau(\check{\beta})$ is always consistent and diverges at rate $T^{1/2 + \delta/2}$.

    \item $\tau(\check{\beta}_e)$ is consistent if either $\delta < 1$ or 
    $ E[\text{sign}(x_{2t-2}) \varepsilon_{2t-1}^x] \neq 0$. Its divergence rate is $T^{1/2 - \delta/2}$ when the expectation is zero, and $\sqrt{T}$ otherwise.

    \item Consequently, the divergence rate of $\tau(\check{\beta})$ is uniformly faster than that of $\tau(\check{\beta}_e)$, implying a loss of power induced by first differencing.
\end{itemize}

An important observation is that the test based on $\tau(\check{\beta}_e)$ is inconsistent when $\delta = 1$ and $ E[\text{sign}(x_{2t-2}) \varepsilon_{2t-1}^x] = 0$. A canonical example is the pure random walk $x_t$, i.e., $x_t = x_{t-1} + \varepsilon_t^x$, with $\varepsilon_t^x \sim \text{i.i.d. }(0,\sigma^2)$.

\newpage
\section{Additional Simulation Results}
\subsection{Discrete-Time Experiments: One-Sided Tests}

Following Section~5.2 of the main paper, we examine the finite-sample performance of the proposed tests in a discrete-time setting with an intercept, comparing them to the IVX test of \citet{KMS2015}. The data-generating process (DGP) is specified as
\begin{align}
y_t &= \beta x_{t-1} + \sigma_t \varepsilon_t, \label{MC2} \\
x_t &= \left(1 - \frac{\bar{\kappa}}{T}\right) x_{t-1} + \sigma_t \eta_t, \nonumber
\end{align}
for $t = 2, \ldots, 12T$, where $T \in \{20, 50, 100\}$ corresponds to 20, 50, and 100 years of monthly data (i.e., 240, 600, and 1200 observations, respectively). The constant term in the predictive regression is set to zero. We set $\beta \in \{0.5k : k = 0, 1, \ldots, 5\}$ and $\bar{\kappa} \in \{0, 50, 100\}$, and consider a one-sided test of $H_0\!: \beta = 0$ against $H_A\!: \beta > 0$. The number of replications is 10,000.

The innovation process $\eta_t$ follows an MA($q$) process:
\begin{align}
\eta_t = \sum_{j=0}^q C_j v_{t-j}, \label{MC3}
\end{align}
where $(\varepsilon_t, v_t)$ are jointly normal with correlation $-0.98$. Specifically, we consider an MA(3) process with $C_j = 1/2$ for $j = 0,1,2,3$ (see Section~5.2 of the main paper for simulations with MA(1) innovations). The volatility process $\sigma_t$ follows the same specifications as in Section~5.2 of the main paper. Specifically, we consider
\begin{itemize}
    \item \textbf{CNST (Constant volatility):} $\sigma_t^2 = \sigma_0^2$, with $\sigma_0 = 1$.
    \item \textbf{SB (Structural break):} $\sigma_t = \sigma_0 + (\sigma_1 - \sigma_0)1\{t/T \ge 4/5\}$, with $\sigma_0 = 1$ and $\sigma_1 = 4$.
    \item \textbf{RS (Regime switching):} $\sigma_t = \sigma_0(1 - s_t) + \sigma_1 s_t$, 
    where $s_t$ is a two-state Markov chain independent of $(y_t, x_t)$, 
    with transition matrix
    \[
    P_t =
    \begin{pmatrix}
    0.8 & 0.2 \\ 0.8 & 0.2
    \end{pmatrix}
    +
    \begin{pmatrix}
    0.2 & -0.2 \\ -0.8 & 0.8
    \end{pmatrix}
    \exp\!\left(-\frac{\bar{\lambda}}{T}t\right),
    \]
    initialized at its invariant distribution, where $\bar{\lambda} = 60$, $\sigma_0 = 1$, and $\sigma_1 = 4$.
\end{itemize}

We implement the hybrid tests based on even and odd observations, denoted by $\tau(\check{\beta}_e)$ and $\tau(\check{\beta}_o)$ (see Section~4.2 of the main paper). For comparison, we also include the IVX test of \citet{KMS2015}, in which the intercept is included in the estimation.

The results, summarized in Tables~\ref{atab1}--\ref{atab3}, indicate that the proposed tests exhibit accurate size control under the null hypothesis across all DGPs, whereas the IVX test is substantially oversized, particularly when volatility is nonstationary or exhibits structural breaks. These findings are consistent with \citet{DEMETRESCU2023105271}, who show that the IVX method exhibits severe size distortions in one-sided tests when regressors are highly persistent and endogenous. By contrast, our tests exhibit strong size control with reasonably good power in the simulation settings considered here.

\begin{table}[h!]
\begin{center}
\scriptsize
\caption{Size and Power for Discrete-Time Models with Constant Volatility and MA(3) Innovations (One-Sided Tests)\label{atab1}}
\begin{tabularx}{0.84\textwidth}{ccccccccccc} \toprule
&&\multicolumn{3}{c}{$\bar{\kappa}=0$}&\multicolumn{3}{c}{$\bar{\kappa}=50$}&\multicolumn{3}{c}{$\bar{\kappa}=100$}\\
\cmidrule(r){3-5}\cmidrule(r){6-8}\cmidrule(r){9-11}
T&&20&50&100&20&50&100&20&50&100\\\hline
$\beta=0$&$\tau(\check{\beta}_o)$&5.0&4.9&4.8&5.2&5.3&4.8&4.5&4.7&5.2\\
&$\tau(\check{\beta}_e)$&4.8&4.7&5.2&5.0&4.8&4.8&5.1&5.0&4.9\\
%&$t_8(\tau(\check{\beta}_o))$&4.7&4.5&4.8&3.4&4.1&4.5&3.7&5.0&4.6\\
%&$t_{12}(\tau(\check{\beta}_o))$&5.1&4.7&4.5&3.1&3.8&4.4&2.8&4.0&4.2\\
%&$t_{16}(\tau(\check{\beta}_o))$&4.9&4.9&5.1&2.8&4.0&4.2&2.5&3.5&3.9\\
&IVX&14.2&13.5&12.3&10.2&10.5&10.3&9.5&10.3&10.5\\\hline
$\beta=0.5$&$\tau(\check{\beta}_o)$&7.1&6.7&7.0&12.3&12.5&12.2&16.5&16.6&17.2\\
&$\tau(\check{\beta}_e)$&7.0&7.0&7.2&12.4&11.5&12.3&16.8&17.2&17.5\\
%&$t_8(\tau(\check{\beta}_o))$&11.5&11.2&12.0&9.1&10.6&11.4&13.0&14.1&15.0\\
%&$t_{12}(\tau(\check{\beta}_o))$&14.4&14.4&14.1&9.6&10.9&11.7&11.6&14.1&14.5\\
%&$t_{16}(\tau(\check{\beta}_o))$&16.7&16.5&17.6&10.1&11.7&13.0&11.3&13.6&14.6\\
&IVX&100.0&100.0&100.0&100.0&100.0&100.0&100.0&100.0&100.0\\\hline
$\beta=1$&$\tau(\check{\beta}_o)$&10.1&9.8&10.3&25.4&25.8&24.6&39.4&39.4&40.5\\
&$\tau(\check{\beta}_e)$&10.3&10.0&10.4&25.4&24.2&25.4&40.3&40.3&40.8\\
%&$t_8(\tau(\check{\beta}_o))$&22.5&22.6&23.0&18.7&21.1&22.0&28.7&31.9&33.6\\
%&$t_{12}(\tau(\check{\beta}_o))$&31.2&31.4&30.4&21.5&23.2&24.1&29.0&33.4&33.6\\
%&$t_{16}(\tau(\check{\beta}_o))$&38.7&38.8&39.2&25.1&26.4&28.0&31.0&33.4&34.9\\
&IVX&100.0&100.0&100.0&100.0&100.0&100.0&100.0&100.0&100.0\\\hline
$\beta=1.5$&$\tau(\check{\beta}_o)$&14.2&14.2&14.7&42.4&44.1&43.1&64.4&65.1&66.4\\
&$\tau(\check{\beta}_e)$&14.4&14.4&14.4&42.8&42.5&42.8&65.9&66.9&66.5\\
%&$t_8(\tau(\check{\beta}_o))$&36.4&36.0&36.0&32.2&34.7&35.2&48.3&52.5&54.5\\
%&$t_{12}(\tau(\check{\beta}_o))$&50.3&50.5&50.3&37.8&39.1&40.0&51.2&55.2&56.5\\
%&$t_{16}(\tau(\check{\beta}_o))$&62.5&62.3&63.3&46.0&45.4&46.1&56.9&57.9&60.0\\
&IVX&100.0&100.0&100.0&100.0&100.0&100.0&100.0&100.0&100.0\\\hline
$\beta=2$&$\tau(\check{\beta}_o)$&19.0&18.9&19.6&59.7&61.4&61.6&82.9&84.3&84.6\\
&$\tau(\check{\beta}_e)$&19.4&19.1&19.1&59.5&60.2&60.6&83.4&84.7&84.8\\
%&$t_8(\tau(\check{\beta}_o))$&49.9&49.0&49.3&45.7&48.9&49.7&65.1&70.4&72.2\\
%&$t_{12}(\tau(\check{\beta}_o))$&67.9&67.5&67.1&55.1&56.2&55.9&69.9&73.8&75.9\\
%&$t_{16}(\tau(\check{\beta}_o))$&80.4&80.3&80.5&65.8&64.5&64.3&76.5&77.2&79.4\\
&IVX&100.0&100.0&100.0&100.0&100.0&100.0&100.0&100.0&100.0\\\hline
$\beta=2.5$&$\tau(\check{\beta}_o)$&24.1&24.1&24.9&73.1&75.7&75.7&92.5&93.8&94.4\\
&$\tau(\check{\beta}_e)$&24.5&24.0&24.6&73.7&74.3&75.5&92.8&94.1&94.1\\
%&$t_8(\tau(\check{\beta}_o))$&61.6&60.4&60.7&58.5&61.6&62.0&77.8&83.2&84.4\\
%&$t_{12}(\tau(\check{\beta}_o))$&80.4&79.2&79.4&69.3&70.5&70.1&83.1&86.1&87.8\\
%&$t_{16}(\tau(\check{\beta}_o))$&90.6&90.1&90.3&79.9&78.8&79.0&88.7&89.3&90.8\\
&IVX&100.0&100.0&100.0&100.0&100.0&100.0&100.0&100.0&100.0\\
 \bottomrule\smallskip
\end{tabularx}
\end{center}
\scriptsize{
This table reports the rejection rates of $\tau(\check{\beta}_o)$ and $\tau(\check{\beta}_e)$, as well as the IVX tests, based on simulated data under $\beta \in \{0, 0.5, 1, 1.5, 2, 2.5\}$  (one-sided tests). The data are generated from the discrete-time model \eqref{MC2}--\eqref{MC3} with persistence parameter $\bar{\kappa} \in \{0,50,100\}$ and MA(3) innovations. The model is sampled at a monthly frequency, so that a sample of length $T$ years contains $12T$ observations, with $T \in \{20,50,100\}$. The volatility is constant.
}
\end{table}

\newpage

\begin{table}[h!]
\begin{center}
\scriptsize
\caption{Size and Power for Discrete-Time Models with Structural Breaks in Volatility and MA(3) Innovations (One-Sided Tests)\label{atab2}}

\begin{tabularx}{0.84\textwidth}{ccccccccccc} \toprule
&&\multicolumn{3}{c}{$\bar{\kappa}=0$}&\multicolumn{3}{c}{$\bar{\kappa}=50$}&\multicolumn{3}{c}{$\bar{\kappa}=100$}\\
\cmidrule(r){3-5}\cmidrule(r){6-8}\cmidrule(r){9-11}
%T&&240&600&1200&240&600&1200&240&600&1200\\\hline
T&&20&50&100&20&50&100&20&50&100\\\hline
$\beta=0$&$\tau(\check{\beta}_o)$&4.6&4.9&5.0&4.5&5.0&5.0&4.3&5.0&5.0\\
&$\tau(\check{\beta}_e)$&4.8&4.7&4.8&4.6&4.6&4.8&4.9&5.1&5.4\\
%&q=4&3.3&3.1&3.0&2.2&2.5&3.0&2.4&2.7&3.0\\
%&$t_8(\tau(\check{\beta}_o))$&3.4&3.7&3.5&2.2&2.7&3.0&2.0&2.7&3.2\\
%&$t_{12}(\tau(\check{\beta}_o))$&4.2&4.1&3.7&2.2&2.8&3.2&1.8&2.6&3.0\\
%&$t_{16}(\tau(\check{\beta}_o))$&4.3&4.4&4.4&2.5&2.7&3.3&1.6&2.3&3.0\\
&IVX&32.7&32.5&31.2&32.1&33.5&34.3&32.4&34.1&35.5\\\hline
$\beta=0.5$&$\tau(\check{\beta}_o)$&9.8&10.1&11.0&7.9&8.9&9.0&9.0&11.4&12.0\\
&$\tau(\check{\beta}_e)$&10.2&10.2&10.4&8.0&8.3&8.8&10.4&11.8&12.4\\
%&$t_8(\tau(\check{\beta}_o))$&15.1&15.3&15.5&5.5&6.7&7.5&5.8&8.1&9.0\\
%&$t_{12}(\tau(\check{\beta}_o))$&21.6&22.5&22.1&7.1&9.0&9.3&6.1&8.9&9.9\\
%&$t_{16}(\tau(\check{\beta}_o))$&26.3&28.5&28.5&9.0&10.3&12.1&6.6&9.6&10.7\\
&IVX&100.0&100.0&100.0&100.0&100.0&100.0&100.0&100.0&100.0\\\hline
$\beta=1$&$\tau(\check{\beta}_o)$&18.9&19.2&20.8&13.9&16.0&15.7&18.1&23.3&25.3\\
&$\tau(\check{\beta}_e)$&19.3&20.0&20.0&14.0&14.7&15.8&20.0&24.4&26.2\\
%&$t_8(\tau(\check{\beta}_o))$&34.2&34.0&34.5&11.9&14.1&15.1&13.3&18.6&19.8\\
%&$t_{12}(\tau(\check{\beta}_o))$&48.3&50.4&50.2&18.2&20.2&21.7&16.1&21.9&23.3\\
%&$t_{16}(\tau(\check{\beta}_o))$&59.6&61.5&62.9&22.8&26.5&28.3&18.2&24.6&26.9\\
&IVX&100.0&100.0&100.0&100.0&100.0&100.0&100.0&100.0&100.0\\\hline
$\beta=1.5$&$\tau(\check{\beta}_o)$&29.3&29.7&30.9&22.0&25.4&25.5&31.0&39.5&43.1\\
&$\tau(\check{\beta}_e)$&29.4&29.7&30.0&21.9&24.4&25.8&33.9&40.5&43.3\\
%&$t_8(\tau(\check{\beta}_o))$&50.5&50.7&50.9&21.5&24.0&25.3&24.4&31.9&33.6\\
%&$t_{12}(\tau(\check{\beta}_o))$&69.6&71.3&71.2&33.1&35.7&37.3&30.8&38.3&40.3\\
%&$t_{16}(\tau(\check{\beta}_o))$&81.2&83.1&83.2&41.5&46.4&47.7&36.5&44.0&47.4\\
&IVX&100.0&100.0&100.0&100.0&100.0&100.0&100.0&100.0&100.0\\\hline
$\beta=2$&$\tau(\check{\beta}_o)$&38.4&38.1&39.8&31.5&36.6&36.8&45.5&56.6&60.8\\
&$\tau(\check{\beta}_e)$&38.5&38.7&38.8&31.8&35.8&37.5&49.0&58.0&60.4\\
%&$t_8(\tau(\check{\beta}_o))$&63.2&63.5&63.0&31.9&34.5&36.6&36.1&45.2&47.7\\
%&$t_{12}(\tau(\check{\beta}_o))$&82.2&83.8&83.8&49.2&51.5&52.3&46.4&54.8&56.8\\
%&$t_{16}(\tau(\check{\beta}_o))$&91.6&92.8&92.7&60.3&64.8&65.0&54.6&62.8&65.2\\
&IVX&100.0&100.0&100.0&100.0&100.0&100.0&100.0&100.0&100.0\\\hline
$\beta=2.5$&$\tau(\check{\beta}_o)$&45.9&45.2&46.9&41.8&47.2&48.6&59.1&70.3&74.9\\
&$\tau(\check{\beta}_e)$&45.5&45.7&46.3&41.9&47.2&48.4&62.4&71.8&74.2\\
%&$t_8(\tau(\check{\beta}_o))$&71.5&71.7&70.9&41.7&45.3&46.4&46.5&56.7&59.3\\
%&$t_{12}(\tau(\check{\beta}_o))$&89.5&90.3&90.3&62.4&64.9&64.7&60.0&68.1&69.8\\
%&$t_{16}(\tau(\check{\beta}_o))$&96.2&96.8&96.6&74.1&77.3&78.2&70.3&76.7&78.3\\
&IVX&100.0&100.0&100.0&100.0&100.0&100.0&100.0&100.0&100.0\\
 \bottomrule\smallskip
\end{tabularx}
\end{center}
\scriptsize{
This table reports the rejection rates of $\tau(\check{\beta}_o)$ and $\tau(\check{\beta}_e)$, as well as the IVX tests, based on simulated data under $\beta \in \{0, 0.5, 1, 1.5, 2, 2.5\}$  (one-sided tests). The data are generated from the discrete-time model \eqref{MC2}--\eqref{MC3} with persistence parameter $\bar{\kappa} \in \{0,50,100\}$ and MA(3) innovations. The model is sampled at a monthly frequency, so that a sample of length $T$ years contains $12T$ observations, with $T \in \{20,50,100\}$. The volatility exhibits structural breaks.
}
\end{table}

\newpage

\begin{table}[h!]
\begin{center}
\scriptsize
\caption{Size and Power for Discrete-Time Models with  Regime Switching in Volatility with MA(3) Innovations (One-Sided Tests)\label{atab3}}

\begin{tabularx}{0.84\textwidth}{ccccccccccc} \toprule
&&\multicolumn{3}{c}{$\bar{\kappa}=0$}&\multicolumn{3}{c}{$\bar{\kappa}=50$}&\multicolumn{3}{c}{$\bar{\kappa}=100$}\\
\cmidrule(r){3-5}\cmidrule(r){6-8}\cmidrule(r){9-11}
%T&&240&600&1200&240&600&1200&240&600&1200\\\hline
T&&20&50&100&20&50&100&20&50&100\\\hline
$\beta=0$&$\tau(\check{\beta}_o)$&3.8&3.9&4.3&4.3&5.1&5.3&4.6&4.8&4.9\\
&$\tau(\check{\beta}_e)$&4.0&4.2&4.3&4.8&4.9&4.7&4.9&4.8&4.5\\
%&$t_8(\tau(\check{\beta}_o))$&3.5&3.5&3.5&2.3&2.3&2.5&2.3&2.4&3.2\\
%&$t_{12}(\tau(\check{\beta}_o))$&3.1&3.4&3.2&1.8&2.0&2.6&2.0&1.9&2.6\\
%&$t_{16}(\tau(\check{\beta}_o))$&3.5&3.2&3.4&1.6&1.8&2.1&1.6&1.8&2.0\\
&IVX&15.6&13.6&13.4&9.8&10.5&11.1&9.6&11.2&12.4\\\hline
$\beta=0.5$&$\tau(\check{\beta}_o)$&7.8&7.9&8.8&10.2&11.6&11.5&12.4&14.5&15.8\\
&$\tau(\check{\beta}_e)$&8.5&8.7&8.7&10.5&11.4&11.9&12.7&14.4&14.9\\
%&$t_8(\tau(\check{\beta}_o))$&16.0&15.5&17.4&8.2&7.9&8.0&8.8&9.8&11.2\\
%&$t_{12}(\tau(\check{\beta}_o))$&20.7&21.9&23.2&9.3&8.9&9.2&9.7&10.4&11.3\\
%&$t_{16}(\tau(\check{\beta}_o))$&26.5&26.4&28.5&10.8&9.8&11.1&10.5&10.1&10.7\\
&IVX&100.0&100.0&100.0&100.0&100.0&100.0&100.0&100.0&100.0\\\hline
$\beta=1$&$\tau(\check{\beta}_o)$&15.3&15.4&17.0&22.7&24.9&24.1&27.2&34.7&36.4\\
&$\tau(\check{\beta}_e)$&16.2&16.8&16.9&22.9&24.8&24.0&28.5&34.4&35.7\\
%&$t_8(\tau(\check{\beta}_o))$&37.7&38.7&40.8&21.0&20.2&19.5&22.9&25.2&28.1\\
%&$t_{12}(\tau(\check{\beta}_o))$&53.2&53.8&56.8&26.3&25.9&24.4&28.0&29.0&29.3\\
%&$t_{16}(\tau(\check{\beta}_o))$&65.2&66.9&69.0&34.5&31.2&32.7&33.5&31.7&32.3\\
&IVX&100.0&100.0&100.0&100.0&100.0&100.0&100.0&100.0&100.0\\\hline
$\beta=1.5$&$\tau(\check{\beta}_o)$&24.1&24.2&26.8&39.3&42.3&41.9&47.5&58.9&61.5\\
&$\tau(\check{\beta}_e)$&25.4&25.9&26.3&38.3&41.5&41.5&47.6&59.1&60.6\\
%&$t_8(\tau(\check{\beta}_o))$&59.4&59.6&61.8&37.5&36.3&34.6&42.3&44.5&47.6\\
%&$t_{12}(\tau(\check{\beta}_o))$&78.1&78.7&81.1&49.4&48.4&45.9&52.1&53.0&53.6\\
%&$t_{16}(\tau(\check{\beta}_o))$&88.3&89.8&90.7&62.2&58.6&58.4&62.0&59.9&59.4\\
&IVX&100.0&100.0&100.0&100.0&100.0&100.0&100.0&100.0&100.0\\\hline
$\beta=2$&$\tau(\check{\beta}_o)$&32.9&33.4&35.4&55.1&59.0&59.1&64.5&77.4&80.3\\
&$\tau(\check{\beta}_e)$&34.3&35.1&35.0&53.1&58.4&58.5&64.7&77.3&79.3\\
%&$t_8(\tau(\check{\beta}_o))$&74.1&74.3&76.0&53.0&52.1&50.5&59.3&62.3&65.2\\
%&$t_{12}(\tau(\check{\beta}_o))$&90.3&91.1&92.0&69.3&68.4&65.7&72.1&72.5&73.0\\
%&$t_{16}(\tau(\check{\beta}_o))$&96.6&97.2&97.6&82.1&80.2&78.8&82.8&80.5&79.4\\
&IVX&100.0&100.0&100.0&100.0&100.0&100.0&100.0&100.0&100.0\\\hline
$\beta=2.5$&$\tau(\check{\beta}_o)$&40.9&41.0&42.4&67.3&71.8&72.4&76.2&87.7&91.3\\
&$\tau(\check{\beta}_e)$&42.3&42.4&42.3&64.5&71.5&71.8&76.0&87.6&89.9\\
%&$t_8(\tau(\check{\beta}_o))$&83.8&83.7&84.8&66.6&65.6&63.7&71.9&75.0&77.3\\
%&$t_{12}(\tau(\check{\beta}_o))$&95.8&96.1&96.7&82.8&81.9&79.9&85.1&85.1&85.0\\
%&$t_{16}(\tau(\check{\beta}_o))$&99.0&99.3&99.4&92.7&91.5&90.0&93.0&91.2&90.8\\
&IVX&100.0&100.0&100.0&100.0&100.0&100.0&100.0&100.0&100.0\\
 \bottomrule\smallskip
\end{tabularx}
\end{center}
\scriptsize{
This table reports the rejection rates of $\tau(\check{\beta}_o)$ and $\tau(\check{\beta}_e)$, as well as the IVX tests, based on simulated data under $\beta \in \{0, 0.5, 1, 1.5, 2, 2.5\}$  (one-sided tests). The data are generated from the discrete-time model \eqref{MC2}--\eqref{MC3} with persistence parameter $\bar{\kappa} \in \{0,50,100\}$ and MA(3) innovations. The model is sampled at a monthly frequency, so that a sample of length $T$ years contains $12T$ observations, with $T \in \{20,50,100\}$. The volatility exhibits regime switching.
}
\end{table}

\clearpage

\subsection{Discrete-Time Experiments: Two-Sided Tests}

We now consider two-sided tests. The DGP and simulation settings are the same as those for the one-sided tests in Section~B.1, but we additionally consider an MA(1) process with $C_0 = C_1 = 1/\sqrt{2}$. For the volatility specifications, we consider only constant volatility and structural breaks to conserve space, as regime-switching volatility yields qualitatively similar results to the structural break case.

The results, summarized in Tables~\ref{atab4}--\ref{atab7}, indicate that the proposed tests exhibit accurate size control under the null hypothesis across all DGPs, whereas the IVX test is substantially oversized when volatility exhibits structural breaks. Our simulation results further show that the IVX approach exhibits severe size distortions not only in one-sided tests but also in two-sided tests when volatility is nonstationary, whereas our methods exhibit accurate size control with reasonably good power in finite samples.

However, we do not argue that our methods should be preferred to the IVX method in the discrete-time predictive regressions. Under constant volatility---the setting for which the IVX method was originally designed---the IVX method exhibits substantially higher power than our methods. In this sense, as emphasized in the main paper, our methods and the IVX method of \citet{KMS2015} (as well as the Cauchy RT test of \citet{CJP2016} and the Cauchy VC test of \citet{ibragimov2023new}) should be viewed as complementary rather than competing approaches.

\begin{table}[h!]
\begin{center}
\scriptsize
\caption{Size and Power for Discrete-Time Models with Constant Volatility and MA(1) Innovations (Two-Sided Tests)\label{atab4}}

\begin{tabularx}{0.84\textwidth}{ccccccccccc} \toprule
&&\multicolumn{3}{c}{$\bar{\kappa}=0$}&\multicolumn{3}{c}{$\bar{\kappa}=50$}&\multicolumn{3}{c}{$\bar{\kappa}=100$}\\
\cmidrule(r){3-5}\cmidrule(r){6-8}\cmidrule(r){9-11}
T&&20&50&100&20&50&100&20&50&100\\\hline
$\beta=0$&$\tau(\check{\beta}_o)$&4.9&5.1&5.1&5.2&5.2&4.8&4.7&4.8&5.4\\
&$\tau(\check{\beta}_e)$&5.0&5.1&5.0&5.1&5.2&5.0&5.0&4.5&5.0\\
%&$t_8(\tau(\check{\beta}_o))$&5.1&5.0&5.1&5.2&5.0&4.9&5.2&4.6&4.9\\
%&$t_{12}(\tau(\check{\beta}_o))$&5.3&5.0&4.7&6.1&5.1&5.7&5.5&5.2&4.7\\
%&$t_{16}(\tau(\check{\beta}_o))$&5.1&5.2&5.3&5.3&5.2&5.3&6.1&5.2&5.3\\
&IVX&7.1&6.6&5.9&5.2&5.5&5.2&4.6&5.1&5.4\\\hline
$\beta=0.5$&$\tau(\check{\beta}_o)$&6.1&6.0&6.3&10.6&10.9&10.6&16.5&16.5&17.2\\
&$\tau(\check{\beta}_e)$&6.0&5.9&6.1&10.7&10.2&10.6&16.6&16.9&17.1\\
%&$t_8(\tau(\check{\beta}_o))$&9.3&9.0&9.4&7.9&8.4&9.5&11.4&12.4&13.1\\
%&$t_{12}(\tau(\check{\beta}_o))$&12.8&12.6&12.2&8.6&9.4&9.8&10.9&13.0&13.2\\
%&$t_{16}(\tau(\check{\beta}_o))$&16.1&15.8&16.1&9.3&10.4&11.2&11.3&13.1&13.8\\
&IVX&100.0&100.0&100.0&100.0&100.0&100.0&100.0&100.0&100.0\\\hline
$\beta=1$&$\tau(\check{\beta}_o)$&9.5&9.0&9.7&29.2&29.8&28.4&49.6&49.8&50.4\\
&$\tau(\check{\beta}_e)$&9.5&9.2&9.5&29.1&27.8&28.6&50.5&50.7&51.4\\
%&$t_8(\tau(\check{\beta}_o))$&21.1&21.3&21.4&18.3&20.2&21.4&30.3&34.4&36.0\\
%&$t_{12}(\tau(\check{\beta}_o))$&33.1&33.8&32.7&23.1&24.4&24.9&33.2&37.9&38.5\\
%&$t_{16}(\tau(\check{\beta}_o))$&45.2&44.8&44.7&29.3&29.6&30.4&37.6&39.8&41.8\\
&IVX&100.0&100.0&100.0&100.0&100.0&100.0&100.0&100.0&100.0\\\hline
$\beta=1.5$&$\tau(\check{\beta}_o)$&14.9&14.7&15.3&52.9&54.7&53.9&79.6&80.5&81.0\\
&$\tau(\check{\beta}_e)$&14.8&14.5&14.6&53.4&53.1&53.8&80.0&81.4&81.4\\
%&$t_8(\tau(\check{\beta}_o))$&36.3&36.0&36.3&34.1&36.3&37.1&52.0&58.5&60.9\\
%&$t_{12}(\tau(\check{\beta}_o))$&56.4&56.7&55.8&43.9&44.6&44.6&60.1&64.7&66.4\\
%&$t_{16}(\tau(\check{\beta}_o))$&72.2&71.9&71.9&55.8&54.3&54.6&67.9&69.0&71.8\\
&IVX&100.0&100.0&100.0&100.0&100.0&100.0&100.0&100.0&100.0\\\hline
$\beta=2$&$\tau(\check{\beta}_o)$&21.2&21.0&21.8&73.0&75.3&75.1&93.0&94.3&95.1\\
&$\tau(\check{\beta}_e)$&21.1&20.9&21.2&73.0&74.0&74.8&93.5&94.8&95.0\\
%&$t_8(\tau(\check{\beta}_o))$&50.7&50.0&50.2&48.6&52.3&52.9&68.9&76.7&78.6\\
%&$t_{12}(\tau(\check{\beta}_o))$&74.4&74.0&73.7&63.1&63.9&63.9&78.3&82.4&84.4\\
%&$t_{16}(\tau(\check{\beta}_o))$&88.3&88.6&87.8&76.4&75.2&75.0&86.1&87.0&88.8\\
&IVX&100.0&100.0&100.0&100.0&100.0&100.0&100.0&100.0&100.0\\\hline
$\beta=2.5$&$\tau(\check{\beta}_o)$&27.5&27.1&28.4&85.3&87.3&87.9&97.7&98.4&98.8\\
&$\tau(\check{\beta}_e)$&27.8&27.5&28.1&85.0&86.7&87.7&98.0&98.5&98.8\\
%&$t_8(\tau(\check{\beta}_o))$&61.8&61.7&61.3&61.1&64.6&65.9&80.2&86.7&88.5\\
%&$t_{12}(\tau(\check{\beta}_o))$&85.6&85.2&85.0&76.8&77.4&77.9&88.9&91.8&93.5\\
%&$t_{16}(\tau(\check{\beta}_o))$&95.2&95.2&95.0&88.5&87.2&87.6&94.3&95.2&96.0\\
&IVX&100.0&100.0&100.0&100.0&100.0&100.0&100.0&100.0&100.0\\
 \bottomrule\smallskip
\end{tabularx}
\end{center}
\scriptsize{
This table reports the rejection rates of $\tau(\check{\beta}_o)$ and $\tau(\check{\beta}_e)$, as well as the IVX tests, based on simulated data under $\beta \in \{0, 0.5, 1, 1.5, 2, 2.5\}$  (two-sided tests). The data are generated from the discrete-time model \eqref{MC2}--\eqref{MC3} with persistence parameter $\bar{\kappa} \in \{0,50,100\}$ and MA(1) innovations. The model is sampled at a monthly frequency, so that a sample of length $T$ years contains $12T$ observations, with $T \in \{20,50,100\}$. The volatility is constant.
}
\end{table}

\newpage

\begin{table}[h!]
\begin{center}
\scriptsize
\caption{Size and Power for Discrete-Time Models with Structural Breaks in  Volatility and MA(1) Innovations (Two-Sided Tests)\label{atab5}}

\begin{tabularx}{0.84\textwidth}{ccccccccccc} \toprule
&&\multicolumn{3}{c}{$\bar{\kappa}=0$}&\multicolumn{3}{c}{$\bar{\kappa}=50$}&\multicolumn{3}{c}{$\bar{\kappa}=100$}\\
\cmidrule(r){3-5}\cmidrule(r){6-8}\cmidrule(r){9-11}
T&&20&50&100&20&50&100&20&50&100\\\hline
$\beta=0$&$\tau(\check{\beta}_o)$&4.9&5.1&5.3&5.0&4.9&5.0&5.0&4.8&5.0\\
&$\tau(\check{\beta}_e)$&4.7&5.0&5.3&5.0&5.0&5.1&4.6&4.7&5.1\\
%&$t_8(\tau(\check{\beta}_o))$&3.0&2.9&3.0&3.0&3.0&3.1&3.2&2.8&2.7\\
%&$t_{12}(\tau(\check{\beta}_o))$&3.5&3.5&3.0&3.9&3.5&3.7&3.7&3.7&3.2\\
%&$t_{16}(\tau(\check{\beta}_o))$&3.9&4.0&4.2&3.9&4.0&3.9&4.5&4.0&3.5\\
&IVX&20.7&20.9&19.9&24.6&25.8&25.6&25.3&25.5&26.9\\\hline
$\beta=0.5$&$\tau(\check{\beta}_o)$&5.7&6.1&6.2&8.3&8.8&8.2&12.2&11.7&12.3\\
&$\tau(\check{\beta}_e)$&6.1&6.1&6.1&8.2&8.1&7.9&12.1&12.3&12.6\\
%&$t_8(\tau(\check{\beta}_o))$&5.3&5.4&5.6&4.1&4.8&5.2&6.2&7.1&7.2\\
%&$t_{12}(\tau(\check{\beta}_o))$&8.5&8.0&7.6&5.5&5.9&6.3&6.7&7.8&7.9\\
%&$t_{16}(\tau(\check{\beta}_o))$&10.4&11.1&11.3&6.1&6.8&7.4&7.6&8.3&9.5\\
&IVX&100.0&100.0&100.0&100.0&100.0&100.0&100.0&100.0&100.0\\\hline
$\beta=1$&$\tau(\check{\beta}_o)$&8.9&9.2&10.1&20.6&19.9&18.9&33.4&33.8&34.3\\
&$\tau(\check{\beta}_e)$&9.4&9.6&9.8&19.7&19.2&18.3&34.5&35.3&34.8\\
%&$t_8(\tau(\check{\beta}_o))$&12.3&12.3&12.6&10.3&11.3&12.0&17.4&19.4&20.5\\
%&$t_{12}(\tau(\check{\beta}_o))$&21.7&21.3&21.2&15.0&15.5&15.4&20.3&23.6&24.4\\
%&$t_{16}(\tau(\check{\beta}_o))$&29.5&29.8&31.0&18.7&19.8&20.7&24.9&26.3&27.5\\
&IVX&100.0&100.0&100.0&100.0&100.0&100.0&100.0&100.0&100.0\\\hline
$\beta=1.5$&$\tau(\check{\beta}_o)$&14.5&14.5&15.5&36.8&37.4&36.3&58.8&60.7&62.0\\
&$\tau(\check{\beta}_e)$&15.0&15.0&15.3&36.0&35.5&35.5&61.6&61.9&62.3\\
%&$t_8(\tau(\check{\beta}_o))$&21.7&21.8&21.8&19.0&21.1&21.2&31.4&34.8&36.3\\
%&$t_{12}(\tau(\check{\beta}_o))$&38.4&38.6&38.8&28.7&29.6&29.2&38.8&44.0&44.7\\
%&$t_{16}(\tau(\check{\beta}_o))$&51.3&52.8&53.4&36.9&38.3&38.6&47.0&49.4&51.2\\
&IVX&100.0&100.0&100.0&100.0&100.0&100.0&100.0&100.0&100.0\\\hline
$\beta=2$&$\tau(\check{\beta}_o)$&21.3&21.1&21.9&53.0&55.0&54.5&78.2&80.8&81.9\\
&$\tau(\check{\beta}_e)$&21.7&21.3&21.6&53.2&53.2&54.5&79.3&81.3&81.8\\
%&$t_8(\tau(\check{\beta}_o))$&31.2&30.9&31.0&28.5&30.9&31.5&43.2&48.5&50.6\\
%&$t_{12}(\tau(\check{\beta}_o))$&53.4&53.8&53.6&42.6&43.4&43.0&55.3&60.8&62.3\\
%&$t_{16}(\tau(\check{\beta}_o))$&68.8&70.4&70.5&55.1&56.3&56.2&65.8&68.3&70.2\\
&IVX&100.0&100.0&100.0&100.0&100.0&100.0&100.0&100.0&100.0\\\hline
$\beta=2.5$&$\tau(\check{\beta}_o)$&27.5&27.2&28.8&66.4&69.5&69.0&88.3&91.4&92.4\\
&$\tau(\check{\beta}_e)$&28.2&27.7&27.8&66.8&67.6&69.1&89.5&91.4&92.1\\
%&$t_8(\tau(\check{\beta}_o))$&39.1&38.8&38.7&37.2&39.1&40.1&52.4&59.1&60.6\\
%&$t_{12}(\tau(\check{\beta}_o))$&65.3&65.8&64.9&54.9&55.7&55.5&67.1&72.2&74.3\\
%&$t_{16}(\tau(\check{\beta}_o))$&80.0&81.0&81.3&68.7&69.5&69.7&78.5&80.5&82.7\\
&IVX&100.0&100.0&100.0&100.0&100.0&100.0&100.0&100.0&100.0\\
 \bottomrule\smallskip
\end{tabularx}
\end{center}
\scriptsize{
This table reports the rejection rates of $\tau(\check{\beta}_o)$ and $\tau(\check{\beta}_e)$, as well as the IVX tests, based on simulated data under $\beta \in \{0, 0.5, 1, 1.5, 2, 2.5\}$  (two-sided tests). The data are generated from the discrete-time model \eqref{MC2}--\eqref{MC3} with persistence parameter $\bar{\kappa} \in \{0,50,100\}$ and MA(1) innovations. The model is sampled at a monthly frequency, so that a sample of length $T$ years contains $12T$ observations, with $T \in \{20,50,100\}$. The volatility exhibits structural breaks.
}
\end{table}

\newpage

\begin{table}[h!]
\begin{center}
\scriptsize
\caption{Size and Power for Discrete-Time Models with Constant Volatility and MA(3) Innovations (Two-Sided Tests)\label{atab6}}

\begin{tabularx}{0.84\textwidth}{ccccccccccc} \toprule
&&\multicolumn{3}{c}{$\bar{\kappa}=0$}&\multicolumn{3}{c}{$\bar{\kappa}=50$}&\multicolumn{3}{c}{$\bar{\kappa}=100$}\\
\cmidrule(r){3-5}\cmidrule(r){6-8}\cmidrule(r){9-11}
T&&20&50&100&20&50&100&20&50&100\\\hline
$\beta=0$&$\tau(\check{\beta}_o)$&5.0&4.8&5.1&4.8&4.9&5.0&4.8&4.8&5.1\\
&$\tau(\check{\beta}_e)$&4.8&5.1&5.0&5.0&4.9&4.9&4.6&4.9&5.3\\
%&$t_8(\tau(\check{\beta}_o))$&5.2&4.9&5.1&5.7&5.3&5.0&6.3&5.0&4.9\\
%&$t_{12}(\tau(\check{\beta}_o))$&5.3&5.4&4.9&5.9&5.5&5.4&7.3&5.5&5.2\\
%&$t_{16}(\tau(\check{\beta}_o))$&5.1&4.8&5.1&6.7&5.7&5.1&7.9&6.5&5.6\\
&IVX&8.0&7.0&6.3&4.9&5.3&5.3&4.8&4.8&5.5\\\hline
$\beta=0.5$&$\tau(\check{\beta}_o)$&7.5&7.2&7.4&6.6&7.4&6.9&6.8&9.0&9.7\\
&$\tau(\check{\beta}_e)$&7.2&7.2&7.4&6.8&6.6&7.3&7.6&9.4&10.0\\
%&$t_8(\tau(\check{\beta}_o))$&12.1&12.9&13.1&5.5&5.8&6.8&6.1&6.6&7.5\\
%&$t_{12}(\tau(\check{\beta}_o))$&18.1&18.9&18.4&6.1&6.9&7.4&5.5&6.9&7.5\\
%&$t_{16}(\tau(\check{\beta}_o))$&23.7&24.8&25.7&7.1&7.5&8.4&5.5&7.5&8.0\\
&IVX&100.0&100.0&100.0&100.0&100.0&100.0&100.0&100.0&100.0\\\hline
$\beta=1$&$\tau(\check{\beta}_o)$&13.9&14.0&14.6&13.1&14.7&14.6&17.6&23.1&25.4\\
&$\tau(\check{\beta}_e)$&14.0&14.1&14.4&12.8&13.2&15.3&17.8&23.4&25.5\\
%&$t_8(\tau(\check{\beta}_o))$&30.7&31.6&32.1&10.4&11.8&13.1&11.6&16.1&18.2\\
%&$t_{12}(\tau(\check{\beta}_o))$&48.6&49.3&49.6&14.1&16.4&17.0&13.0&18.3&19.9\\
%&$t_{16}(\tau(\check{\beta}_o))$&63.4&65.1&65.6&20.0&22.0&23.6&15.1&20.5&22.3\\
&IVX&100.0&100.0&100.0&100.0&100.0&100.0&100.0&100.0&100.0\\\hline
$\beta=1.5$&$\tau(\check{\beta}_o)$&22.7&22.8&24.1&23.8&27.5&27.8&35.6&45.8&49.6\\
&$\tau(\check{\beta}_e)$&22.8&23.4&23.8&23.5&25.6&28.2&35.9&46.3&48.7\\
%&$t_8(\tau(\check{\beta}_o))$&49.7&50.2&51.2&19.2&21.8&22.9&22.4&30.2&33.7\\
%&$t_{12}(\tau(\check{\beta}_o))$&73.5&74.0&74.7&29.6&32.1&33.2&27.2&35.8&38.8\\
%&$t_{16}(\tau(\check{\beta}_o))$&87.4&88.1&88.0&41.6&45.6&45.9&34.6&42.1&46.0\\
&IVX&100.0&100.0&100.0&100.0&100.0&100.0&100.0&100.0&100.0\\\hline
$\beta=2$&$\tau(\check{\beta}_o)$&31.3&31.4&32.8&38.2&43.6&44.7&55.8&68.0&71.9\\
&$\tau(\check{\beta}_e)$&31.6&31.9&33.0&37.8&41.1&44.3&56.0&67.8&71.0\\
%&$t_8(\tau(\check{\beta}_o))$&64.0&64.5&64.6&30.3&33.1&34.6&35.2&45.6&50.2\\
%&$t_{12}(\tau(\check{\beta}_o))$&87.3&87.0&87.3&47.0&49.3&50.8&44.6&55.0&58.4\\
%&$t_{16}(\tau(\check{\beta}_o))$&96.2&96.4&96.5&64.1&66.7&66.1&57.2&64.3&68.0\\
&IVX&100.0&100.0&100.0&100.0&100.0&100.0&100.0&100.0&100.0\\\hline
$\beta=2.5$&$\tau(\check{\beta}_o)$&38.5&38.8&40.1&52.3&59.1&60.7&72.8&83.5&86.3\\
&$\tau(\check{\beta}_e)$&38.7&39.3&40.4&52.0&56.9&60.5&72.4&83.1&86.0\\
%&$t_8(\tau(\check{\beta}_o))$&73.8&74.3&74.1&41.8&44.4&45.9&47.4&59.4&64.1\\
%&$t_{12}(\tau(\check{\beta}_o))$&93.8&93.6&93.6&63.2&65.6&65.7&60.6&70.9&74.1\\
%&$t_{16}(\tau(\check{\beta}_o))$&98.7&98.9&98.8&79.7&81.4&81.4&74.8&80.2&83.4\\
&IVX&100.0&100.0&100.0&100.0&100.0&100.0&100.0&100.0&100.0\\
 \bottomrule\smallskip
\end{tabularx}
\end{center}
\scriptsize{
This table reports the rejection rates of $\tau(\check{\beta}_o)$ and $\tau(\check{\beta}_e)$, as well as the IVX tests, based on simulated data under $\beta \in \{0, 0.5, 1, 1.5, 2, 2.5\}$  (two-sided tests). The data are generated from the discrete-time model \eqref{MC2}--\eqref{MC3} with persistence parameter $\bar{\kappa} \in \{0,50,100\}$ and MA(3) innovations. The model is sampled at a monthly frequency, so that a sample of length $T$ years contains $12T$ observations, with $T \in \{20,50,100\}$. The volatility is constant.
}
\end{table}

\newpage

\begin{table}[h!]
\begin{center}
\scriptsize
\caption{Size and Power for Discrete-Time Models with Structural Breaks in Volatility and MA(3) Innovations (Two-Sided Tests)\label{atab7}}

\begin{tabularx}{0.84\textwidth}{ccccccccccc} \toprule
&&\multicolumn{3}{c}{$\bar{\kappa}=0$}&\multicolumn{3}{c}{$\bar{\kappa}=50$}&\multicolumn{3}{c}{$\bar{\kappa}=100$}\\
\cmidrule(r){3-5}\cmidrule(r){6-8}\cmidrule(r){9-11}
T&&20&50&100&20&50&100&20&50&100\\\hline
$\beta=0$&$\tau(\check{\beta}_o)$&4.8&5.1&5.2&5.0&5.0&5.0&4.5&4.7&5.1\\
&$\tau(\check{\beta}_e)$&4.9&5.0&5.2&5.2&5.1&4.7&5.0&4.8&5.4\\
%&$t_8(\tau(\check{\beta}_o))$&3.0&3.0&3.0&3.4&3.2&3.1&3.4&3.3&2.9\\
%&$t_{12}(\tau(\check{\beta}_o))$&3.4&3.6&3.2&3.7&3.8&3.6&4.8&3.8&3.8\\
%&$t_{16}(\tau(\check{\beta}_o))$&3.9&3.9&3.8&4.9&4.1&3.7&5.5&4.6&4.3\\
&IVX&22.0&22.1&20.5&24.2&24.9&25.8&24.4&25.5&27.0\\\hline
$\beta=0.5$&$\tau(\check{\beta}_o)$&6.8&7.3&7.4&5.5&5.9&6.2&5.6&7.3&7.5\\
&$\tau(\check{\beta}_e)$&7.1&7.0&7.4&5.7&5.4&5.8&6.3&7.2&8.0\\
%&$t_8(\tau(\check{\beta}_o))$&7.2&7.4&7.6&3.2&3.5&3.9&3.2&3.8&4.3\\
%&$t_{12}(\tau(\check{\beta}_o))$&11.5&11.5&11.5&3.5&4.9&4.7&3.7&4.7&4.9\\
%&$t_{16}(\tau(\check{\beta}_o))$&15.4&16.4&17.0&5.2&5.3&6.2&4.0&5.2&5.9\\
&IVX&100.0&100.0&100.0&100.0&100.0&100.0&100.0&100.0&100.0\\\hline
$\beta=1$&$\tau(\check{\beta}_o)$&13.4&13.7&15.0&8.7&9.9&9.8&11.0&15.0&16.6\\
&$\tau(\check{\beta}_e)$&14.0&14.1&14.3&8.7&9.1&9.9&12.5&15.6&17.1\\
%&$t_8(\tau(\check{\beta}_o))$&17.9&18.5&19.0&6.0&6.6&7.3&6.1&8.9&9.8\\
%&$t_{12}(\tau(\check{\beta}_o))$&31.6&33.2&33.3&8.9&10.8&11.0&8.0&11.5&12.1\\
%&$t_{16}(\tau(\check{\beta}_o))$&43.1&46.0&46.3&13.0&14.9&16.4&9.7&14.1&15.2\\
&IVX&100.0&100.0&100.0&100.0&100.0&100.0&100.0&100.0&100.0\\\hline
$\beta=1.5$&$\tau(\check{\beta}_o)$&22.4&22.9&24.6&14.9&17.0&17.0&21.7&29.2&31.9\\
&$\tau(\check{\beta}_e)$&23.2&23.6&23.6&15.1&16.0&17.3&23.4&29.7&32.6\\
%&$t_8(\tau(\check{\beta}_o))$&30.4&30.8&30.9&10.7&11.9&13.1&12.3&17.1&18.0\\
%&$t_{12}(\tau(\check{\beta}_o))$&51.8&53.5&53.5&18.8&20.4&22.0&17.0&23.4&24.4\\
%&$t_{16}(\tau(\check{\beta}_o))$&67.2&69.3&70.0&27.1&30.9&32.3&21.8&29.0&31.8\\
&IVX&100.0&100.0&100.0&100.0&100.0&100.0&100.0&100.0&100.0\\\hline
$\beta=2$&$\tau(\check{\beta}_o)$&31.3&31.6&33.2&22.9&26.9&27.0&34.8&44.9&49.5\\
&$\tau(\check{\beta}_e)$&31.6&32.0&32.4&23.0&25.6&27.3&37.6&46.3&49.6\\
%&$t_8(\tau(\check{\beta}_o))$&41.0&41.1&41.2&17.1&18.9&19.6&19.9&26.6&28.1\\
%&$t_{12}(\tau(\check{\beta}_o))$&66.2&67.6&67.8&30.8&33.4&34.5&28.2&36.5&38.3\\
%&$t_{16}(\tau(\check{\beta}_o))$&82.1&83.5&83.8&43.5&48.1&49.0&38.1&46.2&49.0\\
&IVX&100.0&100.0&100.0&100.0&100.0&100.0&100.0&100.0&100.0\\\hline
$\beta=2.5$&$\tau(\check{\beta}_o)$&38.5&38.7&40.3&32.0&37.3&37.7&48.6&60.5&65.6\\
&$\tau(\check{\beta}_e)$&39.1&39.5&39.9&32.3&36.2&38.5&51.6&62.0&65.2\\
%&$t_8(\tau(\check{\beta}_o))$&49.2&48.8&48.6&24.0&25.8&26.7&27.8&35.7&37.7\\
%&$t_{12}(\tau(\check{\beta}_o))$&75.0&77.0&76.9&43.4&45.6&46.5&41.3&49.4&51.0\\
%&$t_{16}(\tau(\check{\beta}_o))$&89.5&90.8&90.8&58.4&63.0&63.4&52.8&61.1&63.5\\
&IVX&100.0&100.0&100.0&100.0&100.0&100.0&100.0&100.0&100.0\\
 \bottomrule\smallskip
\end{tabularx}
\end{center}
\scriptsize{
This table reports the rejection rates of $\tau(\check{\beta}_o)$ and $\tau(\check{\beta}_e)$, as well as the IVX tests, based on simulated data under $\beta \in \{0, 0.5, 1, 1.5, 2, 2.5\}$  (two-sided tests). The data are generated from the discrete-time model \eqref{MC2}--\eqref{MC3} with persistence parameter $\bar{\kappa} \in \{0,50,100\}$ and MA(3) innovations. The model is sampled at a monthly frequency, so that a sample of length $T$ years contains $12T$ observations, with $T \in \{20,50,100\}$. The volatility exhibits structural breaks.
}
\end{table}

\clearpage
\subsection{Simulation Results for Joint Predictability Tests}

In this subsection, we examine the finite-sample performance of the proposed Wald test in Section~A.1 and compare it with the IVX test. Specifically, we consider a joint predictability testing problem with two predictors. The DGP is specified as
\begin{align}
y_t &= \beta_1 x_{1,t-1} + \beta_2 x_{2,t-1} + \sigma_t \varepsilon_t, \label{MC4} \\
x_{it} &= \left(1 - \frac{\bar{\kappa}}{T}\right) x_{i,t-1} + \sigma_t \eta_{it}, \quad i = 1,2, \nonumber
\end{align}
for $t = 2, \ldots, 12T$, where $T \in \{20, 50, 100\}$ corresponds to 20, 50, and 100 years of monthly data (i.e., 240, 600, and 1200 observations, respectively). The constant term in the predictive regression is set to zero. We set $\beta_1 = \beta_2$ with $\beta_1 \in \{0.05k : k = 0, 1, \ldots, 5\}$ and $\bar{\kappa} \in \{0, 50, 100\}$, and consider the joint test of $H_0\!: \beta_1 = \beta_2 = 0$ against $H_A\!: \beta_1 \neq 0 \text{ or } \beta_2 \neq 0$.

The innovation process $\eta_{it}$ follows an MA($q$) process:
\begin{align}
\eta_{it} = \sum_{j=0}^q C_j v_{i,t-j}, \label{MC5}
\end{align}
where $(\varepsilon_t, v_{1t}, v_{2t})$ are jointly normal with the correlation matrix
\[
\begin{pmatrix}
1.0000 & -0.7596 & -0.2787 \\
-0.7596 & 1.0000 & 0.1246 \\
-0.2787 & 0.1246 & 1.0000
\end{pmatrix}.
\]
We consider two specifications for the innovation process: MA(1) with $C_0 = C_1 = 1/\sqrt{2}$ and MA(3) with $C_0 = C_1 = C_2 = C_3 = 1/2$. The number of replications is 1,000.

The volatility process $\sigma_t$ follows the same specifications as in Section~B.1. The intercept is set to zero. The simulation design focuses on the case without an intercept in order to isolate the performance of the joint testing procedure from issues arising in predictive regressions with a nonzero intercept. While it is, in principle, possible to combine the Wald-type test with the first-differencing approach to accommodate a nonzero intercept, such an extension introduces an additional robustness--efficiency trade-off, as first differencing generally reduces power, as discussed in Section~A.2.

Tables~\ref{tab5} and~\ref{tab6} report the results for the Wald test in Section A.1 and the IVX-Wald test of \citet{KMS2015} under MA(1) and MA(3) innovations, respectively. As in the tests for individual parameters in Sections~B.1 and~B.2, our method exhibits accurate size control with good power properties, whereas the IVX method suffers from severe size distortions when volatility exhibits structural breaks or regime switching.

\newpage

\begin{table}[h!]
\begin{center}
\scriptsize
\caption{Size and Power for Discrete-Time Models with MA(1) Innovations (Joint Tests)\label{tab5}}
\begin{tabularx}{0.84\textwidth}{ccccccccccc} \toprule
&&\multicolumn{3}{c}{$\bar{\kappa}=0$}&\multicolumn{3}{c}{$\bar{\kappa}=50$}&\multicolumn{3}{c}{$\bar{\kappa}=100$}\\
\cmidrule(r){3-5}\cmidrule(r){6-8}\cmidrule(r){9-11}
T&&20&50&100&20&50&100&20&50&100\\\hline
\multicolumn{11}{c}{Constant Volatility}\\\hline
$\beta=0$&Wald&6.6&8.3&5.5&6.4&4.6&5.4&5.6&5.4&5.6\\
&IVX&7.6&7.0&6.2&4.8&6.2&5.6&6.0&4.8&5.4\\
$\beta=0.05$&Wald&98.3&99.6&99.8&44.7&100.0&100.0&25.8&91.9&100.0\\
&IVX&100.0&100.0&100.0&64.5&100.0&100.0&38.4&99.2&100.0\\
$\beta=0.1$&Wald&99.4&99.7&100.0&97.4&100.0&100.0&76.4&100.0&100.0\\
&IVX&100.0&100.0&100.0&99.8&100.0&100.0&93.2&100.0&100.0\\
$\beta=0.15$&Wald&99.6&99.7&100.0&100.0&100.0&100.0&98.6&100.0&100.0\\
&IVX&100.0&100.0&100.0&100.0&100.0&100.0&99.9&100.0&100.0\\
$\beta=0.2$&Wald&99.6&99.8&100.0&100.0&100.0&100.0&100.0&100.0&100.0\\
&IVX&100.0&100.0&100.0&100.0&100.0&100.0&100.0&100.0&100.0\\
\hline
\multicolumn{11}{c}{Structural Breaks in Volatility}\\\hline
$\beta=0$&Wald&7.5&8.6&9.0&6.8&5.5&5.8&6.2&5.5&4.7\\
&IVX&26.9&31.8&28.2&38.6&38.9&41.7&40.0&36.7&39.3\\
$\beta=0.05$&Wald&97.5&99.7&100.0&30.5&97.4&100.0&17.2&74.5&100.0\\
&IVX&99.7&100.0&100.0&69.3&99.7&100.0&56.4&93.9&100.0\\
$\beta=0.1$&Wald&99.2&99.8&100.0&84.4&100.0&100.0&56.0&100.0&100.0\\
&IVX&100.0&100.0&100.0&96.1&100.0&100.0&84.2&100.0&100.0\\
$\beta=0.15$&Wald&99.4&99.8&100.0&98.9&100.0&100.0&87.2&100.0&100.0\\
&IVX&100.0&100.0&100.0&100.0&100.0&100.0&97.3&100.0&100.0\\
$\beta=0.2$&Wald&99.7&99.9&100.0&100.0&100.0&100.0&99.4&100.0&100.0\\
&IVX&100.0&100.0&100.0&100.0&100.0&100.0&99.8&100.0&100.0\\
\hline
\multicolumn{11}{c}{Regime Swithing in Volatility}\\\hline
$\beta=0$&Wald&9.4&9.8&7.8&6.2&5.3&5.2&5.6&4.7&5.5\\
&IVX&12.3&12.2&10.8&15.6&16.7&15.4&17.1&16.8&19.0\\
$\beta=0.05$&Wald&98.0&99.7&100.0&39.9&99.1&100.0&24.8&85.7&100.0\\
&IVX&99.9&100.0&100.0&68.5&99.9&100.0&44.7&97.8&100.0\\
$\beta=0.1$&Wald&99.4&99.9&100.0&93.4&100.0&100.0&67.9&100.0&100.0\\
&IVX&100.0&100.0&100.0&99.1&100.0&100.0&89.3&100.0&100.0\\
$\beta=0.15$&Wald&99.7&99.9&100.0&100.0&100.0&100.0&96.5&100.0&100.0\\
&IVX&100.0&100.0&100.0&100.0&100.0&100.0&99.1&100.0&100.0\\
$\beta=0.2$&Wald&99.7&99.9&100.0&100.0&100.0&100.0&99.9&100.0&100.0\\
&IVX&100.0&100.0&100.0&100.0&100.0&100.0&100.0&100.0&100.0\\
 \bottomrule\smallskip
\end{tabularx}
\end{center}
\scriptsize{
This table reports the rejection rates of the Wald test in Section A.1 and the IVX-Wald test of \citet{KMS2015}, based on simulated data under $\beta \in \{0, 0.05, 0.1, 0.15, 0.2\}$  (joint tests). The data are generated from the discrete-time model \eqref{MC4}--\eqref{MC5} with persistence parameter $\bar{\kappa} \in \{0,50,100\}$ and MA(1) innovations. The model is sampled at a monthly frequency, so that a sample of length $T$ years contains $12T$ observations, with $T \in \{20,50,100\}$. 
}
\end{table}

\newpage

\begin{table}[h!]
\begin{center}
\scriptsize
\caption{Size and Power for Discrete-Time Models with MA(3) Innovations (Joint Tests)\label{tab6}}
\begin{tabularx}{0.84\textwidth}{ccccccccccc} \toprule
&&\multicolumn{3}{c}{$\bar{\kappa}=0$}&\multicolumn{3}{c}{$\bar{\kappa}=50$}&\multicolumn{3}{c}{$\bar{\kappa}=100$}\\
\cmidrule(r){3-5}\cmidrule(r){6-8}\cmidrule(r){9-11}
T&&20&50&100&20&50&100&20&50&100\\\hline
\multicolumn{11}{c}{Constant Volatility}\\\hline
$\beta=0$&Wald&6.8&7.9&6.4&5.9&3.9&4.7&5.7&6&6.2\\
&IVX&8&7.2&6.6&4.9&5.6&5.7&6.1&5.9&5.4\\
$\beta=0.05$&Wald&98.9&99.7&100&67.3&100&100&34.7&99.7&100\\
&IVX&100&100&100&88.1&100&100&49&100&100\\
$\beta=0.1$&Wald&99.3&99.8&100&99.8&100&100&91.4&100&100\\
&IVX&100&100&100&100&100&100&98.3&100&100\\
$\beta=0.15$&Wald&99.4&99.8&100&100&100&100&99.8&100&100\\
&IVX&100&100&100&100&100&100&100&100&100\\
$\beta=0.2$&Wald&99.5&99.8&100&100&100&100&100&100&100\\
&IVX&100&100&100&100&100&100&100&100&100\\
\hline
\multicolumn{11}{c}{Structural Breaks in Volatility}\\\hline
$\beta=0$&Wald&8.2&9.0&8.6&5.8&5.7&5.7&6.6&6.6&5.6\\
&IVX&28.5&32.2&27.9&34.7&38.6&39.9&41.6&37.8&39.0\\
$\beta=0.05$&Wald&98.9&99.6&100.0&44.7&99.9&100.0&23.2&93.7&100.0\\
&IVX&100.0&100.0&100.0&79.3&100.0&100.0&60.3&99.3&100.0\\
$\beta=0.1$&Wald&99.6&99.7&100.0&96.1&100.0&100.0&72.5&100.0&100.0\\
&IVX&100.0&100.0&100.0&99.6&100.0&100.0&91.9&100.0&100.0\\
$\beta=0.15$&Wald&99.6&99.8&100.0&100.0&100.0&100.0&97.0&100.0&100.0\\
&IVX&100.0&100.0&100.0&100.0&100.0&100.0&99.7&100.0&100.0\\
$\beta=0.2$&Wald&99.7&99.8&100.0&100.0&100.0&100.0&99.8&100.0&100.0\\
&IVX&100.0&100.0&100.0&100.0&100.0&100.0&100.0&100.0&100.0\\
\hline
\multicolumn{11}{c}{Regime Swithing in Volatility}\\\hline
$\beta=0$&Wald&10.2&9.3&8.7&5.8&4.4&4.8&5.7&5.8&5.5\\
&IVX&12.8&12.5&11.5&13.2&16.1&16.1&17.1&16.5&19.7\\
$\beta=0.05$&Wald&98.6&99.7&100.0&60.7&100.0&100.0&30.1&98.7&100.0\\
&IVX&100.0&100.0&100.0&86.1&100.0&100.0&51.7&99.7&100.0\\
$\beta=0.1$&Wald&99.5&99.7&100.0&99.6&100.0&100.0&84.2&100.0&100.0\\
&IVX&100.0&100.0&100.0&99.9&100.0&100.0&96.0&100.0&100.0\\
$\beta=0.15$&Wald&99.6&99.8&100.0&100.0&100.0&100.0&99.2&100.0&100.0\\
&IVX&100.0&100.0&100.0&100.0&100.0&100.0&99.9&100.0&100.0\\
$\beta=0.2$&Wald&99.6&99.8&100.0&100.0&100.0&100.0&99.9&100.0&100.0\\
&IVX&100.0&100.0&100.0&100.0&100.0&100.0&100.0&100.0&100.0\\
 \bottomrule\smallskip
\end{tabularx}
\end{center}
\scriptsize{
This table reports the rejection rates of the Wald test in Section A.1 and the IVX-Wald test of \citet{KMS2015}, based on simulated data under $\beta \in \{0, 0.05, 0.1, 0.15, 0.2\}$  (joint tests). The data are generated from the discrete-time model \eqref{MC4}--\eqref{MC5} with persistence parameter $\bar{\kappa} \in \{0,50,100\}$ and MA(3) innovations. The model is sampled at a monthly frequency, so that a sample of length $T$ years contains $12T$ observations, with $T \in \{20,50,100\}$. 
}
\end{table}

\newpage

\section{Additional Empirical Results}

\subsection{Two-Sided Tests with Intercept}

In this subsection, we report empirical results for tests that account for the intercept. We use the same dataset as in Section~6 of the main paper. Table~\ref{ctab1} presents the results. In contrast to Table~9 in the main text, accounting for the intercept may affect the conclusions. In particular, the methods yield different outcomes: (i) the hybrid method with first differencing tends to support no predictability (2 and 1 rejections out of 18 tests for $\tau(\check{\beta}_o)$ and $\tau(\check{\beta}_e)$, respectively), and (ii) the IVX method rejects the null hypothesis more frequently than the hybrid method (7 rejections out of 18 tests). These mixed results may reflect the lower power of the hybrid method with first differencing or size distortions of the IVX method. As shown in Section~B.2, the IVX method exhibits size distortions under structural breaks in volatility, even for two-sided tests, whereas the hybrid method exhibits good power properties, albeit with reduced power, especially when $\beta$ is close to zero.

\begin{table}[h!]
\begin{center}
\footnotesize
\caption{Empirical Results on Stock Return Predictability (Two-Sided Tests)\label{ctab1}}

\begin{tabularx}{0.74\textwidth}{lllll} \toprule
Series &Frequency&$\tau(\check{\beta}_o)$	&$\tau(\check{\beta}_e)$ &	IVX\\
\hline
\multicolumn{5}{l}{Panel A: D/P as predictor for the period of 1927-2011}\\\hline
CRSP&Monthly&0.51&0.51&1.95$^{*}$\\
&Quarterly&0.71&0.67&2.52$^{*}$\\
&Yearly&0.91&0.73&2.55$^{*}$\\
S\&P500&Monthly&0.68&0.68&2.01\\
&Quarterly&0.77&0.74&2.73$^{**}$\\
&Yearly&0.68&0.86&1.93\\
\hline
\multicolumn{5}{l}{Panel B: D/P as predictor for the period of 1927-2011 with jumps removed}\\\hline
CRSP&Monthly&-0.16&0.15&2.11$^{*}$\\
&Quarterly&0.22&-0.27&2.79$^{**}$\\
&Yearly&0.57&0.45&1.42\\
S\&P500&Monthly&1.04&1.05&2.10\\
&Quarterly&2.17$^{*}$&2.21$^{*}$&2.47$^{*}$\\
&Yearly&-0.22&-0.01&1.77\\
\hline
\multicolumn{5}{l}{Panel C: E/P as predictor for the period of 1950-2011}\\\hline
S\&P500&Monthly&0.35&-0.36&0.98\\
&Quarterly&1.65$^{*}$&1.64&1.10\\
&Yearly&1.14&1.21&1.17\\
\hline
\multicolumn{5}{l}{Panel D: E/P as predictor for the period of 1950-2011 with jumps removed}\\\hline
S\&P500&Monthly&0.88&0.88&0.76\\
&Quarterly&-0.14&0.08&0.93\\
&Yearly&0.72&0.50&1.08\\
 \bottomrule\smallskip
\end{tabularx}
\end{center}
\scriptsize{Results for two-sided tests of return predictability for the NYSE/AMEX value-weighted index (CRSP) and the S\&P~500 using three intercept-robust tests, \(\tau(\check{\beta}_o)\), \(\tau(\check{\beta}_e)\), and the IVX method, across different regression frequencies. Panels~A--B use the dividend--price ratio (D/P), while Panels~C--D use the earnings--price ratio (E/P) as predictors. Statistical significance at the 5\% and 1\% levels is denoted by ``$^{*}$'' and ``$^{**}$'', respectively.}
\end{table}

\clearpage

\subsection{Joint Predictability: Dividend--Price and Earnings--Price Ratios}

We now consider the joint predictability of the dividend--price ratio (D/P) and the earnings--price ratio (E/P) for stock returns. We use S\&P~500 returns, as both D/P and E/P are readily available for this index. We implement two tests: the Wald-type test in Section~A.1 and the IVX-Wald test of \citet{KMS2015}, and test the null hypothesis of no predictability jointly. Table~\ref{ctab2} presents the results for monthly, quarterly, and yearly S\&P~500 series using both the earnings--price and dividend--price ratios over the period 1950--2011. Recursive demeaning is employed as in \citet{CJP2016}. In all cases, comparing with the corresponding $\chi^2$ critical values at the 1\% significance level, we fail to reject the null hypothesis of no predictability.

\begin{table}[h!]
\begin{center}
\footnotesize
\caption{Empirical Results on Stock Return Predictability (Joint Tests)\label{ctab2}}
\begin{tabularx}{0.50\textwidth}{lllll} \toprule
&Wald&IVX&Wald&IVX\\\hline
&\multicolumn{2}{l}{With jumps}&\multicolumn{2}{l}{With jumps removed}\\\hline
Monthly	&1.47&	1.43&	0.72&	1.13\\
Quarterly&	0.54&	2.47&	1.40&	0.88\\
Yearly&	2.02&	2.35&	1.95&	0.42\\
%Monthly&1.47&0.41&1.26&1.43&0.03&0.37\\
%Quarterly&0.54&0.44&0.12&2.47&0.33&1.11\\
%Yearly&2.02&1.56&0.34&2.35&1.95&0.66\\\hline
%\multicolumn{7}{l}{Panel B: with jumps removed}\\\hline
%Monthly&0.72&0.15&0.58&1.13&0.34&0.76\\
%Quarterly&1.40&0.03&0.82&0.88&0.86&0.13\\
%Yearly&1.95&1.76&0.00&0.42&0.41&0.14\\
 \bottomrule\smallskip
\end{tabularx}
\end{center}
\scriptsize{Results for joint tests of return predictability for the S\&P~500 using the Wald test in Section~A.1, and the IVX-Wald test of \citet{KMS2015}, across different regression frequencies. The two predictors are the dividend--price ratio and the earnings--price ratio.}
\end{table}

%\bibliographystyle{agsm} % or try abbrvnat or unsrtnat  or apalike
%\bibliography{Predictive}

\end{document}